\documentclass[10pt]{article}
\usepackage{amsmath,hyperref}
\usepackage{amssymb,amsthm}
\usepackage{cite}
\bibliographystyle{ieeetr}
\usepackage{amsfonts}
\usepackage{authblk}
\usepackage{graphicx}
\graphicspath{{figs/}}
\usepackage[left=2.5cm,right=2.5cm,top=0.5cm,bottom=1.3cm,includeheadfoot]{geometry}
\usepackage{indentfirst}
\usepackage{color}
\usepackage[shortlabels]{enumitem}
\usepackage[dvipsnames]{xcolor}
\usepackage[font=footnotesize]{caption} %

\hypersetup{
    colorlinks=true,
    linkcolor=blue,
    citecolor=blue,  
    urlcolor=blue,
}

\newcommand{\ham}{\mathcal{H}}
\newcommand{\diff}{\mathcal{D}}

\newcommand{\shift}{{N^x}}
\newcommand{\lapse}{N}
\newcommand{\chif}{G}

\newcommand{\erad}{E^x{}}
\newcommand{\ephi}{E^\varphi{}}
\newcommand{\krad}{K_x}
\newcommand{\kang}{K_\varphi}

\newcommand{\lbar}{\mbox{$\lambda$\hspace*{-7.4pt}\rotatebox[origin=c]{21}{${}^-$}\hspace*{-1.5pt}}{}}
\newcommand{\slbar}{\mbox{\scriptsize{$\lambda$\hspace*{-6pt}\tiny{\rotatebox[origin=c]{21}{$^-$}}\hspace*{-1.5pt}}{}}}

\newcounter{mnotecount}%
\newcommand{\mnote}[1]%
{\protect{\stepcounter{mnotecount}}$^{\mbox{\footnotesize $\bullet$\themnotecount}}$ 
\marginpar{%
\raggedright\tiny\em
$\!\!\!\!\!\!\,\bullet$\themnotecount: #1} }

\def\lproof{{s}}
\def\lmin{\mu}

\def\tt{T}
\def\rr{{r}}

\def\rmax{r_{\infty}}

\def\horizon{a}
\def\bhhor{r_H}

\def\signgeod{\varepsilon}

\newtheorem{theorem}{Theorem}[section]
\newtheorem{lemma}[theorem]{Lemma}
\theoremstyle{remark}
\newtheorem{remark}[theorem]{Remark}

\def\lproof{{s}}

\def\signgeod{\varepsilon}

\def\mofr{{m\big(r(z)\big)}}

\def\manifold{{\mathcal{M}}}

\def\chif{G}
\def\signe{\mathrm{sign}}

\begin{document}

\title{Singularity resolution by holonomy corrections: Spherical charged black holes in cosmological backgrounds}

\author[]{Asier Alonso-Bardaji\footnote{ E-mail address: asier.alonso@ehu.eus}}
\author[]{David Brizuela\footnote{ E-mail address: david.brizuela@ehu.eus}}
\author[]{Raül Vera\footnote{ E-mail address: raul.vera@ehu.eus}}
\affil[]{Department of Physics and EHU Quantum Center, University of the Basque Country UPV/EHU, Barrio Sarriena s/n, 48940 Leioa, Spain}
\date{}
\maketitle
\begin{abstract}
We study spherical charged black holes in the presence of a cosmological
constant with corrections motivated by the theory of loop quantum gravity.
The effective theory is constructed at the Hamiltonian level by introducing
certain correction terms under the condition that the modified
constraints form a closed algebra.
The corresponding metric tensor is then carefully constructed ensuring that the
covariance of the theory is respected, that is, in such a way that
different gauge choices on phase space simply correspond to different
charts of the same spacetime solution.
The resulting geometry is characterized by four parameters:
the three usual ones that appear in the general
relativistic limit (describing the mass, the charge,
and the cosmological constant), as well as a polymerization parameter,
which encodes the quantum-gravity corrections.
Contrary to general relativity, where this family of solutions
is generically singular, in this effective model the presence of
the singularity depends on the values of the parameters.
The specific ranges of values that define the family of singularity-free
spacetimes are explicitly found,
and their global structure is analyzed. In particular,
the mass and the cosmological constant
need to be nonnegative to provide a nonsingular geometry, while
there can only be a bounded, relatively small, amount of charge.
These conditions are suited for
any known spherical astrophysical black hole in the de Sitter
cosmological background, and thus this model provides a globally regular description for them.
\end{abstract}


\section{Introduction}

Effective models are expected to be very useful in extracting
physics from any theory of quantum gravity in a semiclassical regime.
There are different approaches to construct such models,
and their goal is to encode the main effects predicted
by the theory under consideration.
For loop quantum gravity and, more specifically,
for its symmetry-reduced version usually named loop quantum cosmology,
effective models have shown an excellent performance to describe the dynamics
of the corresponding system as compared to the exact quantum dynamics
\cite{Ashtekar:2011ni,Agullo:2016tjh,Rovelli:2013zaa}.
In this context, such effective models are usually constructed
by including the so-called inverse-triad and holonomy corrections
in the Hamiltonian constraint of general relativity (GR). Even if these two types
of corrections are motivated by the quantization performed in the
full theory, in the homogeneous models it has been checked
that holonomy corrections by themselves are able to provide
the resolution of the singularity,
and thus effective theories are usually reduced to describe
this type of corrections.

Hence,
a natural step is to extend holonomy corrections to spherically 
symmetric cases, in order to determine whether singularity resolution is a 
strong prediction of the theory, or simply a by-product produced by an
excessive symmetry assumption. In particular, there is a rich literature about
the study of effective models with holonomy corrections for spherical
vacuum (Schwarzschild black hole)
\cite{BenAchour:2018khr,Gambini:2020nsf,Kelly:2020uwj,Ashtekar:2018lag,Ashtekar:2018cay,Ashtekar:2020ckv,Bodendorfer:2019cyv,Bodendorfer:2019nvy,Modesto:2008im} and dynamical scenarios for collapsing fields \cite{Hossenfelder:2009fc,Kelly:2020lec,Gambini:2021uzf,Husain:2021ojz,Husain:2022gwp}, which have led to a variety of predictions.
However, intermediate cases, with matter that lacks local degrees of freedom \cite{Tibrewala:2012xb,Gambini:2014qta}, are often overlooked. In this work, we extend the vacuum study presented in Refs. \cite{Alonso-Bardaji:2021yls,Alonso-Bardaji:2022ear} to incorporate a Maxwell field and a cosmological constant.

It is worth noting that the nonhomogeneity of this new scenario introduces significant conceptual issues related to the covariance of effective models \cite{Alonso-Bardaji:2020rxb,Alonso-Bardaji:2021tvy,Bojowald:2015zha,Bojowald:2018xxu}. By following the approach in Ref. \cite{Bojowald:2015zha}, we were able to find a modified Hamiltonian \cite{Alonso-Bardaji:2021tvy} that has an unambiguous geometric description \cite{Alonso-Bardaji:2021yls,Alonso-Bardaji:2022ear}. The key feature of our model is that, unlike previous attempts in the literature, it provides a covariant framework for studying effective modifications. Holonomy corrections are introduced at the Hamiltonian level to generate a first-class algebra, with the structure function of that algebra having the correct transformation properties for the model to be embeddable in a four-dimensional spacetime manifold \cite{Teitelboim73,Pons:1996av}. As a result, different gauge choices in phase space simply correspond
to different coordinate systems in the same spacetime.

In contrast to the vacuum case, we find that these corrections are not sufficient
to generically resolve the singularity for all values of the parameters.
We study in detail the ranges of parameters that lead to nonsingular
spacetimes and analyze their global structure.
In particular, the model requires a nonnegative value of the mass and cosmological constant,
while the charge needs to be below a certain maximum threshold.
Interestingly, the set of parameters that would describe a realistic
astrophysical black hole lay in this category.

The article is organized as follows. First, we derive the effective model from the GR Hamiltonian in Sec.~\ref{sec.gr}. Then, in Sec.~\ref{sec.spacetime}, we construct the corresponding metric and provide the solution for different gauge choices in phase space, which are related by coordinate transformations. Next, we study the global behavior of the solution, which involves analyzing a fourth-order polynomial with four free parameters. In Sec.~\ref{sec_study_singularity}, we focus on identifying the
ranges of the parameters that lead to nonsingular solutions. In Sec.~\ref{sec:global}, we
provide the main elements to analyze the global structure of such solutions,
and to construct their conformal diagrams. The main results of the paper
are then discussed and summarized in Sec. \ref{sec_conclusions}.
In Appendix.~\ref{sec_app_conformal} the Penrose diagrams for the complete
family of nonsingular spacetimes are displayed. Finally,
Appendixes.~\ref{app.covariance}--\ref{app.tortoise} provide some technical
details to clarify and extend certain points presented in the main text.

\section{Canonical formulation of the model}\label{sec.gr}

This section is divided into two subsections. In Sec. \ref{sec_classical}
we present the Hamiltonian for a spherically symmetric spacetime
coupled to an electromagnetic field with a cosmological constant
in GR. In Sec. \ref{sec_effective}, we construct an effective
theory by performing a canonical transformation followed
by a linear combination of the constraints. 
\subsection{The classical framework}\label{sec_classical}

In terms of the Ashtekar-Barbero variables, the diffeomorphism and Hamiltonian
constraints of GR are given in smeared form  by
$D[f]:=\int f\diff dx$ and $\widetilde{H}[f]:=\int f\widetilde{\ham} dx$,
with
\begin{subequations}
\begin{align}
\!\diff &=
     -({\widetilde{E}^x})'{\widetilde{K}_x} +{\widetilde{E}^\varphi} ({\widetilde{K}_\varphi})' +\diff_m,\\
     \!\widetilde{\mathcal{H}}
       &= -\frac{{\widetilde{E}^\varphi}}{2\sqrt{{\widetilde{E}^x}}}\left(1+\widetilde{K}_\varphi^2\right)  -2\sqrt{{\widetilde{E}^x}}{\widetilde{K}_x}{\widetilde{K}_\varphi}
       +\frac{1}{2}\Bigg(\frac{\widetilde{E}^x{}'}{2\widetilde{E}^\varphi}\left(\sqrt{\widetilde{E}^x}\right)'
              +\sqrt{\widetilde{E}^x}\left(\frac{\widetilde{E}^x{}'}{\widetilde{E}^\varphi}\right)'\Bigg) +\ham_m,
\end{align}
\end{subequations}
respectively.
In these expressions the prime stands for a derivative with respect to
$x$, $\diff_m$ and $\ham_m$ denote the matter contributions, and the
variable $\widetilde E^x$ is assumed to be nonnegative
$\widetilde E^x\geq 0$. The symplectic structure is canonical,
$$\{\widetilde{K}_x(x_1),\widetilde{E}^x(x_2)\}=\{\widetilde{K}_\varphi(x_1),\widetilde{E}^\varphi(x_2)\}=\delta(x_1-x_2),$$
where we use the standard notation $\{\cdot,\cdot\}$ for the Poisson brackets.

This paper considers gravity weakly coupled to two simple matter
types: a cosmological constant and a Maxwell field.  The
cosmological constant $\Lambda$ can be understood as a
nondynamical scalar field, and it only contributes with a term of the
form ${\cal H}_\Lambda=\sqrt{\widetilde{E}^x}\widetilde{E}^\varphi\Lambda/2$ to
the matter Hamiltonian constraint.
The electromagnetic field is described in terms of the vector potential
 $A_\mu$. Because of the spherical symmetry, it has only two nontrivial
components: $A_0$ and $A_x$. The momentum $p^0$ conjugate
to $A_0$ vanishes. Therefore $p^0=0$ is a primary constraint and thus
$A_0$ is nondynamical. The component $A_x$ and its conjugate momentum,
denoted as $p^x$, obey %
\begin{equation}
 \{A_x(x_1),p^x(x_2)\}=\delta(x_1-x_2).
\end{equation}
The contribution of the electromagnetic field to the diffeomorphism and
Hamiltonian constraints are respectively given by
(see, e.g., Refs. \cite{Tibrewala:2012xb,Gambini:2014qta})
\begin{align}
 {\cal D}_{\rm em} &= A_x (p^{x})^\prime,\\
 {\cal H}_{\rm em} &= \frac{E^{\varphi}(p^x)^2}{2(\widetilde{E}^x)^{3/2}}-(p^x)'A_0.
 \end{align}
Therefore, the matter Hamiltonian for our system reads
${\cal H}_{m}={\cal H}_{\Lambda}+{\cal H}_{\rm em}$, while the matter part of the diffeomorphism constraint is just given by the Maxwell field ${\cal D}_m={\cal D}_{\rm em}$.
It is easy to see that the conservation of the primary constraint,
$0=\dot{p}^0=\{p^0,\widetilde H[N]+D[N^x]\}$, leads to the condition
\begin{equation}
{\cal G}:= (p^x)^\prime=0,
\end{equation}
which is the electromagnetic Gauss law and defines the first-class constraint
$G_{\rm em}[f]:=\int f {\cal G} dx$.
There are no further constraints in the system,
and the total Hamiltonian is thus defined as the linear combination
$\widetilde{H}_{T}:=\widetilde{H}[\lapse]+D[\shift]+G_{\rm em}[\beta+A_0 \lapse - A_x \shift]$,
where the Lagrange multipliers $\lapse$ and $\shift$
  correspond to the lapse and shift of the usual 3+1 decomposition in GR,
  and the smearing function in the Gauss constraint has been conveniently chosen.

Since there are three couples of conjugate
variables and three first-class constraints, there are no
propagating degrees of freedom in this model.
In fact, the Gauss constraint is rather trivial, and it is possible to fix
the gauge for the matter variables without any loss of generality.
More precisely,
the equations of motion for the couple $(A_x,p^x)$ read
\begin{align}\label{dotAx}
  \dot{A}_x=& \frac{N \widetilde{E}^\varphi p^x}{(\widetilde{E}^x)^{3/2}} %
              -\beta^\prime,\\\label{dotpx}
 \dot{p}^x=&N^x p^x{}^\prime.
\end{align}
Since the time derivative of $p^x$ is proportional to the Gauss constraint,
the second equation implies that $p^x$ is conserved.
This observable defines the constant charge of the spacetime,
which we denote by $Q:=p^x$. Hence,
at this point
one can partially fix the gauge by strongly enforcing the Gauss constraint
and by choosing any gauge-fixing condition of the form
$A_x=\Phi(p^x,\widetilde{E}^x,\widetilde{E}^\varphi,\widetilde{K}_x,\widetilde{K}_\varphi)$. The conservation of this condition, $\dot{A}_x=\dot{\Phi}$,
will then provide the form of the Lagrange multiplier $\beta$
through \eqref{dotAx}. However,
since neither $A_x$ nor $\beta$ appear in other equations of motion
besides \eqref{dotAx}, their
specific form will not modify the evolution of the geometric
variables $(\widetilde{E}^x,\widetilde{E}^\varphi,\widetilde{K}_x,\widetilde{K}_\varphi)$, which are thus insensitive to the chosen gauge $\Phi$.

In this way, for the spherical Einstein-Maxwell
model with a cosmological constant $\Lambda$,
one gets an exactly vanishing matter diffeomorphism constraint
${\cal D}_m=0$, while the matter Hamiltonian takes the form
\begin{equation}
{\cal H}_m=
\frac{1}{2}\sqrt{\widetilde{E}^x}\widetilde{E}^\varphi
\left(\Lambda+\left(\frac{Q}{\widetilde{E}^x}\right)^2\right),
\end{equation}
in terms of
the two constant parameters $Q$ and $\Lambda$. The constraints obey the usual
hypersurface deformation algebra,
\begin{subequations}\label{hdaclass}
  \begin{align}
    \big\{D[f_1],D[f_2]\big\}&=D\big[f_1f_2'-f_1'f_2\big],\\
    \big\{D[f_1],\widetilde{H}[f_2]\big\}&=\widetilde{H}\big[f_1f_2'\big],\\
    \label{hhbrackets}
    \big\{\widetilde{H}[f_1],\widetilde{H}[f_2]\big\}&=D\left[
    {\widetilde{E}^x(\widetilde{E}^\varphi)^{-2}}(f_1f_2'-f_1'f_2)\right],
  \end{align}
\end{subequations}
and the total Hamiltonian simplifies to $\widetilde{H}_T=\widetilde{H}[N]+D[N^x]$.
The equations of motion for the four remaining variables
$(\widetilde{E}^x,\widetilde{E}^\varphi,\widetilde{K}_x,\widetilde{K}_\varphi)$
can be readily obtained by computing their Poisson brackets with this Hamiltonian.

\subsection{The effective model}\label{sec_effective}

In Ref. \cite{Alonso-Bardaji:2021tvy}, an effective Hamiltonian for spherically symmetric gravity
coupled to a scalar matter field was presented, which included holonomy corrections
respecting the first-class nature of the algebra. This effective constraint
was shown to be related to the GR Hamiltonian
through a canonical transformation followed by a linear combination of the constraints. 
Here we will apply the
same method to construct an effective Hamiltonian for the
Einstein-Maxwell-de Sitter model. More precisely, we first
perform the following transformation for the geometric degrees of freedom,
\begin{align}\label{cantransf}
     \widetilde{E}^x= \erad\,,\quad
     \widetilde{K}_x= \krad\,,\quad
     \widetilde{E}^\varphi= \frac{\ephi}{\cos(\lambda \kang)}\,,\quad \widetilde{K}_\varphi= \frac{\sin(\lambda \kang)}{\lambda}\,.
\end{align}
This transformation is canonical and thus leaves the symplectic structure invariant
$\{\krad(x_1),\erad(x_2)\}=\{\kang(x_1),\ephi(x_2)\}=\delta(x_1-x_2)$. We then implement a regularization
multiplying the Hamiltonian constraint by $\cos(\lambda K_\varphi)$,
which removes the inverse of this function from the Hamiltonian.
Despite this regularization is a choice, and introduces changes in the dynamics in general,
it is not entirely arbitrary;
the structure functions of the deformed constraint algebra
need to have the correct gauge transformation properties
in order to be able to provide a covariant spacetime representation of the model.
Below we will explain this in detail.
In addition, this regularization still recovers
GR in the limit $\lambda\rightarrow 0$.

The so-constructed Hamiltonian constraint does not obey, along with $\mathcal{D}$, the canonical
form of the algebra. Therefore,
we define the linear combination
\begin{align}
 {\cal H}:=&\,\left(\widetilde{\ham}%
 +\lambda\, {\sin(\lambda \kang)} \frac{\sqrt{\erad}\erad'}{2\ephi^2}{\cal D}\right)
  \frac{\cos(\lambda \kang)}{\sqrt{1+\lambda^2}},
   \label{H_normal}
\end{align}
so that 
the two constraints of the modified model are explicitly given by
\begin{subequations}\label{polham}
\begin{align}
\diff =& -\erad'\krad+\ephi\kang'%
,\label{D}\\
\ham =&  \frac{1}{\sqrt{1+\lambda^2}}\Bigg[-\frac{{\ephi}}{2\sqrt{{\erad}}}\left(1+\frac{\sin^2{{(\lambda {\kang})}}}{{{\lambda^2}}}\right)  -\sqrt{{\erad}}{\krad}\frac{\sin{(2\lambda {\kang})}}{\lambda}\left(1+\left(\frac{\lambda {\erad}'}{2{\ephi}}\right)^{\!2}\right) \nonumber\\
       &\quad+\frac{\cos^2{(\lambda {\kang})}}{2}\Bigg(\frac{\erad'}{2\ephi}\left(\sqrt{\erad}\right)'
              +\sqrt{\erad}\left(\frac{\erad'}{\ephi}\right)'\Bigg)
+ \frac{1}{2}\sqrt{E^x}E^\varphi
\left(\Lambda+\left(\frac{Q}{E^x}\right)^2\right)\Bigg],\label{H}
\end{align}
\end{subequations}
which satisfy the canonical Poisson algebra,
\begin{subequations}\label{algebra}
\begin{align}
\{D[f_1], D[f_2] \} &= D[f_1f_2'-f_1'f_2],\\
  \{D[f_1], {H}[f_2] \} &= {H}[f_1f_2'],\\
   \{{H}[f_1], {H}[f_2] \} &= D[F(f_1 f_2'-f_2 f_1')]
\end{align}
\end{subequations}
with the structure function
\begin{align}\label{Fdef}
F:=  
    \frac{\cos^2(\lambda {\kang})}{1+\lambda^2}\left(1+\Big(\frac{\lambda {\erad}'}{2{{\ephi}}}\Big)^{2}\right)\frac{\erad}{(E^{\varphi})^2}.
\end{align}
It can be easily checked %
that the constraints and their Poisson
algebra reproduce the corresponding structures of GR in the limit $\lambda\rightarrow 0$.
Therefore, in this context, $\lambda$ is interpreted to be a small positive
parameter that encodes the quantum effects and will modify the classical dynamics. 
However, because of the way in which it appears in the expressions, it turns out to be convenient
to define instead the parameter
\begin{align}
\lbar:=\frac{\lambda^2}{1+\lambda^2},
\end{align}
which %
takes values %
in the range $(0,1)$,
and the limit $\lbar\to 0$ corresponds to GR.
Let us also define %
\begin{align}\label{eq.masspol}
  m:=\frac{\sqrt{\erad}}{2}\left(1+\frac{\sin^2(\lambda {{\kang}})}{\lambda^2}-\left(\frac{{\erad}'}{2{{\ephi}}}\right)^{2}\cos^2(\lambda {{\kang}})\right) ,
\end{align}
in terms of which the structure function \eqref{Fdef} can be reexpressed as
\begin{equation}\label{Fdefm}
F =\left(1-\frac{2\lbar m}{\sqrt{\erad}}\right)\frac{\erad}{(E^{\varphi})^2}.
\end{equation}
Since $\erad\geq 0$, it is straightforward to see from the definition \eqref{Fdef}
that the structure function $F$ 
is nonnegative and vanishes only when $\cos^2(\lambda\kang)$
or $\erad$ do.

Let us note that formally \eqref{Fdef} and \eqref{Fdefm} are
the same expressions for the structure function $F$ as in the vacuum case (see Eqs.~(8) and (10) in Ref. \cite{Alonso-Bardaji:2022ear}).
In vacuum  $m$ is a Dirac observable (thus constant),
and therefore $\sqrt{\erad}$ is globally bounded from below by $2\lbar m$,
which is positive if $m>0$.
In that case the points with $E^x=0$, which correspond
to the singularity in GR, are excluded from the dynamics,
and the singularity is thus resolved.
In the present model, however, $m$ is point dependent, and the analysis
is not so straightforward.
In particular, to check whether points
with $\erad=0$ are excluded or not from the dynamics,
we need to analyze the existence of positive roots
of the structure function, that is, whether $\sqrt{\erad}=2\lbar m$
holds for some positive value of $\sqrt{\erad}$.
As we will find below, this will strongly depend on the specific values of
the different parameters of the model.

\section{The covariant spacetime representation}\label{sec.spacetime}

In this section we provide the covariant spacetime representation of our model.
More precisely, in Sec. \ref{sec_metric_tensor}, we construct the metric tensor in terms of the phase-space variables and show how different gauge choices provide different coordinate systems of the same spacetime.
In Sec. \ref{sec_charts}, we introduce three different coordinate
systems, which will be useful for later analysis, and a coordinate system
that describes the near-horizon geometries. Section \ref{sec_curvature} presents the curvature invariants of
the spacetime and the location of potential singularities.

\subsection{The metric tensor}\label{sec_metric_tensor}

To provide a meaningful spacetime representation of the model, we need
to construct the corresponding \emph{covariant} metric.
Let us stress that by \emph{covariant} we mean that %
the gauge transformations on the phase space
correspond to infinitesimal coordinate transformations in the spacetime manifold.
More precisely, the gauge transformation of any phase-space function $f$ is
generated by the first-class constraints as $\delta_\epsilon f= \{f,H[\epsilon^0]+D^x[\epsilon^x]\}$,
with gauge parameters $\epsilon^0$ and $\epsilon^x$.
This should coincide with the transformation followed by $f$,
as a function on the manifold, under
an infinitesimal coordinate transformation
$(t,x)\rightarrow (t+\xi^t,x+\xi^x)$. The two couples $(\xi^t,\xi^x)$
and $(\epsilon^0,\epsilon^x)$ are components of the same vector in different
basis, and are related by $\epsilon^0=N\xi^t$ and $\epsilon^x=\xi^x+\xi^t N^x$.

In particular, in Ref. \cite{Alonso-Bardaji:2022ear}, we explicitly showed that the metric
 \begin{align*}%
     ds^2=-N^2dt^2 + q_{xx}\big(dx+N^xdt\big)^2+\erad d\Omega^2
 \end{align*}
 with $q_{xx}=1/F$, and where $d\Omega^2$ is the metric of the unit sphere,
 obeys the correct transformation properties as long as
 the gauge transformation of $1/F$ coincides with the change of $q_{xx}$
 under a coordinate transformation.
 On the one hand, it is easy to check that, under a general coordinate transformation, the component $q_{xx}$ changes as
 \begin{equation}\label{coordtransf}
  \delta q_{xx} = \dot{q}_{xx} \xi^t+q_{xx}' \xi^x+2q_{xx} (N^x  \xi^t{}'+\xi^{x\prime}).
 \end{equation}
On the other hand, making use of the equations of motion for the
present model, the gauge transformation of $1/F$ can be written as
(see Appendix~\ref{app.covariance} for the details)
\begin{equation}\label{eq.gaugetransf}
\delta_\epsilon (1/F)=\{1/F,H[\epsilon^0]+D[\epsilon^x]\}=
\frac{\epsilon^0}{N}(1/F){\dot{}}\, + \left(\epsilon^x-\frac{N^x}{N}\epsilon^0\right) (1/F)'  +2\left(\epsilon^{x\prime}-\frac{\epsilon^0}{N} N^x{}'\right)(1/F).
\end{equation}
Using the relations
$\epsilon^0=N\xi^t$ and $\epsilon^x=\xi^x+\xi^t N^x$ commented above, it is
straightforward to check that the above two transformations coincide.
Therefore, we conclude that the metric tensor
  \begin{align}\label{eq.metric}
     ds^2=-N^2dt^2 +\left(1-\frac{2\lbar m}{\sqrt{\erad}}\right)^{-1}\frac{(E^{\varphi})^2}{\erad} \big(dx+N^xdt\big)^2+\erad d\Omega^2,
  \end{align}
  with $m$ as defined in \eqref{eq.masspol}, provides a covariant representation of our model in the spacetime.
  The solution
of the equations of motion for a given choice of gauge will simply provide the expression
of the metric tensor in certain coordinates.

To follow the standard notation,
since the scalar $\erad$ as a function on the manifold
is the square of the area-radius function,
we will use $\sqrt{\erad}=:r$ when convenient.
By construction, the area-radius function $r$ can take values on
the positive real line. However,
contrary to GR, the problem under consideration will, in general, restrict the possible
values that $r$ can attain. As already
commented above,
the fundamental reason for the existence
of such ranges is that the structure function $F$
is nonnegative, and that forbids ranges of the function $r$
for which $F$ would formally be negative.
In terms of each spacetime solution, this fact will appear as
minimum or maximum attainable values for the scalar $r$,
to which we will refer as ``critical values'' in short.
The function $m$ will turn out to be a function of $r$ only [see Eq.~\eqref{eq.mass} below], and
the zeros of the structure function, $r=2\lbar m$, will determine at most
three such critical values, which will be named $R$, $r_0$, and $\rmax$, respectively.

  As it will be explicitly shown below, the spacetime under consideration
  will be composed by two kinds of nonoverlapping regions,
  defined for given intervals of $r$ in terms of the sign of the function
  $\chif:=1-2m/r$. On the one hand,
  there will be static regions, similar to the exterior region of Schwarzschild,
  where $\chif>0$ and $r$ can be chosen as a spatial coordinate.
  On the other hand, there will be homogeneous\footnote{We follow the usual convention
      and refer to ``homogeneous'' regions as those
      that admit a foliation by homogeneous spacelike leaves. More precisely, in this
    paper all homogeneous regions will be of Kantowski-Sachs type.} regions, like the
  interior region of  Schwarzschild, where $\chif<0$ and
  hypersurfaces of constant $r$ are timelike.
  As in GR, the boundary between these two kinds of regions will define a horizon at points where $\chif=0$. Depending on the specific values of the mass, charge, and cosmological constant,
  in a given spacetime there might appear several
  horizons, corresponding
  to the usual inner (Cauchy) and outer (black-hole) horizons of the Reissner-Nordstr\"om
  black hole (located at $r=r_I$ and $r=r_H$, respectively) and the cosmological de Sitter horizon (located at $r=r_C$), with $0<r_I\leq r_H\leq r_C<\infty$.
  In GR, the usual structure of the regions is described in terms of ranges of $r$ bounded by the horizons:
there are two static regions in the ranges $(0,r_I)$ and $(r_H,r_C)$, while there are
two homogeneous regions in the ranges $(r_I,r_H)$ and $(r_C,\infty)$.
The critical values defined in the present model
must be located inside a homogeneous region because $\lbar<1$, and thus $2\lbar m< 2 m$.  
Therefore, if the three critical values $R$, $r_0$, and $\rmax$ exist, they will generically split the above structure into two spacetimes with ranges
  $0<r_I<R$ and $r_0<r_H\leq r_C<\rmax$ (see Fig.~\ref{fig_schemer}).
  Depending on the number of horizons and critical values, several different possibilities will arise. We provide that study in detail in Sec.~\ref{sec_study_singularity}.
  
\begin{figure}
 \centering
    \includegraphics[width=0.7\textwidth]{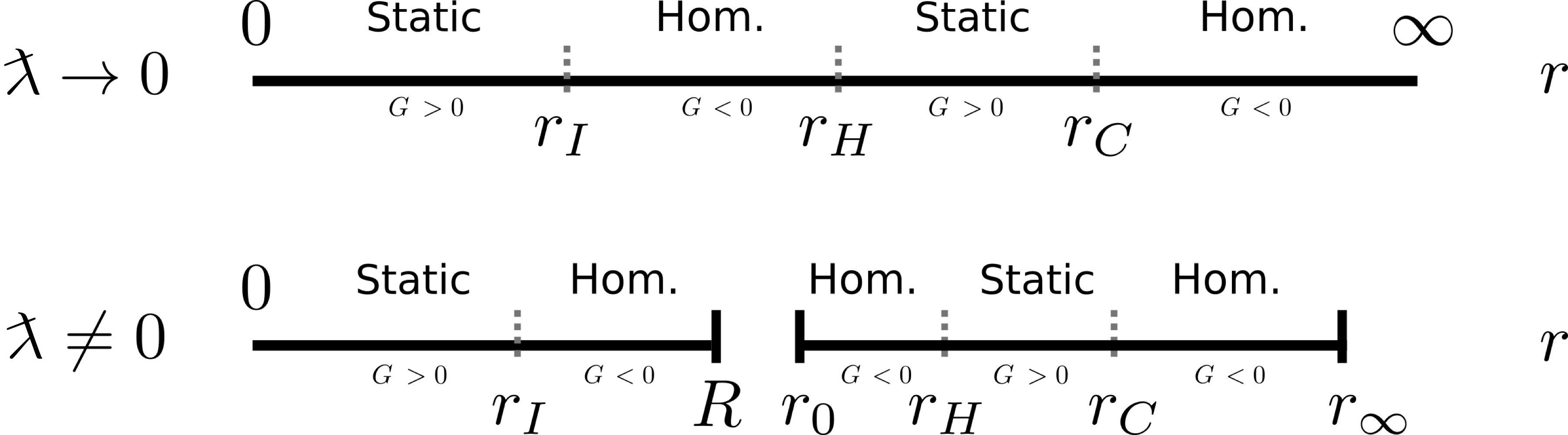}
    \caption{Scheme of the location of the different horizons (if they exist)
      in $r$ in the GR case ($\lbar\to 0$), separating the static and homogeneous
      regions of the spacetime. In our case ($\lbar\neq 0$) the range
      of $r$ in the solution is cut into two intervals, thus providing
      two spacetime solutions for the same values of the parameters:
      one in which the function $r$ is bounded from above at $R$,
      and another in which it is bounded from below at $r_0$ (and possibly
      from above at $\rmax$). All the critical values that bound $r$ appear in the homogeneous regions.}
    \label{fig_schemer}
\end{figure}

\subsection{Different gauges and corresponding charts}\label{sec_charts}
  
In this subsection we introduce four gauge choices.
The first one (in Sec. \ref{sec_static}) will
provide the coordinate system valid in the static regions,
while the second one (in Sec. \ref{sec_homogeneous})
will correspond to homogeneous regions. To show that
the static and homogeneous regions are part of the same spacetime
that are separated by a horizon, in Sec. \ref{sec_covering},
we present a third choice of gauge leading to a domain that
includes all the horizons and thus overlaps with
both the homogeneous and the static regions.
The fourth gauge (in Sec. \ref{sec_nearhorizon}) will describe the limiting
case of the near-horizon geometries.\footnote{As it will be explained below,
these geometries have a fixed value of the area-radius function,
which is given by the position of degenerate horizons, and 
thus they only appear for the values of parameters
that lead to a double degeneration of the form $r_I=r_H$ or
$r_H=r_C$, or to the triple degeneration $r_I=r_H=r_C$.
In the GR limit these correspond to the Bertotti-Robinson
geometry (for which $\Lambda=0, Q\neq 0$, and the degeneration is given
by $r_I=r_H$), and to the Nariai geometry
(for which $\Lambda> 0, Q= 0$, and the degeneration is given
by $r_H=r_C$). The triple degeneration $r_I=r_H=r_C$ is sometimes named
the ultra-extreme case (see, e.g., Ref. \cite{Brill1994}).
}

Different charts
of the spacetime solution will be composed by two coordinates
on the spheres (that we will not specify) plus two
coordinates on the Lorentzian space, orthogonal to the spheres,
corresponding to the pair $(t,x)$.
For each choice of gauge in phase space, which will provide
a different chart, we will conveniently
rename $(t,x)$, now as functions on the manifold.

\subsubsection{Static regions}\label{sec_static}

For the first gauge we choose $\dot \erad=0$,
with $\erad'$ not vanishing everywhere,
and impose $\sin(\lambda K_\varphi)=0$.
As shown in Appendix~\ref{app.static}, the solution of the equations of motion then depends on
  an arbitrary function $r(x):=\sqrt{\erad}(x)$. As a result,
  the gauge is completely fixed by prescribing $r(x)$.
  In Appendix~\ref{app.static} we take a convenient
choice of $r(x)$, for which
the corresponding solution
leads to the diagonal form of the metric
\begin{align}\label{eq.metricstatic}
{ds}^2 &=-\left(1-\frac{2\mofr}{{r(z)}}\right)d\tt^2 +
  \left(1-\frac{2\mofr}{{r(z)}}\right)^{-1}{d{z}^2}
  +{r(z)}^2{d\Omega}^2,
\end{align}
where the pair $(t,x)$ has been renamed as $(\tt,z)$.
In this gauge
the function $m$ \eqref{eq.masspol} takes the form
\begin{align}\label{eq.mass}
  m(r)= M-\frac{Q^2}{2 r}+\frac{\Lambda}{6} r^3,
\end{align}
where $M$ is an integration constant,
and the scalar $r(z)$
is implicitly defined through
\begin{align}\label{eq.rp}
    \left(\frac{dr(z)}{dz}\right)^2 
    = 1-\frac{2\lbar \mofr}{r(z)}.
\end{align}

The domain of these coordinates is given
by $\tt\in\mathbb{R}$, %
while $z$ is limited by $\rr(z)>0$ (by construction)
and the condition $2 \mofr<\rr(z)$. 
Given the values of the parameters $M$, $\Lambda$, $Q$, and $\lbar$,
the interval (or intervals) of $z$  will be thus determined by the zeros
of $r(z)=2 m(r(z))$, %
which will signal the presence of a horizon,
and the zeros of $r(z)=0$.
Observe that $z$ is fixed
up to an arbitrary additive constant, and that
constant can be chosen differently on each interval (if there is more than one).
We will obtain the ranges of $z$ %
and fix that constant conveniently
when we study the global structure
of the solution later, in Sec.~\ref{sec_study_singularity}.

Since $\lbar <1$, once $2 \mofr<\rr(z)$ holds,
we have $2\lbar \mofr<r(z)$.
Therefore the right-hand side of Eq.~\eqref{eq.rp} does not reach zero
at any point of this domain.
As a result, $r(z)$ is a monotonic function of $z$
on the static regions, and therefore $r$
can be taken to be the spatial coordinate
instead of $z$. The metric in the coordinates $(\tt,r)$
is given by \eqref{eq.metapp1} in  Appendix~\ref{app.static},
where we also show
the corresponding choice of gauge in phase space.
The change of coordinates is made explicit in Appendix~\ref{sec:coord_transf}.

\subsubsection{Homogeneous regions}\label{sec_homogeneous}

Next, we consider %
$\erad'=0$ and $\ephi'=0$ with a nonconstant $\erad$.
As shown in Appendix~\ref{app.homogeneous} this sets all spatial derivatives to zero and, moreover, we can choose $\shift=0$ without loss of generality.
As in the previous case
the gauge is conveniently fixed completely by an explicit choice of
$r(t):=\sqrt{\erad}(t)$, which is again given implicitly by \eqref{eq.rp}.
The equations of motion for the remaining variables,
after renaming the pair $(t,x)$ as $(z,\tt)$ (mind the order)
as functions on the manifold,
leads to the diagonal form of the metric
\begin{align}\label{eq.metrichomogeneous}
{ds}^2 &=- 
  \left(\frac{2\mofr}{r(z)}-1\right)^{-1}{dz^2}+\left(\frac{2\mofr}{r(z)}-1\right){d\tt}^2
  +r(z)^2{d\Omega}^2,
\end{align}
where $m(r)$ is given by \eqref{eq.mass}, and the function $r(z)$ satisfies \eqref{eq.rp}.
Note that, for notational convenience, here we are using the same names
for the two coordinates as in the static region above. This is
because below we will provide a third chart that covers
these two regions and, in particular, the coordinate $z$
can be chosen to be the same (restricted to the corresponding
domain) for the three charts. In the GR limit $(\lbar=0)$, it is
straightforward to see that $r=z$ and thus the line elements
\eqref{eq.metricstatic} and \eqref{eq.metrichomogeneous}
correspond to the metric in Schwarzschild coordinates
in their corresponding domains.

The range of the coordinates $(z,T)$ is given by $\tt\in\mathbb{R}$
  while $z$ is constrained by the conditions $r(z)>0$ and $r(z)<2m(r(z))$.
As in the static region, given the values of the parameters $M$, $\Lambda$, $Q$, and $\lbar$,
  the interval (or intervals) of $z$  will be thus determined by the zeros of $r(z)=2 m(r(z))$ and $r(z)=0$.
  The coordinate $z$ can be freely shifted
  by a constant (on each interval, if there is more than one).

  Contrary to the static region, the right-hand side of
  Eq.~\eqref{eq.rp} can vanish in this case, and thus $r(z)$,
  as a function of $z$, may have turning points inside the homogeneous
  region. As long as $r$ is nonmonotonic, choosing $(T,r)$ as coordinates would only partially cover the homogeneous region, since
  the turning points, where $r(z)=2\lbar m(r(z))$, would be excluded.

\subsubsection{The covering domain}\label{sec_covering}

In order to look for the global structure
of the solution, %
we perform another more convenient choice of gauge.
This gauge will provide a coordinate system that overlaps
static and homogeneous regions, as those presented above,
including also the horizons. 
    
We now begin fixing the gauge with the choice
$\dot{\erad}=0$, $\dot\ephi=0$, and assume that $\erad$ is not constant.
It is then shown in Appendix~\ref{app.covering}
that the solution of the equations
of motion depends on a pair of free functions
$r(x):=\sqrt{\erad}(x)$ and $s(x):=\ephi(x)$.
Using then the convenient choice
$s=(r^2-2\lbar r m(r))^{1/2}$, with $m(r)$ given as in \eqref{eq.mass},
and requiring $r(x)$ to satisfy \eqref{eq.rp} (replacing $z$ by $x$), the metric reads
\begin{align}
    ds^2
  &=-\bigg(1-\frac{{2\mofr}}{r(z)}\bigg) {d\tau}^2 %
  +2\sqrt{\frac{{2\mofr}}{r(z)}}{d\tau}{dz} + {{dz}^2}+r(z)^2 d\Omega^2,
  \label{metric:tau_x}
\end{align}
where $(t,x)$ has been renamed as the pair of real functions $(\tau,z)$ on the manifold.
The range of these coordinates is given by $\tau\in \mathbb{R}$
while $z$ is (or may be) restricted by the conditions $r(z)>0$ and
\begin{align}
  \mofr\geq 0.\label{cond_m}
\end{align}

Observe first that the condition $r(z)=2 m(r(z))$, which bounds the
domains of the static and homogeneous regions and defines the horizons,
is absent here.
On the other hand, the condition \eqref{cond_m}, which is inherent to this chart, may indeed produce a bound in the range of $z$
if $m(r(z))=0$ is satisfied for some value of $z$.
Nevertheless, due to the condition $r(z)<2 m(r(z))$,
any hypersurface defined by $m(r(z))=0$ cannot be located
in a homogeneous region. Therefore, if such surfaces exist, they
will be embedded in a static region.

It is straightforward to check that the line element \eqref{metric:tau_x}
changes to the forms \eqref{eq.metricstatic}
and \eqref{eq.metrichomogeneous} by the coordinate transformation
\begin{align}\label{eq.coordtransf}
  d\tau&=d\tt+\bigg(1-\frac{{2\mofr}}{r(z)}\bigg)^{-1}
         \sqrt{\frac{2\mofr}{r(z)}}dz,
\end{align}
restricted to $0\leq2\mofr<r(z)$ and $0<r(z)<2\mofr$, respectively. This,
together with the fact that $m(r(z))$ cannot vanish in a homogeneous region,
explicitly shows that the coordinates $(\tau,z)$
cover completely one (or more) homogeneous regions,
and partially or completely [depending on whether some surface 
$m(r(z))=0$ exists]
one (or more) static regions.

\subsubsection{Near-horizon geometries}\label{sec_nearhorizon}
In the starting point of the static gauge, with $\dot \erad=\dot\ephi=0$
and $\sin(\lambda\kang)=0$
we had left out the case $\erad'=0$. We deal with that case now,
and thus assume $\sqrt{\erad}=a$ is a constant.
We show in Appendix~\ref{app.nhg} that, depending on whether the cosmological
constant is vanishing or not, $a$ must be given by [cf. \eqref{eq:a}]

\begin{subequations}\label{eq:a_main}
\begin{equation}
  a^2 =\frac{1}{2\Lambda} \left(1\pm \sqrt{1-4Q^2\Lambda}\right),
\end{equation}
for $\Lambda\neq 0$, and by
\begin{equation}
  |a| = |Q|,
\end{equation}
\end{subequations}
for $\Lambda=0$.
In addition, the value of $m$ is related to this constant as
\begin{equation}\label{eq.m.nhg}
  m=\frac{a}{2}.
\end{equation}
The corresponding geometry is isometric to $\manifold={\cal M}^2\times S^2$,
where $S^2$ is the sphere of radius $a$ and ${\cal M}^2$ a Lorentzian space
of constant Gaussian curvature,
\begin{equation}\label{eq.k}
  \kappa=\left(\Lambda-\frac{Q^2}{a^4}\right)(1-\lbar).
\end{equation}
These solutions contain the usual near-horizon geometries
  in the GR limit, which are trivially recovered for $\lbar\to0$.

For completeness, in Appendix~\ref{app.nhg} we show that a convenient choice of gauge,
given by $\shift=0$, $\dot\lapse=0$, and $\ephi=a\sqrt{1-\lbar}$,
leads, after relabeling $(t,x)$ as $(T,z)$ to the form of the metric
\begin{equation}\label{eq.metric.nhg1}
  ds^2=-\sin^2(\sqrt{\kappa} z) dT^2+dz^2+a^2d\Omega^2,
\end{equation}
for $\kappa>0$,
\begin{equation}\label{eq.metric.nhg2}
  ds^2=-\sinh^2(\sqrt{-\kappa} z) dT^2+dz^2+a^2d\Omega^2,
  \end{equation}
for $\kappa< 0$, and \[ds^2= - z^2 dT^2+dz^2+a^2 d\Omega^2,\] for $\kappa=0$. The range of $z$ is the real line if $\kappa\leq 0$ and $z\in(0,\pi/\sqrt{\kappa})$
if $\kappa>0$.
Observe that %
the value of $a$ does not depend on $\lbar$, so that $\lbar$
only affects the curvature of the Lorentzian part $\kappa$ with a multiplying factor $(1-\lbar)$. In particular we trivially recover in the limit $\lbar\to 0$
the Bertotti-Robinson ($\Lambda=0$) and Nariai ($Q=0$) solutions in GR (see, e.g., Ref. \cite{Brill1994}).

As in GR, we expect the near-horizon geometries to be obtained
performing a convenient limit of the whole family of spacetimes parametrized by
$(M,Q,\Lambda,\lbar)$. This limit corresponds to one of the two
possible limits to
the extremal cases (see, e.g., Ref. \cite{Bengtsson2022}),
which will be discussed below.
Here we simply note that
from \eqref{eq.m.nhg}, and reading $m(a)$ from \eqref{eq.mass} for $r=a$,
we can isolate $M$ to obtain
\begin{equation}\label{eq.M.nhg}
  M=\frac{a}{2}+\frac{Q^2}{2a}-\frac{\Lambda}{6}a^3=a\left(1-\frac{2}{3}a^2\Lambda\right),
\end{equation}
where we have used \eqref{eq:a_main} in the second equality.
This, together with \eqref{eq:a_main}, provides
a relation between the parameters $M$, $Q$, and $\Lambda$.
Since neither \eqref{eq:a_main} nor \eqref{eq.m.nhg}
depend on $\lbar$, these relations are the same as in GR.
In particular, from \eqref{eq.M.nhg} and \eqref{eq:a_main}
we get that, if $\Lambda=0$,
\begin{equation}
  M=|Q|,\quad a=|Q|,\qquad \implies \kappa=-\frac{1}{a^2}(1-\lbar), \label{BR_cond}
\end{equation}  
and, if $Q=0$, which implies a positive value of $\Lambda>0$,
\begin{equation}
  M=\frac{1}{3\sqrt{\Lambda}},\quad a=\frac{1}{\sqrt{\Lambda}}, \qquad \implies \kappa=\frac{1}{a^2}(1-\lbar).\label{Nariai_cond}
\end{equation}
In Appendix~\ref{app.nhg_limits} we show that Eqs.~\eqref{eq:a_main}
and \eqref{eq.M.nhg}
correspond indeed to the extremal cases, i.e., in the limit
where the location of two horizons coincides.
The conditions \eqref{BR_cond} and \eqref{Nariai_cond} correspond
to the conditions for the Bertotti-Robinson ($r_I=r_H$) and the Nariai ($r_H=r_C$)
spacetimes as limits of Reissner-Nordstr\"om and Schwarzschild-de Sitter,
respectively, in GR.

For completeness, we include here the ultra-extreme case (see, e.g., Ref. \cite{Brill1994}),
which can be seen as the limit where all horizons coincide
$r_I=r_H=r_C$, and it is given by
\[
  M=\frac{1}{3}2\sqrt{2} |Q|,\quad \Lambda=\frac{1}{4Q^2},\quad a=\sqrt{2}|Q|,
  \qquad \implies \kappa=0.
  \]

\subsection{Curvature invariants}\label{sec_curvature}

As expected, in the GR limit ($\lbar\rightarrow 0$) the geometry
reproduces the family of spherically symmetric solutions
of the classical Einstein-Maxwell equations coupled to
a cosmological constant $\Lambda$.
Except for the maximally symmetric cases with $M=0=Q$, i.e.,
Minkowski ($\Lambda=0$), de Sitter ($\Lambda>0$), or anti-de Sitter ($\Lambda<0$) spaces,
and the near-horizon geometries, all the geometries of this
family present a singularity at $r=0$ (see Ref. \cite{Brill1994}).
Let us now compute some
curvature scalars for this modified model to check
their possible divergences.

Using the metric in any of the forms presented above \eqref{eq.metricstatic}, \eqref{eq.metrichomogeneous}, or \eqref{metric:tau_x}, the Ricci 
scalar %
takes the form
\begin{align}\label{eq.R}
    {\cal R}=&   4\Lambda\left(1+\frac{\lbar}{2}\right)     +2\lbar\Bigg(\frac{3M^2}{r^4}+\frac{Q^2}{r^4}\left(1-\frac{4M}{r}+\frac{Q^2}{r^2}\right)-\Lambda\left(\frac{4M}{r}+\Lambda r^2\right)+ \frac{4\Lambda Q^2}{3r^2}\Bigg),
\end{align}
while the Kretschmann scalar reads
\begin{align}
{\cal R}_{abcd}{\cal R}^{abcd}=& \frac{8 \Lambda^2}{3}{(1+\lbar)}+\frac{48 M^2}{{r}^6}-\frac{96 M Q^2}{{r}^7}+\frac{56 Q^4}{{r}^8}-{\lbar}\left(\frac{8}{3}\Lambda^3 r^{2} { 
{-\frac{152Q^6}{r^{10}}}}-\frac{240M^3}{r^7}+\frac{P_{8}(r)}{r^{9}}\right)\nonumber\\
    & +{\lbar^2}\left(\frac{20}{27}\Lambda^4 r^{4}-\frac{40}{27}\Lambda^3r^{2}+\frac{16}{3}M\Lambda^3r +{\frac{108Q^8}{r^{12}}+\frac{324M^4}{r^8}+\frac{P_{10}(r)}{r^{11}}}\right),
\end{align}
where {$P_{8}(r)$} and {$P_{10}(r)$} are polynomials of degree {$8$} and {$10$} in $r$, respectively,
whose explicit form is not relevant for our discussion.

From these expressions we have that
the curvature also diverges in the modified model if points where $r=0$
are reached.
In fact, unlike in GR, where $\mathcal{R}$ is constant,
for $\lbar\neq 0$ the Ricci scalar diverges at $r=0$ for all the cases
except for the trivial $M=0=Q$ geometry.
Moreover, the Kretschmann scalar diverges faster
than its GR counterpart near $r=0$: it goes as $r^{-12}$ if $Q\neq 0$
and as $r^{-8}$ if $Q=0$ and $M\neq 0$, while in GR the dominant terms are $r^{-8}$
and $r^{-6}$, respectively.
In addition, and contrary to GR, the modified model introduces diverging values of the curvature
as $r\to\infty$ for all the cases with $\Lambda\neq 0$.

Finally, the Ricci scalar of the near-horizon geometries can be directly computed
from the line elements \eqref{eq.metric.nhg1} and \eqref{eq.metric.nhg2}, or simply
by adding the constant curvatures of the spheres $S^2$ and ${\cal M}^2$ (see, e.g., Ref. \cite{ONeill1983}),
  \[
    {\cal R}=2\left(\kappa+\frac{1}{a^2}\right).
  \]
  It is easy to check that
this result can also be obtained as a particular limit of
  \eqref{eq.R}, by making use of the relations \eqref{eq:a_main}, \eqref{eq.k},
  and \eqref{eq.M.nhg}.

\section{Study of the singularity resolution}\label{sec_study_singularity}

As seen above, nonvanishing values of the mass $M$ and charge $Q$
parameters introduce divergences in the curvature invariants of the
spacetime under consideration at $r=0$, while a nonvanishing
cosmological constant $\Lambda$ produces an infinite curvature
as $r\to\infty$. Therefore, this model can only resolve the GR singularity at $r=0$ by making the points where $r=0$ unreachable.
That is, the singularity at $r=0$ will be resolved only if there exists a positive lower bound for $r$,
which we will denote by $r_0>0$. Correspondingly, to get a singularity-free spacetime,
avoiding also the curvature divergence at
$r\to\infty$, the existence of a finite upper bound $\rmax$ will also be
necessary.

The goal of this section is to study the existence of such $r_0$ and $\rmax$
depending on the specific values of the parameters $M$, $Q$, $\Lambda$,
and $\lbar$. For such a purpose, in Sec. \ref{sec_ranges}, we discuss the
conditions that constrain the possible values of the scalar $r$,
and reduce them to the study of the roots of a fourth-order polynomial.
Section \ref{sec_singularity_free} presents then all the possible values of
the parameters $M$, $Q$, $\Lambda$, and $\lbar$,
that lead to a singularity-free spacetime, apart, of course, from
  the maximally symmetric and near-horizon geometries.
Let us stress, however,
that certain sets of values of the parameters will define two different spacetimes
(see Fig.~\ref{fig_schemer}):
one singularity free, with $r$ taking values in an interval of
the form $[r_0,\rmax]$ or $[r_0,\infty)$,
and another one defined in the range $r\in(0,R]$, for some certain
value $R<r_0$, which is thus affected by the singularity at $r=0$.
We analyze the existence of such singular spacetimes in  Sec. \ref{sec_non_uniqueness}.

\subsection{Possible ranges of the scalar $r$}\label{sec_ranges}

Apart from the requirement $r(z)>0$, the domain of $r(z)$
is restricted by its defining equation \eqref{eq.rp}.
In particular, this equation is a nonlinear autonomous
differential equation and, denoting the derivative with respect to $z$ with a prime, it can be written in the form
\begin{equation}\label{eq:withf}
 \frac{1}{2}(r')^2+V(r)=0,
\end{equation}
which formally corresponds to the Hamiltonian of a one-dimensional
Newtonian particle moving in the potential $V(r)$ with zero total energy. 
This analog model, where $r=r(z)$ is understood as the position of the particle at a time
$z$, will be very helpful in understanding the properties of the solution
$r=r(z)$ and, in particular, its domain of definition.
Since the energy of the particle
is zero, and it is conserved, the particle can only move along intervals
of $r$ where $V(r)$ is nonpositive. The roots of the potential
$V(r)=0$ thus define the turning points.

More precisely, depending on the positive roots of $V(r)$,
there might appear several, say $N$, nonoverlapping intervals $[r_{\rm min}^{(i)},r_{\rm max}^{(i)}]$,
with $i=1,\dots,N$, where $V(r)$ is nonpositive.
The lower boundary of a given interval is defined either by
a root of $V(r)$, that is, $V(r^{(i)}_{\rm min})=0$ or by $0$.
Correspondingly,
the upper boundary is either a root $V(r^{(i)}_{\rm max})=0$ or infinity.
Therefore, there are four possible kinds of intervals:
finite intervals of the form $[r_{\rm min}, r_{\rm max}]$ or
$[0, r_{\rm max}]$, and unbounded intervals of the form
$[r_{\rm min},\infty)$ or $[0,\infty)$.

To analyze the existence
of the solution $r(z)$ %
at the zeros of $V(r)$, and its behavior,
we first take the derivative with respect to $z$ of \eqref{eq:withf}
to obtain the usual equation for the acceleration,
\begin{equation}\label{eq.rpp}
 r''=-V_{,r}(r),
\end{equation}
where $r'(z)\neq 0$.
For continuity, this must also
be obeyed at the turning points $r'(z)= 0$. Contrary to the sign of the velocity $r'(z)$,
the acceleration $r''(z)$ is completely defined in terms of the
gradient of the potential $V(r)$.
As a consequence,  if $\alpha$ is a simple root of $V(r)$, with corresponding
$z_\alpha$ defined by  $\alpha=r(z_\alpha)$, continuity of $r''$
at $\alpha$ demands that if $V_{,r}(\alpha)<0$, then
$r'(z)=\mbox{sign}(z-z_\alpha)\sqrt{|V(r(z))|}$ around $z_\alpha$,
and $\alpha$ corresponds to a lower bound of the domain of definition of $r$,
whereas if $V_{,r}(\alpha)>0$ then
$r'(z)=-\mbox{sign}(z-z_\alpha)\sqrt{|V(r(z))|}$ around $z_\alpha$,
and  $\alpha$ is an upper bound.
From these relations one concludes that the function $r=r(z)$ is symmetric
around any of these simple roots, that is,
the solution will obey $r(z-z_\alpha)=r(z_\alpha-z)$.
Therefore, intervals with just one simple root of the
potential $\alpha$ are either of the form $[\alpha,\infty)$ or $[0,\alpha]$.
In the former (latter) case, there is a global minimum (maximum)
at $\alpha$,
and the coordinate $z$ will cover the whole
range of $z$ for which $r$ stays on its domain.
If $\alpha$ and $\beta$
are two simple roots of the potential, the solution $r(z)$
in the interval $[\alpha,\beta]$
oscillates between $\alpha$ and $\beta$ periodically and the
range of $z$ is the real line.

A second consequence of the acceleration equation \eqref{eq.rpp}
is that, if there is a root of the potential $\alpha$ with multiplicity higher than one, that is,
$V(\alpha)=V_{,r}(\alpha)=0$, then both $r'$ and $r''$
also vanish there. In fact, recursively deriving \eqref{eq.rpp},
one finds that all the derivatives of $r(z)$ must also vanish there,
and $\alpha$ is thus an equilibrium point.
Consider $n>1$ to be the lowest-order nonzero derivative, $V^{(n)}(\alpha)\neq 0$.
If $n$ is odd, then $\alpha$ is an inflection point;
the potential $V$ changes sign there, and therefore $\alpha$ clearly
determines the lower [if $V^{(n)}(\alpha)<0$] or upper [if $V^{(n)}(\alpha)>0$] bound of some interval.
In such a case, the particle reaches $\alpha$ only for infinite values of $z$
[i.e., $r(z)\to\alpha$ only as $|z|\to\infty$].
However, if $n$ is even, then $V$ has a local minimum [if $V^{(n)}(\alpha)>0$]
or a maximum [if $V^{(n)}(\alpha)<0$] at $\alpha$.
In the former case, $\alpha$ is a stable equilibrium point and the particle
stays put in that position $r(z)=\alpha$, and thus it does not
define a finite interval of $r$.
In the latter case, the point $\alpha$ is an unstable equilibrium
point, the particle takes infinite time to reach it (from either side),
and it thus defines both a lower and an upper bound to two disjoint intervals
of the form $[\alpha,\dots]$ and $[\dots, \alpha]$, respectively.

To sum up, simple roots of the potential $V(r)$ define turning points
of $r(z)$, and are also fixed points of a reflection symmetry.
In turn, double (or higher) roots are only reached
at asymptotic values of $z$, hence do not define
turning points and have no associated symmetry properties.
As a result, each of the intervals in $r$ will describe an independent
spacetime and the associated domain of definition on $z$ will determine the range of the coordinate $z$ of that domain.

In our particular case, the potential is %
given by
$V(r)=(-r+2\lbar m(r))/(2r)$. 
In order to analyze the intervals where it is negative,
it turns out convenient to define
\begin{equation}\label{def:P}
  P(r,\lproof):=\left\{
    \begin{array}{lr}
      \cfrac{\lproof\Lambda}{3}r^4-r^2+2\lproof M r-\lproof Q^2, & {\rm if}\,\,\, Q\neq 0,\\
      \cfrac{\lproof\Lambda}{3}r^3-r+2\lproof M, & {\rm if}\,\,\, Q= 0,
    \end{array}\right.
\end{equation}
in such a way that $V(r)=P(r,\lbar)/(2 r^\ell)$,
with $\ell=1$ for $Q=0$ and $\ell=2$ for $Q\neq0$.
This choice has been made so that %
$P$ is a polynomial with a nonvanishing free term in all cases.
Clearly, for $r>0$ the roots $\alpha$ of $P$ (on $r$) and $V$ coincide,
and the sign of the gradient $V_{,r}(\alpha)$ is the same as
the sign of $P'(\alpha,\lbar)$,\footnote{Primes on $P(r,s)$ will denote derivatives with respect to its first argument.} 
so that, in particular, $V_{,r}(\alpha)=0$ if and only if $P'(\alpha,\lbar)=0$. By iteration,
one can check that the first $k$ derivatives of $V$ are vanishing at $\alpha$
if and only if the first $k$ derivatives of $P$ vanish there.
In addition, in such a case, the sign of the next
derivative of both functions will coincide. Note that, for convenience,
we have included a second argument in the polynomial $P(r,s)$.
This is because $P$
also serves to study the horizon structure of the spacetime
(see  Sec. \ref{sec_horizons}), since
the function $G=1-2m/r$ satisfies %
$G(r)=-P(r,1)/r^\ell$.
In what follows, whenever
we speak about the roots and derivatives of $P(r,s)$, we will refer
to the roots on and the derivatives with respect to its first argument,
respectively.

\subsection{Singularity-free spacetimes}\label{sec_singularity_free}

The analysis of the solutions $r(z)$ and the possible singularity resolution
is thus reduced to the classification of the zeros
of $P(r,\lbar)$ in $r$, and the behavior of $P$ there, depending on the
specific values of $M$, $Q$, $\Lambda$, and $\lbar$.
Our concern in this section is to provide the cases in which the
solution avoids the curvature divergences. In particular, in Sec. 
\ref{sec_existence_r0} we present the set of cases
for which a lower bound $r_0$ for $r$ exists, and thus avoid
the singularity at $r=0$.
In Sec. \ref{sec_existence_rinfinity} we
analyze in what cases
the divergent behavior at $r\to\infty$ is avoided by an upper
bound $\rmax$ for $r$.
  
\subsubsection{Existence of $r_0$}\label{sec_existence_r0}
  
Let us start with the avoidance of the singularity at $r=0$. As already commented,
for that we need a positive minimum $r_0>0$ of $r(z)$.
This will be accomplished if there exists $r_0>0$ such that $P(r_0,\lbar)=0$
and $P(r,\lbar)<0$ on some interval for which $r_0$ is infimum. 
Equivalently, this condition can be stated in a local way by requesting
that there exists a positive root $r_0>0$ of the polynomial, $P(r_0,\lbar)=0$,
such that the first nonvanishing derivative of $P(r,\lbar)$
at $r_0$ is negative.
Since $P(r,\lbar)$ is (at most) a fourth-order
polynomial, in principle, there are four possibilities for that: 
(a) $r_0$ is a simple root and $P'(r_0,\lbar)<0$;
(b) $r_0$ is a double root and $P''(r_0,\lbar)<0$;
(c) $r_0$ is a cubic root and  $P'''(r_0,\lbar)<0$;
(d) $r_0$ is a quartic root and $P''''(r_0,\lbar)<0$.

The complete list of sets of values that lead to the existence of $r_0$
is given by Lemma~\ref{res:lemma1} in Appendix~\ref{app.proofs}.
In particular, $r_0$ exists only when (a)
or (b) hold, and thus the possibilities (c) and (d) are excluded.
Explicitly, $r_0$ is a simple root with $P'(r_0,\lbar)<0$
in the following cases:
\begin{itemize}
\item $C_1:=\Big\{$$\Lambda>0$, $Q\neq 0$, $M>0$, with
  $8Q^2<9\lbar M^2$ and $\Lambda\in(\Lambda_-,\Lambda_+)\cap(0,\Lambda_+)\Big\}$,
\item $C_2:=\Big\{$$\Lambda>0$, $Q= 0$, and $\Lambda\in \left(0,\frac{1}{9\slbar^3 M^2}\right)\Big\}$,
\item $C_3:=\Big\{$$\Lambda=0$ and $|Q|<\sqrt{\lbar} M\Big\}$,
\item $C_4:=\Big\{$$\Lambda<0$, $Q\neq 0$,  $|Q|<\sqrt{\lbar} M$, and $\Lambda\in(\Lambda_-,0)\Big\}$,
\item $C_5:=\Big\{$$\Lambda<0$, $Q= 0$, and $M>0\Big\}$,
\end{itemize}
where we have defined
\begin{equation}\label{def:Lambda_l}
  \Lambda_{\pm}:=\frac{3}{32\lbar^4 Q^6}\left[36 \lbar ^3 M^2 Q^2-27 \lbar ^4 M^4-8 \lbar ^2 Q^4
    \pm\sqrt{\lbar ^5 M^2 \left(9 \lbar  M^2-8 Q^2\right)^3}
  \right],
\end{equation}
while $r_0$ is a double root with $P''(r_0,\lbar)<0$ in the cases
\begin{itemize}
\item $D_1:=\Big\{$$\Lambda>0$, $Q\neq 0$, $M>0$, with
  $8Q^2<9\lbar M^2<9Q^2$ and $\Lambda=\Lambda_-\Big\}$,
\item  $D_3:=\Big\{$$\Lambda=0$ and $|Q|=\sqrt{\lbar} M>0\Big\}$,
\item $D_4:=\Big\{$$\Lambda<0$, $Q\neq 0$,  $|Q|<\sqrt{\lbar} M$ and $\Lambda=\Lambda_-\Big\}$. %
\end{itemize}
Observe that, by Remark~\ref{remarkbeta+}, the sign of
$\Lambda_-$ coincides with the sign of the combination $Q^2-\lbar M^2$.
In consequence, the fourth requirement 
in the case $C_1$ can be refined as $\{\{Q^2\leq \lbar M^2$ and $\Lambda\in(0,\Lambda_+)\}$ or $\{8Q^2<9\lbar M^2<9 Q^2$ and $\Lambda\in(\Lambda_-,\Lambda_+)\}\}$.

In the cases with a nonvanishing charge ($C_1$, $C_3$ with $Q\neq 0$,
and $C_4$),
the polynomial $P$ is fourth order. Therefore, it might have at most
four real roots, but one of them is always negative,
which leads to the existence of at most three positive roots $R$, $r_0$, and $\rmax$.
The degeneracy of $R$ and $r_0$ into a double root defines the corresponding
degenerate cases $D_1$, $D_3$, and $D_4$. Concerning the cases with a vanishing
charge ($C_2$, $C_3$ with $Q=0$ and $C_5$), the polynomial is third order, and $r_0$ cannot
degenerate into a double root.

Note that, in particular, the existence of $r_0$ generically requires a nonnegative value
of the mass parameter $M$. If $M$ is negative, the GR singularity at $r=0$
is not resolved by the present model. Concerning the charge $Q$, the condition
of singularity resolution introduces an upper bound proportional to the mass
$M$ and the polymerization parameter $\lbar$. Something similar happens
with the cosmological constant $\Lambda$: the singularity is resolved
whenever the absolute value of $\Lambda$ is below a certain maximum threshold.
These requirements are qualitatively similar to the conditions one finds
in GR for the existence of horizons. As it will be explained below, this is
not a coincidence but comes from the fact that the problem of the existence of horizons
is mathematically equivalent to the existence of $r_0$ for $\lbar\to1$.

The different cases $C_1-C_5$ have been arranged to correspond
to well-known black-hole metrics in GR. More precisely, $C_1$ (and its degenerate $D_1$) corresponds
to Reissner-Nordstr\"om-de Sitter, $C_2$ to Schwarzschild-de Sitter,
and $C_3$ (and its degenerate $D_3$) to Reissner-Nordstr\"om. This latter case also includes
Schwarzschild $(\Lambda=Q=0)$.
The last two cases, $C_4$ (along with its degenerate $D_4$) and $C_5$, correspond to 
Reissner-Nordstr\"om-anti-de Sitter and Schwarzschild-anti-de Sitter, respectively.
Note that the maximally symmetric de Sitter, anti-de Sitter, and Minkowski geometries are not
included in the above list, since in such cases no $r_0$ exists,
though these metrics are perfectly well defined at $r=0$.
Concerning the near-horizon geometries,
the Bertotti-Robinson-like solution (with $r_I=r_H$) is not included in the above
cases, but the Nairai-type solution (with $r_H=r_C$) is a particular case of $C_2$.
As shown in Fig. \ref{fig_schemer}, the reason is that the existence of $r_0$ implies that $r_I$ and $r_H$
  cannot be present in the same spacetime.

\subsubsection{Existence of $\rmax$}\label{sec_existence_rinfinity}

For the cases listed in the previous subsection,
let us now study the existence of an upper bound $\rmax$ of $r$
that avoids the divergent behavior of the curvature invariants
for large radii produced by a nonvanishing value of the cosmological
constant. By Remark~\ref{remark_rinf}, the cases $C_1-C_5$,
and their degenerate $D_1$, $D_3$, and $D_4$, 
are bounded from above by some $\rmax$ if and only if $\Lambda>0$.
Therefore, cases with a negative cosmological constant have
a singular behavior at large radii, while cases with a positive
cosmological constant will avoid the curvature divergence
by the presence of $\rmax$.
In the cases $C_1$ and $C_2$, $r_0$ and $\rmax$ will be a lower and upper
turning point, respectively, while in the cases $C_3$, $C_4$,
and $C_5$, $r_0$ will be the only turning point.
The ranges for $r$ and $z$ are given as follows:
\begin{itemize}
  \item In the cases $C_1$ and $C_2$ the solution $r(z)$, with
  image on $[r_0,\rmax]$, has support on the whole real line, $z\in\mathbb{R}$. This solution
is symmetric around $r_0$ and $\rmax$.
These turning points are reached at finite values of $z$,
and thus it is periodic.
\item  In the cases $C_3$, $C_4$, and $C_5$, the solution $r(z)$,
with image on $[r_0,\infty)$, has support on the
whole real line, $z\in\mathbb{R}$. This solution is symmetric around
the turning point $r_0$, and $r(z)\to \infty$ as $z\to \pm \infty$.
\end{itemize}
In the degenerate cases $r_0$ is not a turning point, but an unstable
equilibrium point which can only be reached at infinite values of $z$.
Therefore, one finds the following ranges of definition:
\begin{itemize}
\item  In the degenerate case  $D_1$, the solution $r(z)$,
with image on $(r_0,\rmax]$, has
support on the whole real line, $z\in\mathbb{R}$. This solution is
symmetric around the turning point $\rmax$,
and $r(z)\to r_0$ as $z\to \pm\infty$.
\item In the degenerate cases $D_3$ and $D_4$, the solution
$r(z)$, with image on $(r_0,\infty)$,
has support on the whole real line, $z\in\mathbb{R}$. This solution goes from
$r(z)\to r_0$ as $z\to -\infty$ to $r(z)\to +\infty$ as $z\to +\infty$.
\end{itemize}

In summary, the present model provides singularity-free spacetimes
in the cases $C_1$, $C_2$, and $C_3$, along with their
degenerate cases $D_1$ and $D_3$. In the GR limit this corresponds to the
Reissner-Nordstr\"om-de Sitter black hole (including
Schwarzschild-de Sitter, Reissner-Nordstr\"om, and Schwarzschild as particular cases).
In cases $C_4$, $C_5$, and $D_5$, although the GR singularity at $r=0$
is avoided, the spacetime presents a singularity at large radii.

\subsection{Nonuniqueness of the spacetime solutions: singularity at $r=0$}
\label{sec_non_uniqueness}

As already commented above, and schematically shown in Fig. \ref{fig_schemer},
when $Q\neq 0$ in the cases listed above in which $r_0$ exists,
the same values of the parameters $(M,Q,\Lambda,\lbar)$
define two different spacetimes:
one singularity free, with $r$ taking values in an interval with
an infimum $r_0$, and another one defined in a range containing
the origin $r=0$ and with an upper bound $R\leq r_0$.
Concerning the cases presented above, the requirement $Q\neq 0$
excludes the cases $C_2$ and $C_5$, while for the rest,
by Remark~\ref{remark_origin}, we have the following:
  \begin{itemize}
  \item In $C_1$, $C_3$ with $Q\neq0$, and $C_4$, the solution $r(z)$, with image on $(0,R]$, has support on an interval $z\in (-z_0,z_0)\subset \mathbb{R}$,
and $r(\pm z_0)=0$. This solution 
is symmetric around $R=r(0)$, and $R<r_{0}$ is a critical point with
$P(R,\lbar)=0$ and $P'(R,\lbar)>0$.
\item In the degenerate cases $D_1$, $D_3$, and $D_4$, the critical points
$r_0$ and $R$ coincide. For such cases, the solution $r(z)$, with image on $(0,r_0)$,
has support on an interval $z\in (z_0,\infty)\subset \mathbb{R}$ and it is monotonic.
The constant $z_0$ can be chosen so that $r(z_0)=0$, and $r\to r_0$
in the limit $z\to\infty$.
\end{itemize}
We can thus state that, whenever $Q$ is nonvanishing and $r_0$ exists,
the generalization of the GR Hamiltonian presented here
originates the existence of more than one spacetime solution for the same
  set of values of the parameters $(M,Q,\Lambda,\lbar)$.
However, the singularity-free requirement, in particular, singles out
only one of the two possible solutions. That requirement on the spacetime solution
  is equivalent to demand a ``black-hole solution'', which we define
  as that containing a static region exterior to an isolated horizon.
  We will use that wording in the next section, where we study
  the global structure of the solutions.

\section{Global structure of the singularity-free family of spacetimes}\label{sec:global}

In this section we focus on the analysis of the singularity-free
spacetimes from the previous section. That is, we analyze the global structure
of the family of solutions given by the sets of values
of the parameters presented in the cases $C_1$, $C_2$, and $C_3$, and their degenerate cases
$D_1$ and $D_3$.
Note that, in particular, this restricts the discussion
to a positive mass parameter $M>0$, and to a nonnegative cosmological
constant $\Lambda\geq 0$.

The first consequence of this is that $m(r)$ is positive for all
$r\geq r_0$,
and therefore any point of the spacetime solution in those cases is
included in a domain of the type covered by the coordinates $(\tau,z)$.
To show that $m(r)>0$ for all $r\geq r_0$,
we define the auxiliary function, cf. \eqref{eq.mass},
$$
j(r):=2 r m(r)  = \frac{\Lambda}{3} r^4 +M r - Q^2.
$$
Because $M>0$ and $\Lambda\geq 0$,
$j(r)$ is monotonically increasing for all $r>0$. Further, since its value
at $r=r_0$ is positive, $j(r_0)=r_0m(r_0)=r_0^2/\lbar$ [because $r_0$ is a root of $V(r)$],
it has no zeros in $r>r_0$. Therefore $j(r)$, and thus also $m(r)$, are positive for all $r>r_0$.

Since any point in the singularity-free spacetimes
  (cases $C_1$, $C_2$, $C_3$, $D_1$, and $D_3$) can be covered by
  the coordinates $(\tau,z)$, their domain of definition, which we will call $\mathcal{U}$
  in the remainder, provides us with the fundamental building block for the
  global study of the solution. Once we produce
  the conformal (Penrose) diagram for $\mathcal{U}$,
  we can follow the usual  periodic construction to build up
  a maximal analytic extension $\mathcal{M}$.

The remainder of this section is divided into two subsections. In Sec. \ref{sec_horizons}
we discuss the existence of the different horizons in the solutions,
while in Sec. \ref{sec_conformal}
we briefly present the main ingredients for the construction of the conformal
diagrams by means of the global properties around the horizons
  and critical values of $r$.
The diagrams for all the considered cases are displayed in Appendix~\ref{sec_app_conformal}.
We exclude from this analysis the near-horizon geometries,
which have a trivial global structure.

\subsection{Horizon structure}\label{sec_horizons}

Let us start by checking the geometry of the hypersurfaces of constant $z$,
and thus of constant $r$. Imposing $z=a$ in the line element \eqref{metric:tau_x}, 
we obtain the induced metric
\[
  h=%
  -\chif(r_a)d \tau^2+r_a^2 d\Omega^2,
\]
with $r_a=r(z_a)$, and the function $G(r)=1-2m(r)/r$ already defined above.
In consequence, the hypersurface $z=a$ will be timelike, null, or spacelike
if
$G(r_a)>0$, $G(r_a)=0$, or $G(r_a)<0$,
respectively.
Since the critical values of $r$, i.e., $r_0$, $\rmax$,
and $R$, are zeros of $V(r)=-\frac{1}{2}(1-2\lbar m(r)/r)$, the function $G(r)$
is negative there, $G(r_0)=G(\rmax)=G(R)=(\lbar-1)/\lbar<0$.
In consequence, the sets of points that satisfy $r=r_0$ or $r=\rmax$ define spacelike hypersurfaces
in the manifold. We will refer to those as critical hypersurfaces,
because $r$ attains a local minimum or maximum there.

Next, we compute the mean curvature vector $H=(2/r)\nabla r$
of the surfaces $S_a$ of constant $z=a$ and $\tau$, which correspond to spheres of
area $4\pi r_a^2$, to obtain
\[
  H= \frac{2}{r_a^2} r'(a)
  \left(\sqrt{2m(r_a) r_a}\partial_\tau - (2m(r_a)-r_a)\partial_z\right).
\]
Therefore, $H$ %
is future pointing if $r'(a)>0$ and past pointing if
$r'(a)<0$. Its module, after using \eqref{eq.rp}, takes the form
\[
  g(H,H)= \frac{4}{r_a^2} (r'(a))^2
  \left(1-\frac{2m(r_a)}{r_a}\right)=
  \frac{4}{r_a^2}\left(1-\frac{2\lbar m(r_a)}{r_a}\right)
  \left(1-\frac{2m(r_a)}{r_a}\right).
\]
As a result, using that $\lbar\in(0,1)$, we have that
\begin{itemize}
\item $r_a>2m(r_a)\implies H$ is spacelike,
\item $r_a=2m(r_a)\implies H\neq 0 $ and $H$ is null,
\item $r_a<2m(r_a)$ and $r_a\neq 2\lbar m(r_a) \implies H$ is timelike,
\item $r_a= 2\lbar m(r_a)%
\implies H=0$.
\end{itemize}
Observe that, in the fourth case, we necessarily have $r_a<2m(r_a)$.
Therefore, in particular, the critical hypersurfaces $r=\alpha$
are minimal spacelike hypersurfaces covered by spheres
of area $4\pi \alpha^2$.
In addition, these spheres are nontrapped in the regions where $r>2m(r)$,
trapped (to the future) where $r<2m(r)$ and $r'(z)>0$,
and antitrapped (to the future) where $r<2m(r)$ and $r'(z)<0$.

In fact, the function $G$ can be defined intrinsically as
$G=-g(\xi,\xi)$, where $\xi$ is the Killing vector field that provides
the staticity in the static regions, and the homogeneity in the homogeneous
regions.
In particular, in the coordinates $(\tau,z)$, this vector reads
$\xi=\partial_\tau$. On the hypersurfaces $r=\horizon$ with
$G(\horizon)=0$, the Killing vector $\xi$
is null, and these are therefore Killing horizons
(called extremal or degenerate if $\horizon$
is a higher-multiplicity root of $G$, so that
$\chif'(\horizon)= 0$).
As usual, then, the nontrapped regions correspond to the static regions
with a timelike $\xi$,
while the homogeneous regions (with a spacelike $\xi$) are trapped where
$r'>0$ and antitrapped where $r'<0$.
Note that the critical hypersurfaces are always located
in homogeneous regions.

The function $G$ is independent of the polymerization
parameter $\lbar$, and therefore the determination of the horizons
and their relation with their limiting regions coincides
with that in GR.
As a result, there are at most three horizons: $r_I$, $r_H$, and
$r_C$, each one related to one of the parameters of
the model. Namely, $r_I$ can only exist for a nonvanishing value
of the charge parameter $Q\neq 0$, while the existence of $r_H$ and
$r_C$ requires a positive value of $M$ and $\Lambda$, respectively.
Also, in the nonextremal case
the horizons $r_I$ and $r_C$ bound from above (in $r$) static regions
[where $G(r)>0$], while $r_H$ bounds from above homogeneous regions
[where $G(r)<0$]. %
If they exist, they generically obey $r_I<r_H<r_C$,
although they can degenerate into the extremal cases
$r_I=r_H$, $r_H=r_C$, or $r_I=r_H=r_C$ (see, e.g., Ref. \cite{Brill1994}).

We are restricting ourselves to the cases
where $\Lambda\geq 0$ and $r_0$ exists
and, if $\Lambda>0$, also $\rmax$ exists.
As mentioned above,
$G(r_0)=G(\rmax)=(\lbar-1)/\lbar$ are negative.
On the other hand, direct inspection shows that in a neighborhood of $r=0$, $G(r)>0$
if $Q\neq 0$, while $G(r)<0$ if $Q=0$ and $M>0$.
Therefore, since $G(r)$ is a smooth function with no poles in $r>0$,
in the case $Q\neq 0$, there must be a root $r_I$ of $G(r_I)=0$
between the origin and $r_0$, that is, $r_I<r_0$. The existence of $r_0$
requires thus the existence of $r_I$, although this inner horizon
is not contained in the singularity-free domain.

Concerning the other two horizons ($r_H$ and $r_C$),
on the one hand, for $\Lambda=0$ we have cases $C_3$ and $D_3$, and the range of the solution
is $[r_0,\infty)$ and $(r_0,\infty)$, respectively. Since $G(r)$ tends to $1$ as
$r\rightarrow\infty$ %
and $G(r_0)$ is negative,
the function $G(r)$ must have one root larger than $r_0$.
Since for $\Lambda=0$ there is no $r_C$, that root must correspond to $r_H$.
Therefore we have that $r_0<r_H<\infty$.
As a result, for $\Lambda=0$ the existence of $r_0$ always requires the existence
of the horizon $r_H$, and no cosmological horizon $r_C$ exists.

On the other hand,
for $\Lambda>0$, we have cases $C_1$, $C_2$, and $D_1$, and the range is given by
$[r_0,\rmax]$ (the first two) or $(r_0,\rmax]$ (the degenerate case).
Since $G$ is negative at both extremes, $G(r_0)<0$ and $G(\rmax)<0$,
there can be either no roots, two simple roots, or one double root of $G$ in that interval.
These two roots correspond to $r_H$ and $r_C$ if different, and the double root
defines the extremal case with $r_H=r_C$. In other words, the existence of both $r_0$ and $\rmax$
requires either the existence of both $r_H$ and $r_C$ (with equal or different values),
or none of them. More explicitly each of the cases $C_1$, $C_2$, and $D_1$
is subdivided into three different cases as follows,
\begin{itemize}
\item $C_1^{\rm BH}:=\left\{C^1\,\,|\,\,\Lambda_+|_{\slbar\to 1}>\Lambda_- \mbox{ and } \Lambda\in \left(\Lambda_-,\Lambda_+|_{\slbar\to 1}\right)\cap \left(0,\Lambda_+|_{\slbar\to 1}\right)\right\}$,
  where two horizons $r=\bhhor$ and $r=r_C$ exist with $\bhhor<r_C$;
\item $C_1^{\rm extremal}:=\left\{C^1\,\,|\,\,\Lambda_+|_{\slbar\to 1}>\Lambda_- \mbox{ and }
    \Lambda=\Lambda_+|_{\slbar\to 1}\right\}$, where
    there is a degenerate horizon at $r=\bhhor=r_C$;
\item $C_1^{\rm cosmos}:=\left\{C^1\,\,|\,\,\Lambda_+|_{\slbar\to 1}\leq \Lambda_- \mbox{ or }
      \Lambda\in (\Lambda_+|_{\slbar\to 1},\Lambda_+)\right\}$
where no horizons exist;
\item $D_1^{\rm BH}:=\left\{D_1\,\,|\,\, \Lambda_+|_{\slbar\to 1}> \Lambda_-\right\}$, where two horizons
  exist at $r=\bhhor$ and $r=r_C$ with $\bhhor<r_C$;
\item $D_1^{\rm extremal}:=\left\{D_1\,\,|\,\, \Lambda_+|_{\slbar\to 1}= \Lambda_-\right\}$,
where there is a degenerate horizon at $r=\bhhor=r_C$;
\item $D_1^{\rm cosmos}:=\left\{D_1\,\,|\,\, \Lambda_+|_{\slbar\to 1}< \Lambda_-\right\}$,
  where no horizons exist;
\item $C_2^{\rm BH}:=\left\{C_2\,\,|\,\, \Lambda\in\left(0,\frac{1}{9M^2}\right)\right\}$,
where two horizons $r=\bhhor$ and $r=r_C$ exist with $\bhhor<r_C$;
\item $C_2^{\rm extremal}:=\left\{C_2\,\,|\,\, \Lambda=\frac{1}{9M^2}\right\}$, where
there is a degenerate horizon at $r=\bhhor=r_C$;
\item $C_2^{\rm cosmos}:=\left\{C_2\,\,|\,\,\Lambda\in\Big(\frac{1}{9M^2},\frac{1}{9\slbar^3 M^2}\Big)\right\}$, where no horizons %
  exist.
\end{itemize}
Note that this subdivision is based on the sign of the difference
$\Lambda_+|_{\slbar\to 1}-\Lambda_-$, which depends on $(M,Q,\lbar)$, and, as shown in
Appendix~\ref{app.proofs.hor}, in each case it can be either positive, negative, or zero.

The use of the names for the subcases is descriptive:
we use the superscript ``BH'' for black hole, so that $C_1^{\rm BH}$ and $C_2^{\rm BH}$
correspond to black-hole solutions as defined in the previous section.
In turn, the cases labeled with the superscript ``cosmos''
correspond to solutions of Kantowski-Sachs type,
with homogeneous spacelike slices and no horizons,
while the superscript ``extremal''
stands for cases with a degenerate horizon.
Observe that these extremal cases are composed by homogeneous regions,
so they describe cosmological solutions with horizons.
Since $C_3$ and $D_3$
are in all cases black-hole solutions
with the critical hypersurface $r=r_0$ hidden behind a
horizon $r_H$, we will not make use
of any specific label to identify them.

In order to end this section, let us point out
the deep connection
between the horizons $r_I$, $r_H$, and $r_C$, and the critical points
$R$, $r_0$, and $\rmax$. Let us recall that the former are the positive
roots of the function $G(r)=(1-\frac{2 m}{r})$, while the latter
correspond to the positive roots of $V(r)\propto (1-\frac{2\slbar m}{r})$.
Since in terms of $P(r,s)$ we have $G(r)\propto P(r,1)$
  and $V(r)\propto P(r,\lbar)$, it becomes explicit that
a rescaling of the parameters 
$(M,Q,\Lambda)\rightarrow(\lbar M, \sqrt{\lbar}Q, \lbar\Lambda)$
maps the roots of the first into the second.
In particular,
one has the following specific relations between the critical points of
a given solution, and the horizons of its rescaled version:  
$R(M,Q,\Lambda,\lbar)=r_I(\lbar M,\sqrt{\lbar}Q,\lbar\Lambda)$,
$r_0(M,Q,\Lambda,\lbar)=\bhhor(\lbar M,\sqrt{\lbar}Q,\lbar\Lambda)$,
and $\rmax(M,Q,\Lambda,\lbar)=r_C(\lbar M,\sqrt{\lbar}Q,\lbar\Lambda)$.
Therefore, in particular, for a given set of parameters $(M,Q,\Lambda,\lbar)$,
the minimum $r_0$ takes the value of the GR horizon $r_H$
with parameters $(\lbar M,\sqrt{\lbar}Q,\lbar\Lambda)$.
We can thus synthesize the singularity resolution principle of this model
in the following form:
In this model a solution with parameters $(M,Q,\Lambda,\lbar)$ exists that avoids the singularity at $r=0$
if and only if the singularity of the GR solution with
parameters $(\lbar M,\sqrt{\lbar}Q,\lbar\Lambda)$ is not naked.

Although it is implicit from the above analysis, let us note that
  the horizons $r_I$, $r_H$, and $r_C$ can never coincide with the critical
  values $R$, $r_0$, and $\rmax$. This comes explicit through the identity
  \begin{equation}\label{id.P}
    \lbar P(r,1)=P(r,\lbar)-(\lbar-1)r^\ell,
  \end{equation}
  or, equivalently,
  $
    \lbar G(r)=(\lbar-1)-2V(r),
  $
  which holds by construction. Therefore, $G(r)$ and $V(r)$ cannot vanish
  at the same point.

\subsection{Elements for the construction of the conformal diagrams}\label{sec_conformal}

From the previous analysis we have determined that if the conditions
$r=r_0$ and $r=\rmax$ define a set of points in the manifold,
this set forms a spacelike hypersurface. As explained in Sec. \ref{sec_singularity_free},
for the generic cases
under consideration ($C_1$, $C_2$, and $C_3$), the critical
values $r_0$ and $\rmax$ are simple roots  of $V(r)$, and are thus
reached at finite values of the coordinate $z$. Therefore, $r=r_0$ and $r=\rmax$
  define hypersurfaces, which are spacelike.
  Nonetheless, in the degenerate cases $D_1$ and $D_3$,
the value $r_0$ happens to be a double root of $V(r)$
and therefore it is reached at infinite values of $z$.

In order to characterize the sets of points defined
  by constant $r$ as $r\to r_0$
in the degenerate cases, %
we start by analyzing the radial geodesics in the regions
around those values.
The radial geodesics of the metric \eqref{metric:tau_x}, parametrized as
$(\tau(s),z(s))$ with affine parameter $s$,
are determined by
\begin{align*}
  \gamma&=-\left(1-\frac{2m(r(z))}{r(z)}\right)\left(\frac{d\tau}{ds}\right)^2
  +\left(\frac{dz}{ds}\right)^2+2\sqrt{\frac{2m(r(z))}{r(z)}}\frac{d\tau}{ds}\frac{dz}{ds},\\   \mathcal{E}&=    -\bigg(1-\frac{2m(r(z))}{r(z)}\bigg)\frac{d\tau}{ds} +\sqrt{\frac{2m(r(z))}{r(z)}}\,\frac{dz}{ds},%
\end{align*}
with $\gamma=0,1,-1$ for null, spacelike, and timelike geodesics, respectively,
and the constant $\mathcal{E}$ is
the energy (the conserved quantity associated with the timelike Killing vector field $\partial_\tau$).
The combination of both equations yields
\begin{align}\label{zdot_E}
    \left(\frac{dz}{ds}\right)^2 =\mathcal{E}^2+\gamma\bigg(1-\frac{2m(r(z))}{r(z)}\bigg).
\end{align}
We consider the null geodesics
with nonzero energy, $dz/ds\neq 0$,
so that we can choose the parameter $s$ so that $dz/ds=\signgeod=\pm 1$.
This means that $z$ is an affine parameter of the radial null geodesics,
and therefore
the (affine) distance from any point to a point moving
towards %
a double root of $V(r)$
goes to infinity. It is straightforward to check that
the same holds for timelike and spacelike geodesics
[the culprit in all cases is the appearance of a $r'$, and thus a
  $\sqrt{-2 V(r)}$, in the denominator of the integral for the proper time
  or affine parameter on each case, see Appendix~\ref{app.tortoise}]. 
In consequence, any double root of $V(r)$ %
truly represents an ``infinity'' in our manifold,
in the sense that geodesics reach those points at
  infinite values of the affine parameter. %

In order to establish the character of those infinities,
  one can follow the usual procedure for the construction
  of Penrose diagrams in terms of the study of the zeroes
  of the relevant metric functions %
  in some suitable coordinates (see, e.g., Refs. \cite{Brill1994,notes_jose}).
Here we provide a summary of the main results,
and leave the more detailed analysis to Appendix~\ref{app.tortoise}:
\begin{itemize}
\item The zeros $r=2m(r)$ of $G(r)$ show the usual isolated horizon structure
  in both the nonextremal (simple root) and the extremal (double root) cases.
  Let us recall that a triple root is prevented by the existence of $r_0$.
\item If $r=2\lbar m(r)$ is a simple root of $V(r)$, then it is a minimal spacelike hypersurface. This applies to all critical points ($r_0$ and $\rmax$) present in the generic cases $C_1$, $C_2$, and $C_3$, as well as to $\rmax$ in the degenerate case $D_1$.
\item If $r=2\lbar m(r)$ is a double root of $V(r)$, then it represents a null future or past boundary at infinity.
This applies to $r_0$ in the cases $D_1$ and $D_3$.
\end{itemize}

With all this information at hand, and that from the previous subsection,
one can produce the Penrose diagrams for each singularity-free
case, which are shown in Appendix~\ref{sec_app_conformal}.

\section{Conclusions}\label{sec_conclusions}

In this article we extend the vacuum spherical black-hole model with
holonomy corrections presented in
Refs. \cite{Alonso-Bardaji:2021yls,Alonso-Bardaji:2022ear} to
incorporate charge and a cosmological constant. We do so by
considering a canonical transformation of phase-space variables plus a
regularization of the GR Hamiltonian, under the condition that the
hypersurface deformation algebra closes.  The structure function in
the commutation relation between two Hamiltonian constraints is shown
to have the correct transformation properties to embed the $3+1$
theory in a four-dimensional spacetime. This fact allows us to
construct the corresponding metric tensor in terms of phase-space functions in
a completely unambiguous and covariant way.
 
After solving the equations of motion, we obtain that the resulting
metric is described in terms of four free parameters: the mass $M$,
the charge $Q$, and the cosmological constant $\Lambda$, which already
appear in the GR solution, plus the additional parameter
$\lbar\in(0,1)$, which measures the departure of the
effective model from GR (which is recovered in the limit
$\lbar\to 0$). This bounded constant is directly related to the
polymerization parameter of the holonomy corrections. Remarkably,
the Minkowski geometry is an exact solution for $M=Q=\Lambda=0$, and any value of $\lbar$.

For certain values of the parameters,
the singularity that appears in GR at $r=0$ is resolved in this
model by the appearance of a finite minimum $r_0$ for the area-radius
function $r$. More precisely, our results
concerning the singularity resolution
can be stated in a very compact way as follows:
\\

{\it
Given the parameters $(M,Q,\Lambda,\lbar)$,
this model provides a solution that avoids the singularity at $r=0$,
if and only if the singularity of the GR solution with
parameters $(\lbar M,\sqrt{\lbar}Q,\lbar\Lambda)$ is not naked.
}\\

 However, the specific form of the holonomy corrections
 produces curvature divergences as $r\to\infty$, whenever the cosmological
constant is not zero. Quantum-gravity corrections are in principle
expected to be relevant near the central singularity, though this
model also shows large effects at cosmological scales.  Nonetheless,
depending on the sign of the cosmological constant, the model still
provides a physical description consistent with observations.  On the
one hand, for a positive cosmological constant, the equations describe
a cyclic cosmological evolution in de Sitter-like regions, and the
curvature divergence at $r\to\infty$ is never reached due to the
presence of a finite maximum $\rmax$ of the area-radius function.
On the other hand, such bound does not appear in the cases with a
negative cosmological constant, and thus the model breaks down in the
asymptotic regions of Reissner-Nordstr\"om-anti-de Sitter black holes.

We explicitly characterize the existence of the extrema of the area-radius
function, $r_0$ and $\rmax$, in terms of the values of the parameters
$(M,Q,\Lambda,\lbar)$.  %
The main conclusion is that the family of singularity-free spacetimes of
this model is given by the sets of values
defined by the cases $C_1$, $D_1$, $C_2$, $C_3$,
and $D_3$ in Sec.~\ref{sec_singularity_free}.
In particular, the requirement of regularity of
the solution imposes $M$ and $\Lambda$ to take nonnegative values. In
addition, the charge $Q$ needs to be below a certain maximum
threshold. These conditions suit well any astrophysical black hole, and
therefore we can conclude that this model provides a completely
regular and singularity-free description for any such realistic black hole.

The analysis of the horizon structure of the singularity-free
family of geometries shows that the different cases are subdivided
into two generic families: those that describe black-hole spacetimes, and those that
describe an evolving cosmology. On the one hand, for the black-hole solutions,
$r=r_0$ lies behind a horizon and, if $\Lambda>0$, a maximum value of the area-radius function $\rmax$
exists, and it is located beyond a cosmological horizon. For $\Lambda=0$
the spacetime is asymptotically flat and there is no $\rmax$. In all cases
there is no inner Cauchy horizon. %
On the other hand, the cosmological solutions, which are everywhere spacelike homogeneous, present two kind of behaviors. 
  One class, that we call cosmos,
  does not contain any horizon, and it is thus formed by
  a whole (spatially connected) region that expands and contracts from
  $r_0$ to $\rmax$ in a cyclic manner.
  The other class, which we call extremal, contains degenerate horizons
  that separate homogeneous regions. This extremal class
  corresponds to a limiting case between the black-hole 
and cosmos solutions. This limit is not unique,
  since, as in GR, we show that
  the limits that go extremal (when two or more horizons degenerate)
  provide also the so-called near-horizon geometries.

If points determined by $r=r_0$ and $r=\rmax$ are located in the manifold,
they define minimal spacelike hypersurfaces embedded in
a homogeneous region of the spacetime. In some limiting cases
such points define null infinities.
All these properties can be seen
in Figs.~\ref{fig.C1BH}--\ref{fig.D3}, where we display the Penrose diagrams of all
the singularity-free family of solutions with parameters
  $(M,Q,\Lambda,\lbar)$, apart
  from the near-horizon geometries.
  As it is explicit in the figures, in the cases $C_1$, $C_2$, and $C_3$,
  the causal structures of the solutions with parameters
  $(M,Q,\Lambda,\lbar)$ and $(M,Q=0,\Lambda,\lbar)$
  coincide.
  
  As an interesting application, let us remark
  that any spherical astrophysical black hole, that is, with a relatively
large mass, small charge and embedded in a universe with a small
positive cosmological constant, is described in this model
by the conformal
diagram shown in Fig. \ref{fig.C1BH}. 
In this case, any particle that crosses the
``black-hole'' horizon $r_H$ ends up emerging through a ``white-hole'' horizon
$r_H$ in the finite future.
The minimal hypersurface $r=r_0$ simply implies a minimum value of the scalar
$r$. The same applies to any particle crossing the cosmological
horizon $r_C$: it reaches a maximum value $\rmax$ of the area-radius
function, and then goes back to smaller radii and crosses another horizon $r_C$.

\section*{ACKNOWLEDGEMENTS}
A.~A.~B. acknowledges financial support by the FPI Fellowship \mbox{PRE2018-086516} funded by\linebreak MCIN/AEI/10.13039/501100011033 and by ``ESF Investing in your future''. 
This work was supported by the Basque Government Grant \mbox{IT1628-22}, and by the Grant PID2021-123226NB-I00 (funded by\linebreak MCIN/AEI/10.13039/501100011033 and by “ERDF A way of making Europe”).

\appendix

\section{Conformal diagrams}\label{sec_app_conformal}
In this Appendix we show the conformal diagrams of the singularity-free
family of spacetimes given by the set of values of the parameters defined
in cases $C_1$, $D_1$, $C_2$, $C_3$, and $D_3$.
Along with each diagram we present a schematic plot
where we display in blue the polynomial $P(r,\lbar)$
and in red the curve $(\lbar-1)r^\ell$, with $\ell=1, 2$
for $Q=0$ and $Q\neq 0$, respectively.
The zeros of the polynomial
define the critical values $R$, $r_0$, and $\rmax$,
while the intersection between the polynomial and the red curve
marks the presence of a horizon, that is, a root of $G(r)$,
because, from \eqref{id.P}, we have
$G(r)=0\iff  P(r,\lbar)=(\lbar-1)r^\ell$ for all $r>0$.

We follow the usual conventions for the Penrose diagrams.
The main diagrams correspond to the domain
$\mathcal{U}$, covered by the coordinates $(\tau, z)\in \mathbb{R}^2$,
in which the metric is given by \eqref{metric:tau_x}.
We also depict a maximal analytic extension $\mathcal{M}$
following the usual periodic construction.
Both the domain $\mathcal{U}$ in the diagrams, and the corresponding range
of values of $r$ in the function plots 
are represented by a gray background color.
Likewise, in diagrams and plots,
horizons are depicted as red lines, and critical (minimal) hypersurfaces
as purple lines. Lines are continuous when covered by $\mathcal{U}$ and
dotted when not. Long-dashed lines correspond to (null) infinities,
all at $z\to \pm \infty$: dark gray for asymptotic ends ($r\to\infty$),
and purple for ``minimal'' ends  ($r\to r_0$).
The small rings denote holes in the diagram, and correspond to either
timelike ($i^{\pm}$) or spacelike ($i^0$) infinities.
Some hypersurfaces of constant $z$,
and thus of constant $r$, are represented by white thin curves,
and that serves to indicate the static or homogeneous character of the corresponding regions.

Some diagrams cannot be depicted over a simple sheet of paper because,
  at the intersections of horizons and critical hypersurfaces,
  the diagram bifurcates. Observers coming from the right or left
  of such a bifurcation will have nonintersecting futures. Take, for instance,
  the right extreme of the $r=r_0$ line in Fig.~\ref{fig.C1BH}: the observer
  crossing $r=r_0$ and the observer crossing $r=r_\infty$ will end up in disconnected
  future regions.
  That fact is indicated in the drawings by a shadowed
overlapping of some parts of the diagram over others.

\begin{figure}
  \centering
    \includegraphics[width=0.7\textwidth]{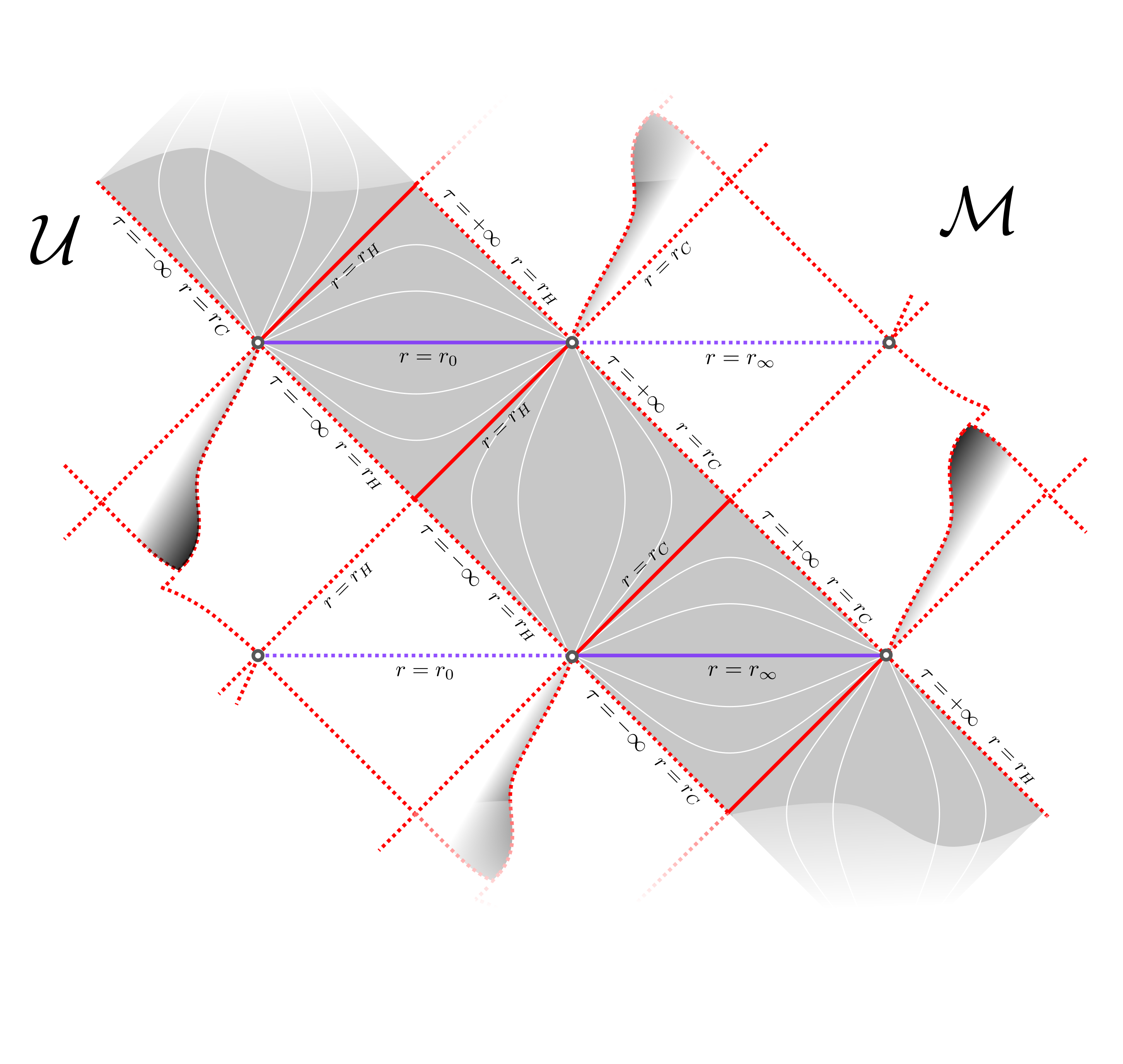}
  \hfill
  \begin{minipage}[b]{0.25\textwidth}
    \includegraphics[width=\textwidth]{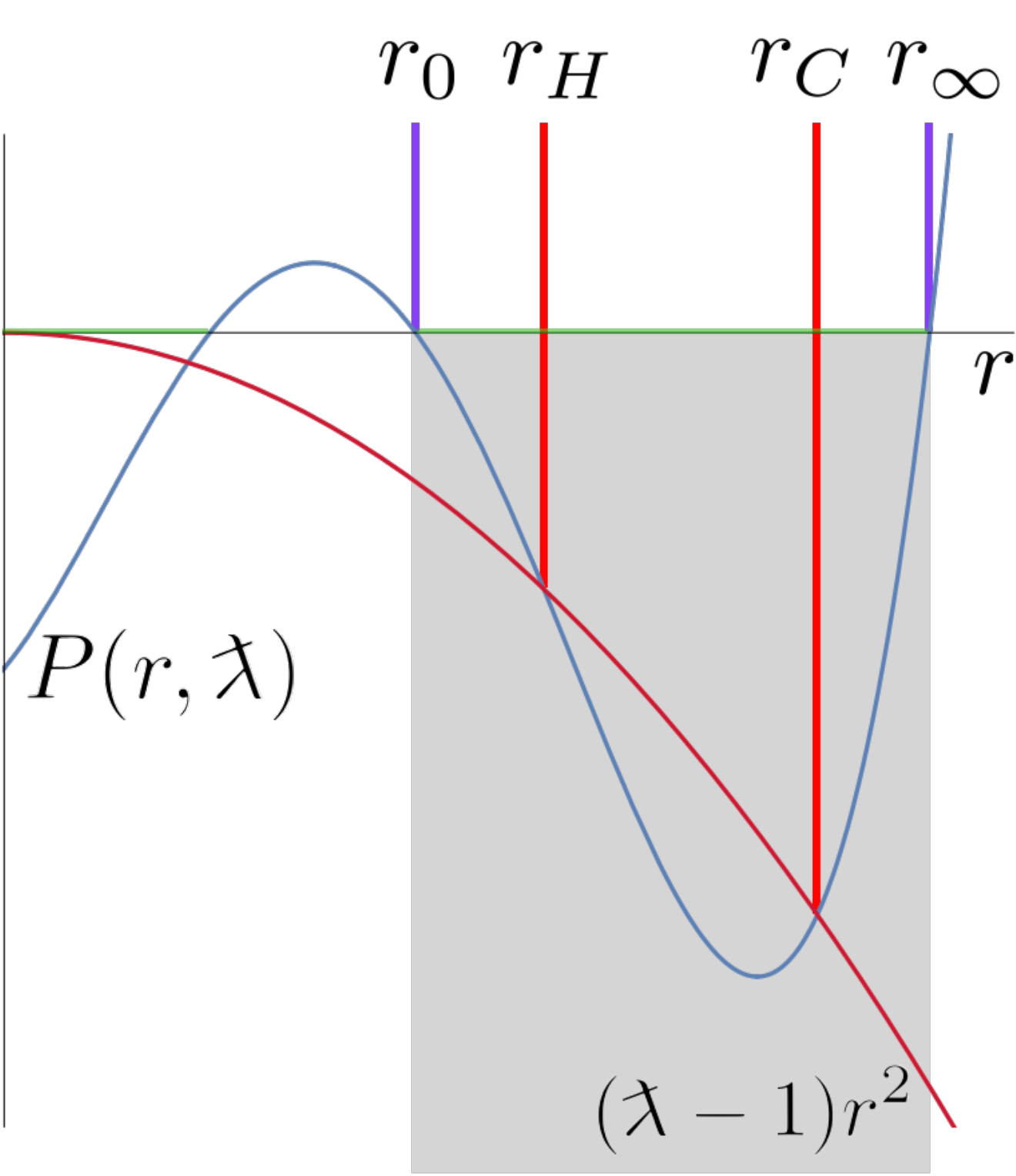}\\
    \centering
    \footnotesize{$C_{1}^{\rm BH}$}\\
    \vspace{10mm}
    \includegraphics[width=\textwidth]{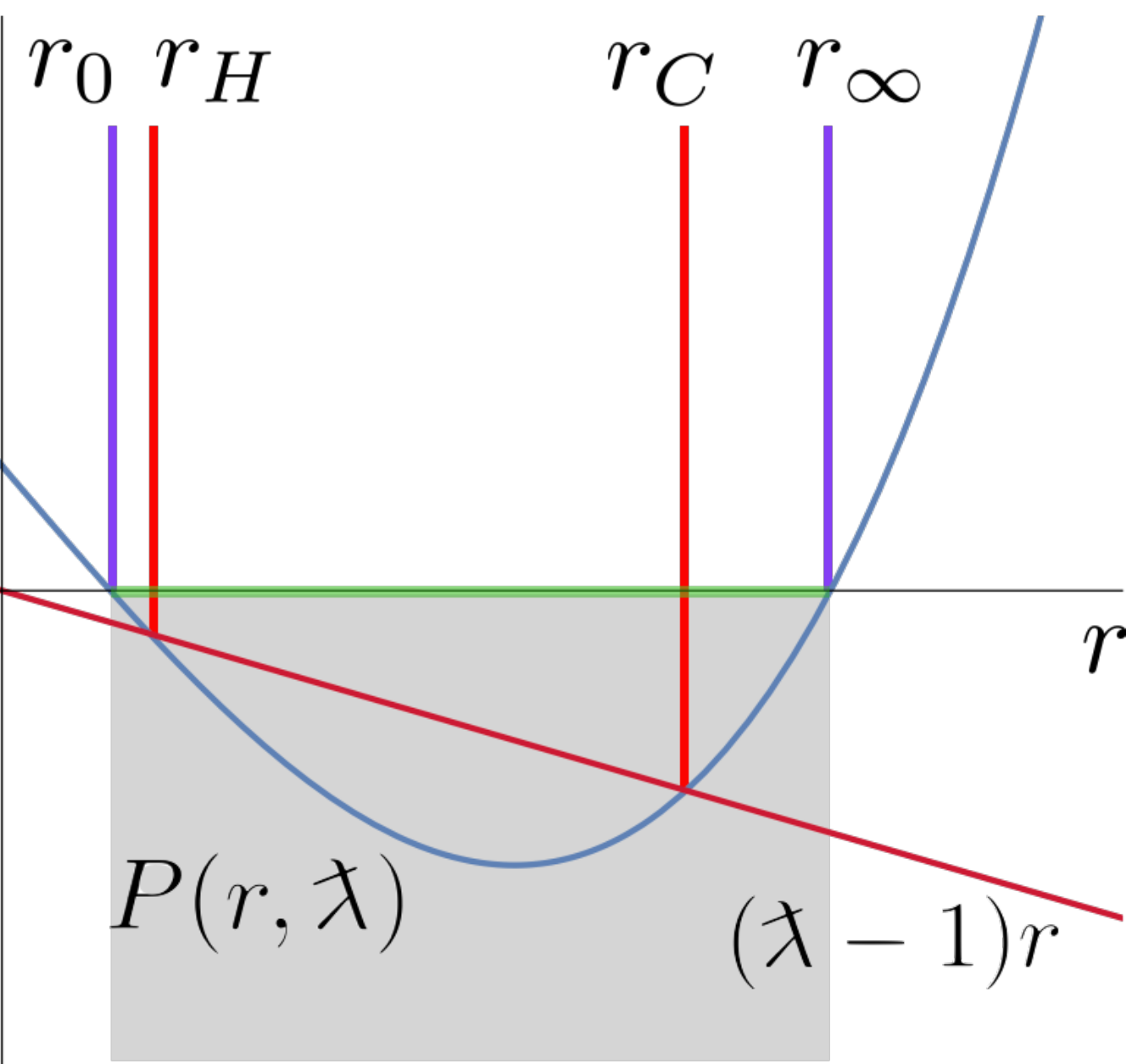}
     \footnotesize{$C_{2}^{\rm BH}$}\\
  \end{minipage}   
\caption{On the left we present the conformal diagram for the cases $C_1^{\rm BH}$ and $C_2^{\rm BH}$,
    which have the same global structure.
  The domain ${\cal U}$ (in gray)
  is an infinite and periodic stripe,
  and ${\cal M}$ is built up with infinite copies of ${\cal U}$ along all directions
  conveniently layering up around each gray ring.
   The plot at the top right corresponds to $C_1^{\rm BH}$ and the plot at the bottom right corresponds to $C_2^{\rm BH}$.
  These geometries modify
  the singular Reissner-Nordstr\"om-de Sitter spacetime in GR. The minimal hypersurface
  $r=r_0$ replaces the singularity structure present in GR, while the finite value $r_\infty$
  substitutes both $\mathcal{J}^+$ and $\mathcal{J}^-$.
  The gray rings correspond both to $i^+$ and $i^-$:
    the former for the region below and the latter for the region above.
  Radial infalling observers cross consecutive
 horizons and critical hypersurfaces in a finite proper time. Beginning in a static
  region at a given radius $r$, they will first traverse $r_H$,
  and then subsequently cross $r_0$, $r_H$, $r_C$, $r_\infty$, $r_C$, until reaching again their initial $r$
  in a different static region. This process will be repeated indefinitely.
  Accelerated observers, on the other hand, may choose to stay at their original
  static region, never crossing $r_H$ or $r_C$, and ending up in the corresponding infinite future $i^+$. 
  }
  \label{fig.C1BH}
\end{figure}
\begin{figure}
  \centering
  \includegraphics[width=0.7\textwidth]{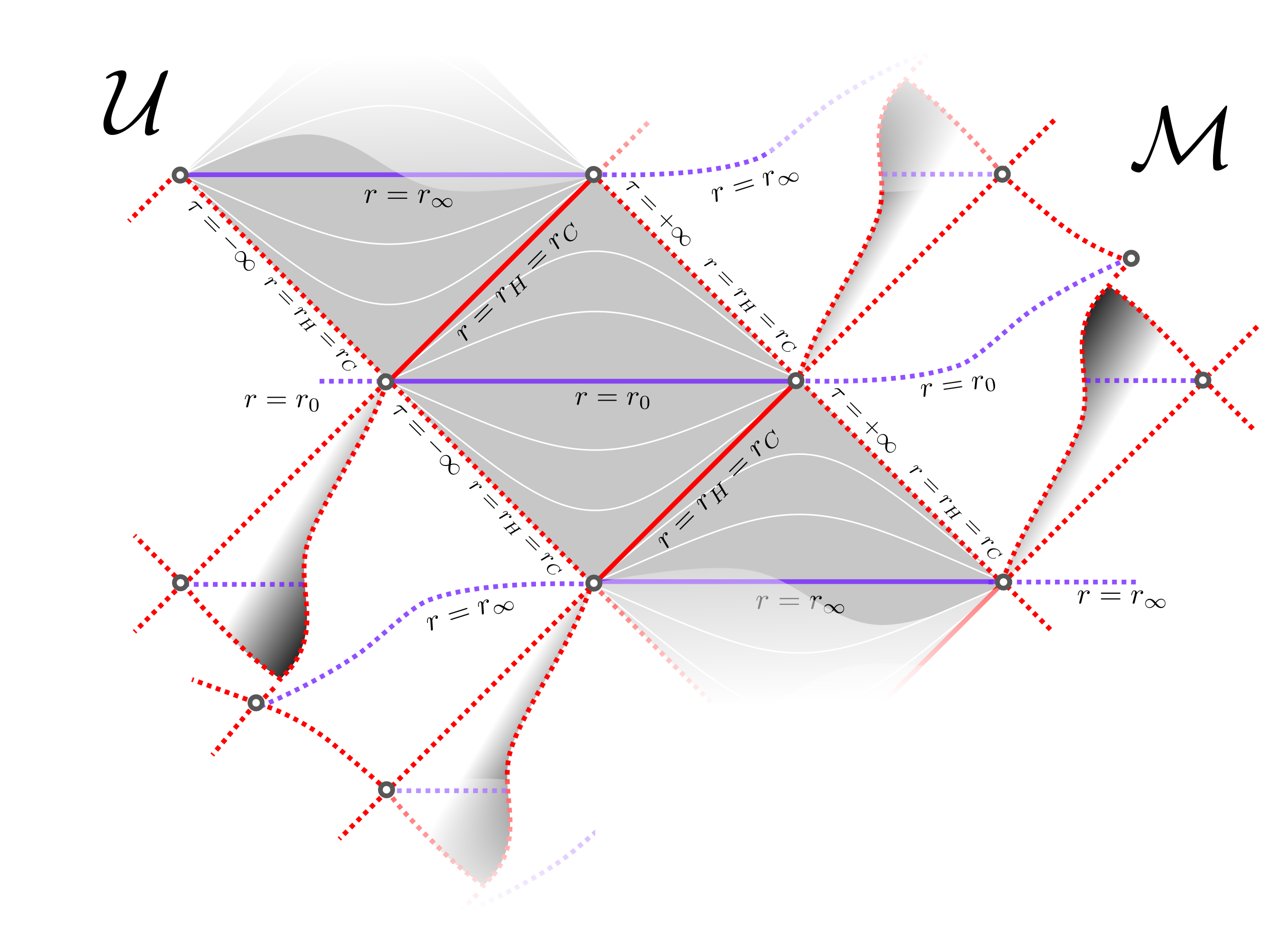}
  \hfill
  \begin{minipage}[b]{0.25\textwidth}
    \includegraphics[width=\textwidth]{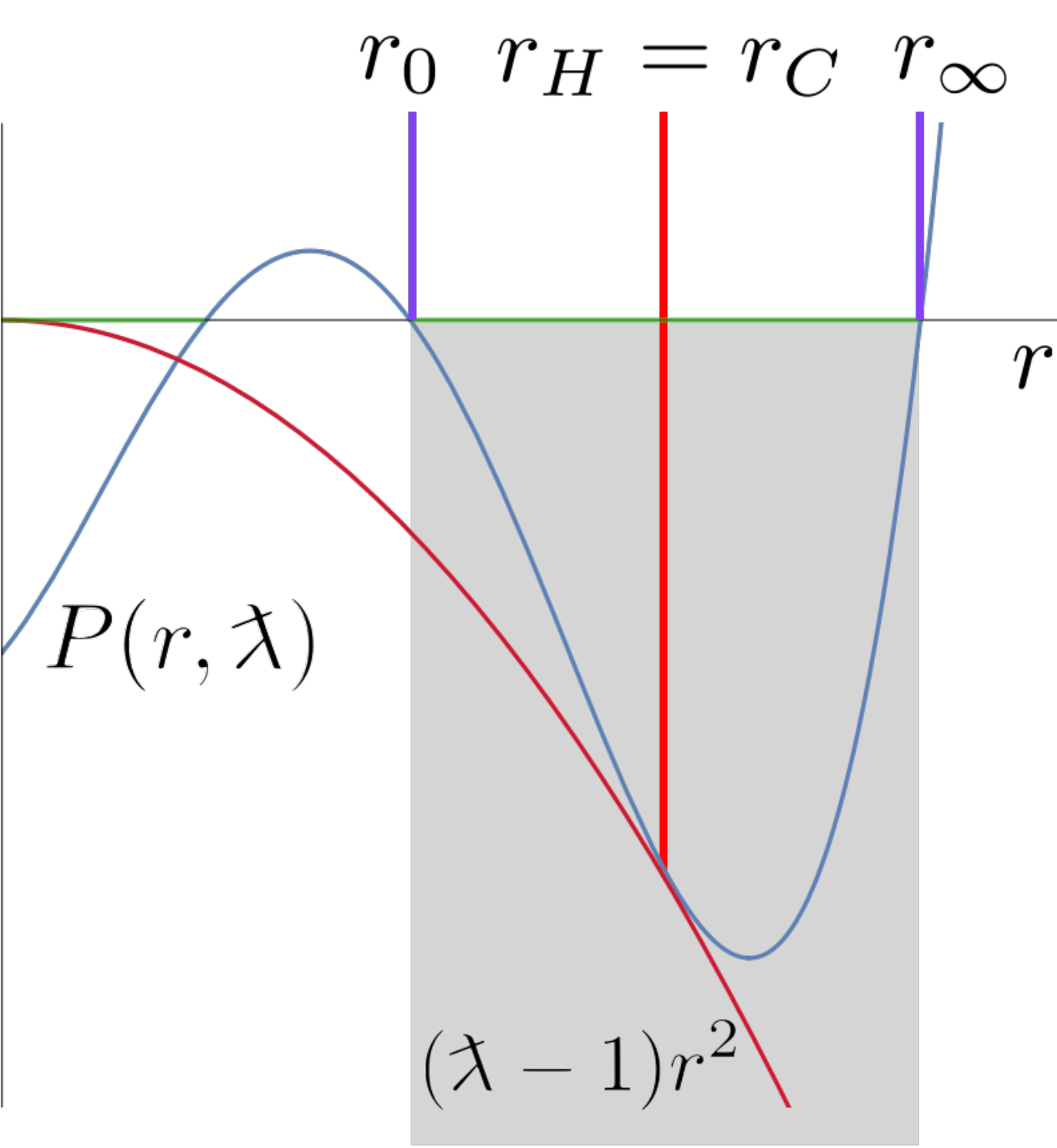}\\
    \centering
    \footnotesize{$C_{1}^{\rm extremal}$}\\
    \vspace{2mm}
    \includegraphics[width=\textwidth]{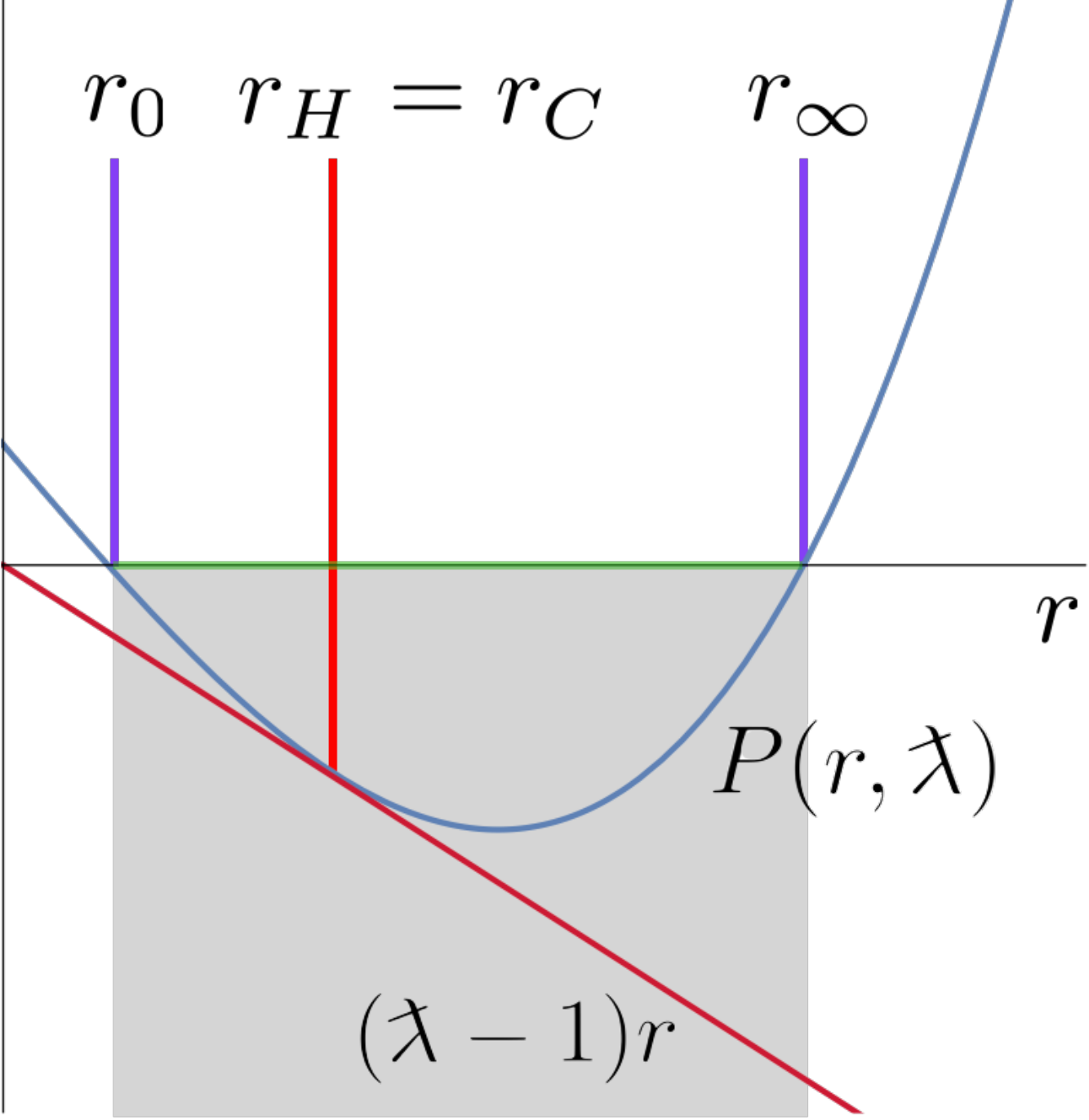}\\
    \footnotesize{$C_{2}^{\rm extremal}$}\\
  \end{minipage}   
  \caption{Cases $C_1^{\rm extremal}$ and $C_2^{\rm extremal}$.
    In these cases the horizons $r_H$ and $r_C$
    coincide, becoming a degenerate horizon bounding homogeneous regions. There are no static regions.}
  \label{fig.C1eC}
\end{figure}

\begin{figure}
  \centering
  \includegraphics[width=0.7\textwidth]{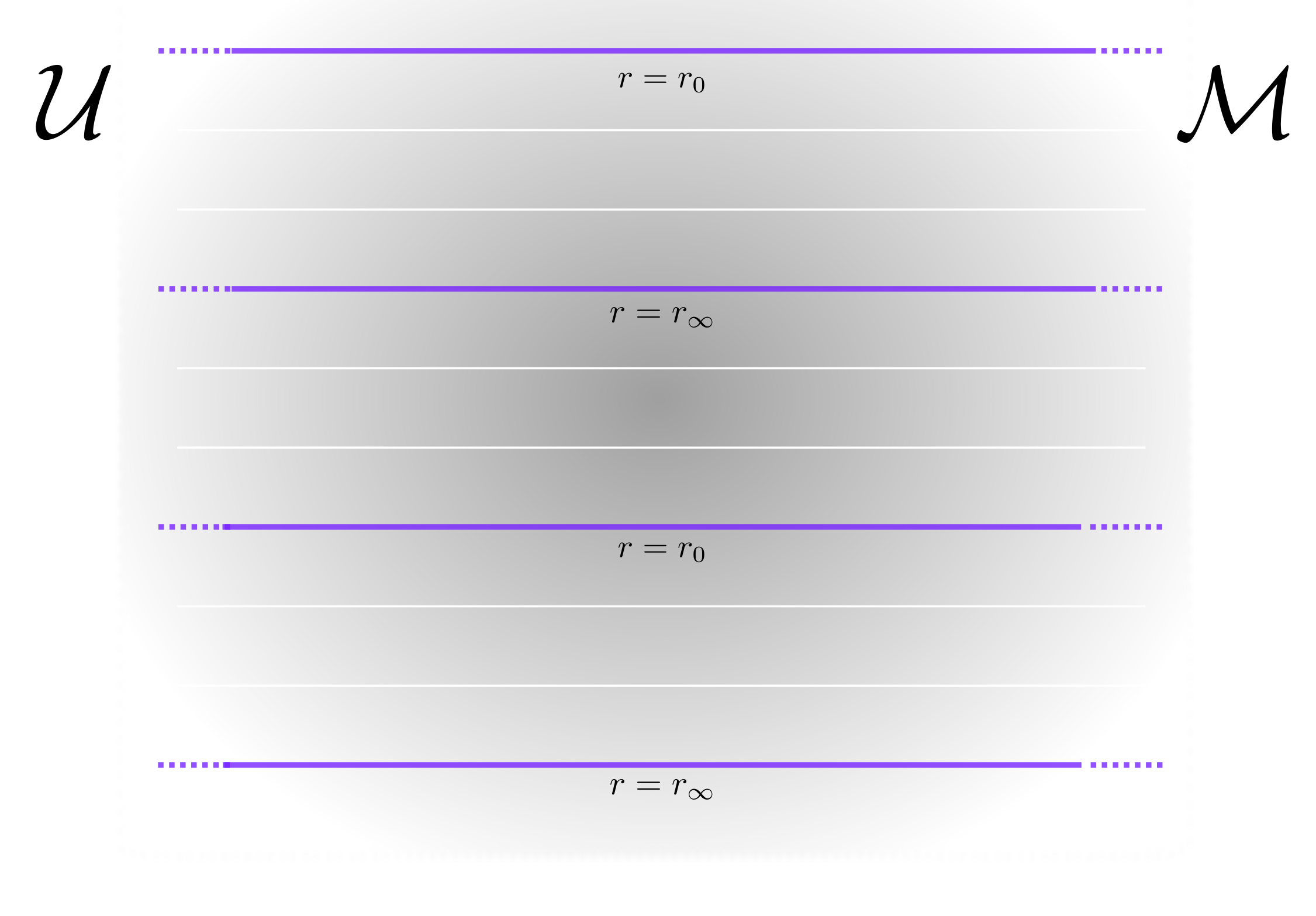}
  \hfill
  \begin{minipage}[b]{0.25\textwidth}
    \includegraphics[width=\textwidth]{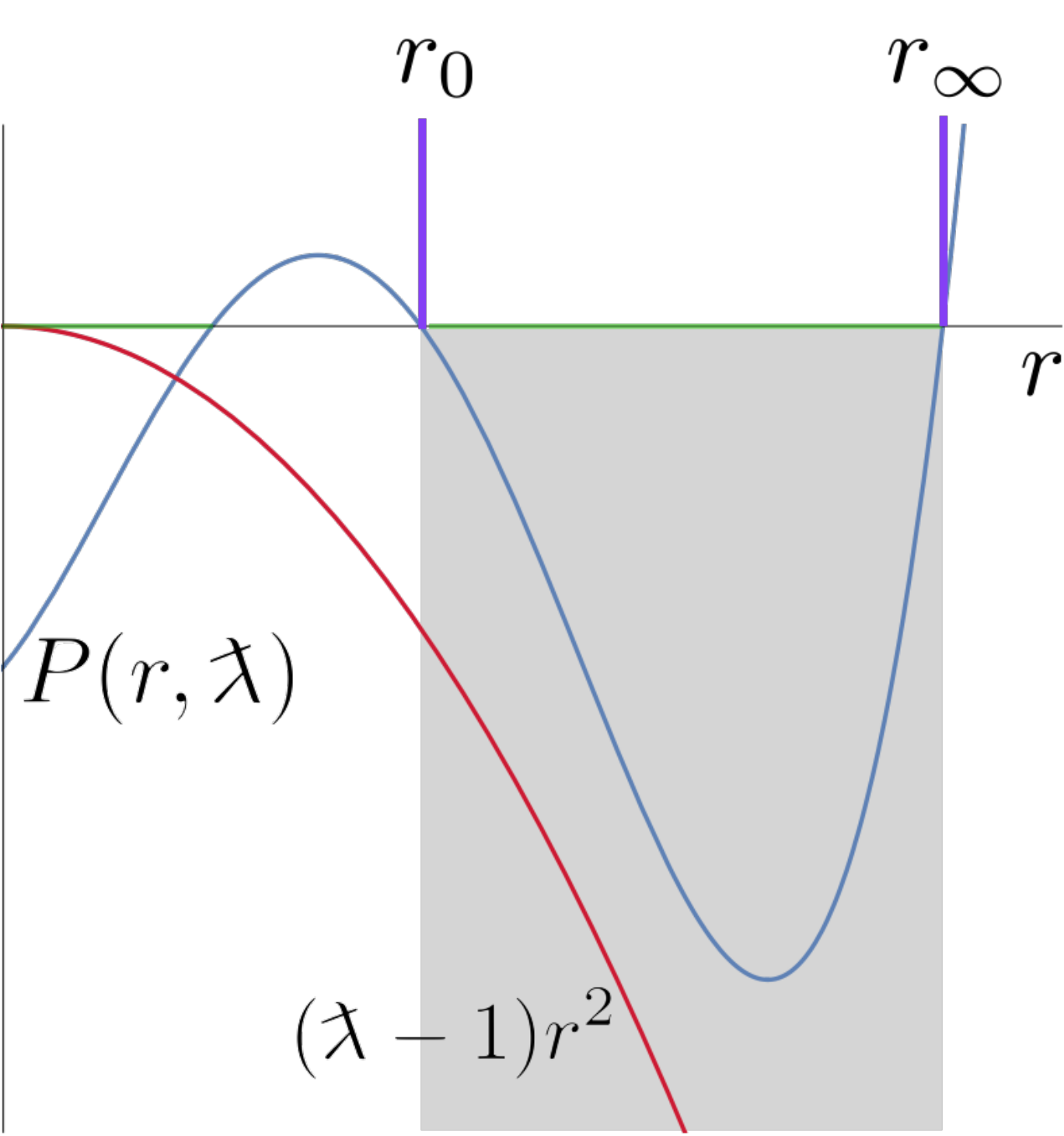}\\
    \centering
    \footnotesize{$C_{1}^{\rm cosmos}$}\\
    \vspace{2mm}
    \includegraphics[width=\textwidth]{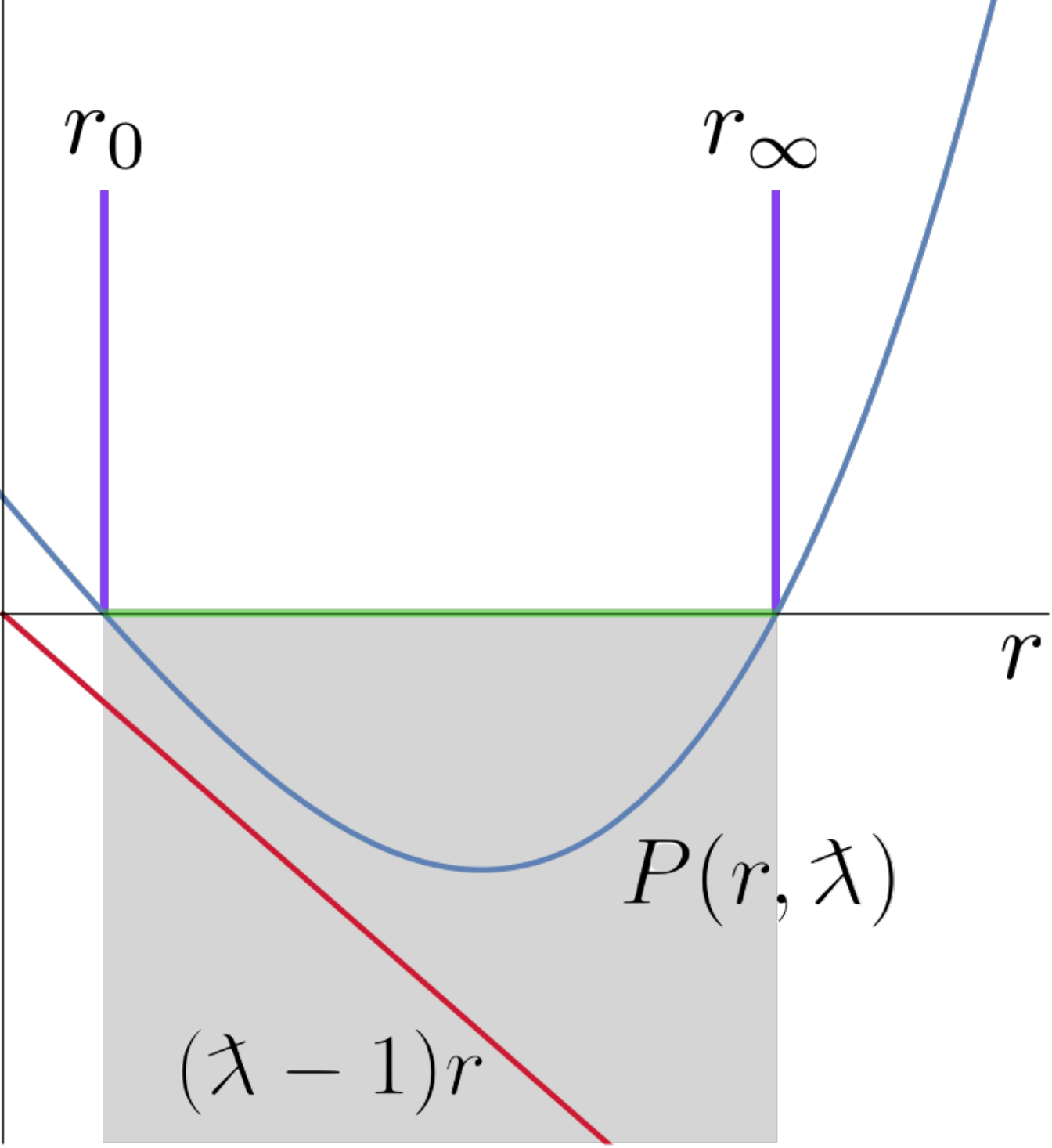}\\
    \footnotesize{$C_{2}^{\rm cosmos}$}\\
  \end{minipage}   
  \caption{Cases $C_1^{\rm cosmos}$ and $C_2^{\rm cosmos}$. There are no horizons.
    These cases represent cyclic (Kantowski-Sachs) cosmologies
    that expand to a hypersurface foliated
    by spheres of area $4\pi r_\infty^2$,
    and then contract to a hypersurface foliated by spheres of area
    $4\pi r_0^2$.}
  \label{fig.C1C}
\end{figure}

\begin{figure}
  \centering
  \includegraphics[width=0.7\textwidth]{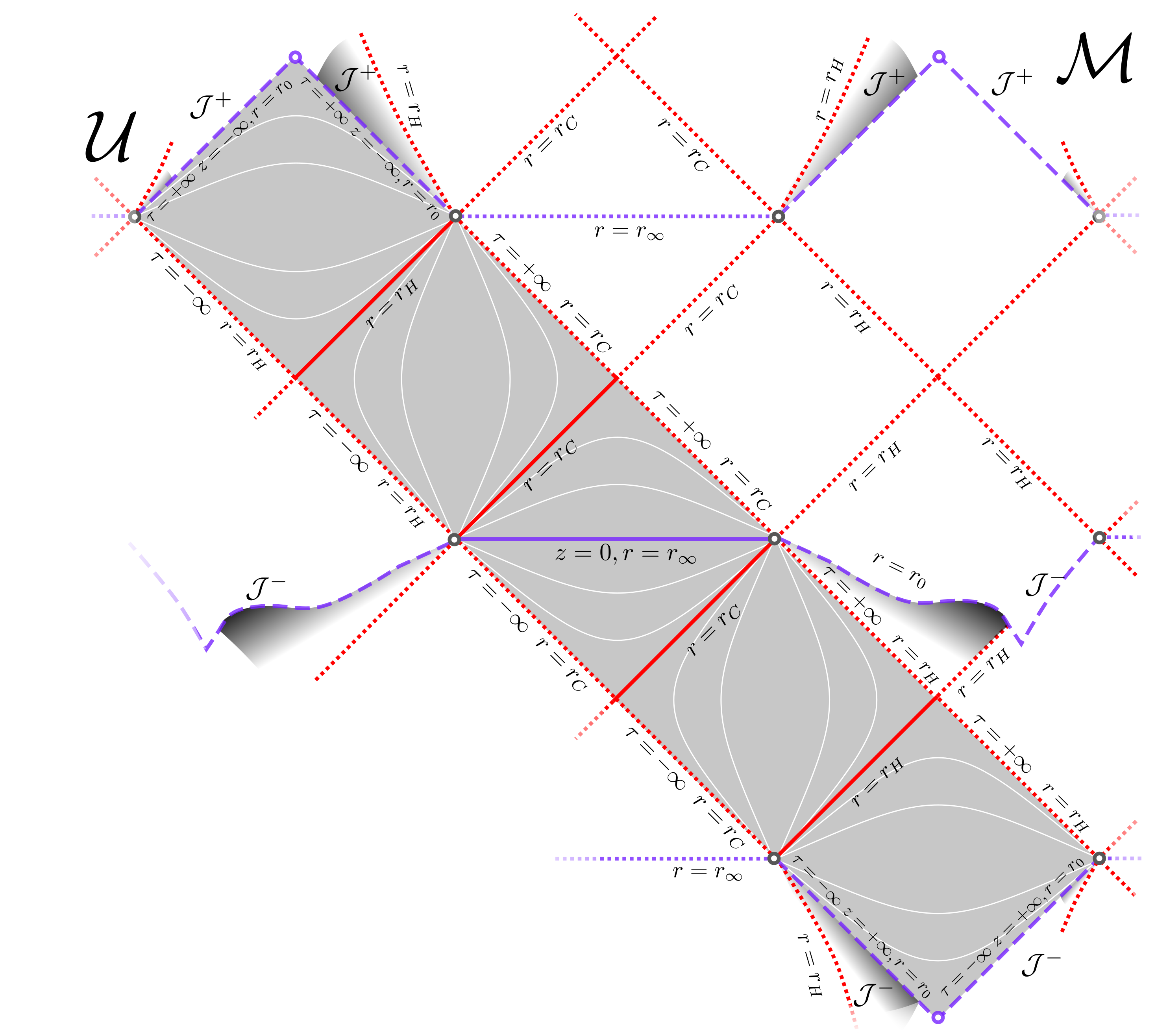}
  \hfill
  \begin{minipage}[b]{0.25\textwidth}
    \includegraphics[width=\textwidth]{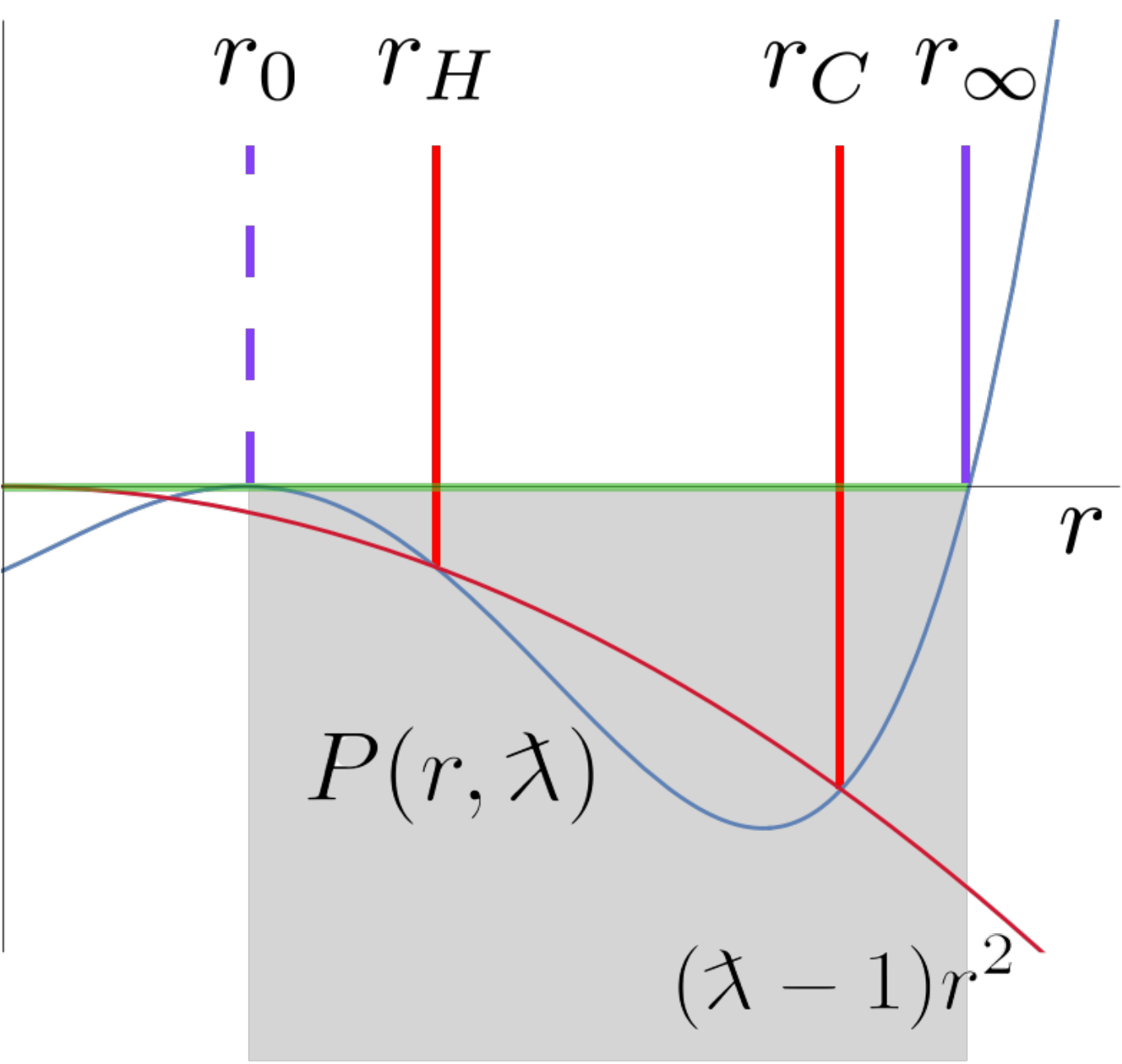}
  \end{minipage}   
  \caption{Case $D_1^{\rm BH}$. This diagram is similar to that of the case $C_1^{\rm BH}$,
    in Fig.\ref{fig.C1BH}.
      The difference is that now the diagram for $\mathcal{U}$ is a finite
    rectangle, because the locations defined by $r=r_0$ correspond to
      past (for $\tau\to-\infty,z\to+\infty$) or future
      (for $\tau\to +\infty, z\to -\infty$)
      null infinity. Observe that these infinities are not approached
      as $r$ goes to infinity, but as the spacelike homogeneous leaves
      tend to (but never reach) a minimal hypersurface foliated, in turn, by spheres of radius
      $r_0$.
      The origin of the coordinate $z$ has been chosen so that $r(0)=r_\infty$
    (see Sec.\ref{sec_existence_rinfinity}).
  }
  \label{fig.D1}
\end{figure}

\begin{figure}
  \centering
  \includegraphics[width=0.6\textwidth]{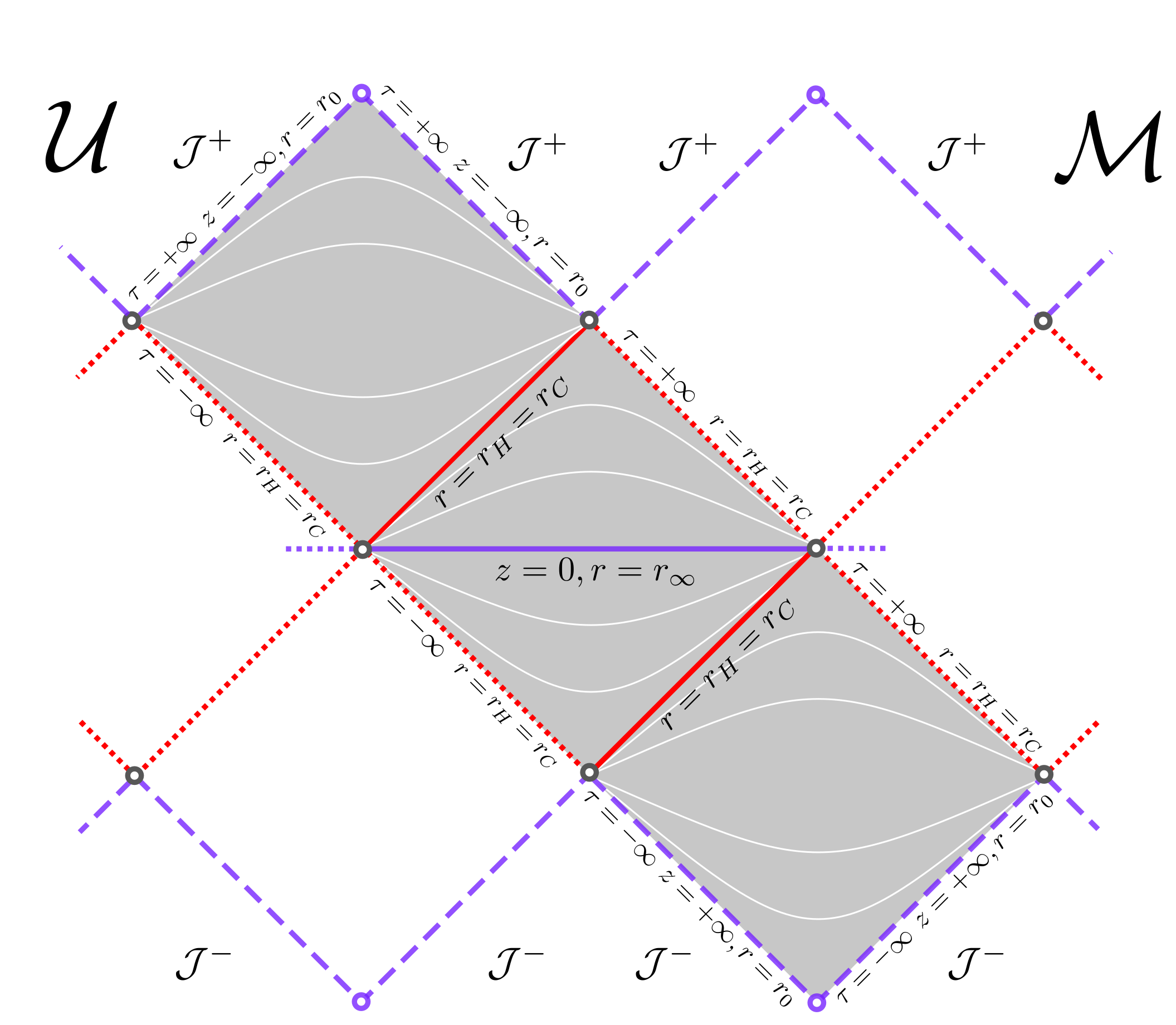}
  \hfill
  \begin{minipage}[b]{0.25\textwidth}
    \includegraphics[width=\textwidth]{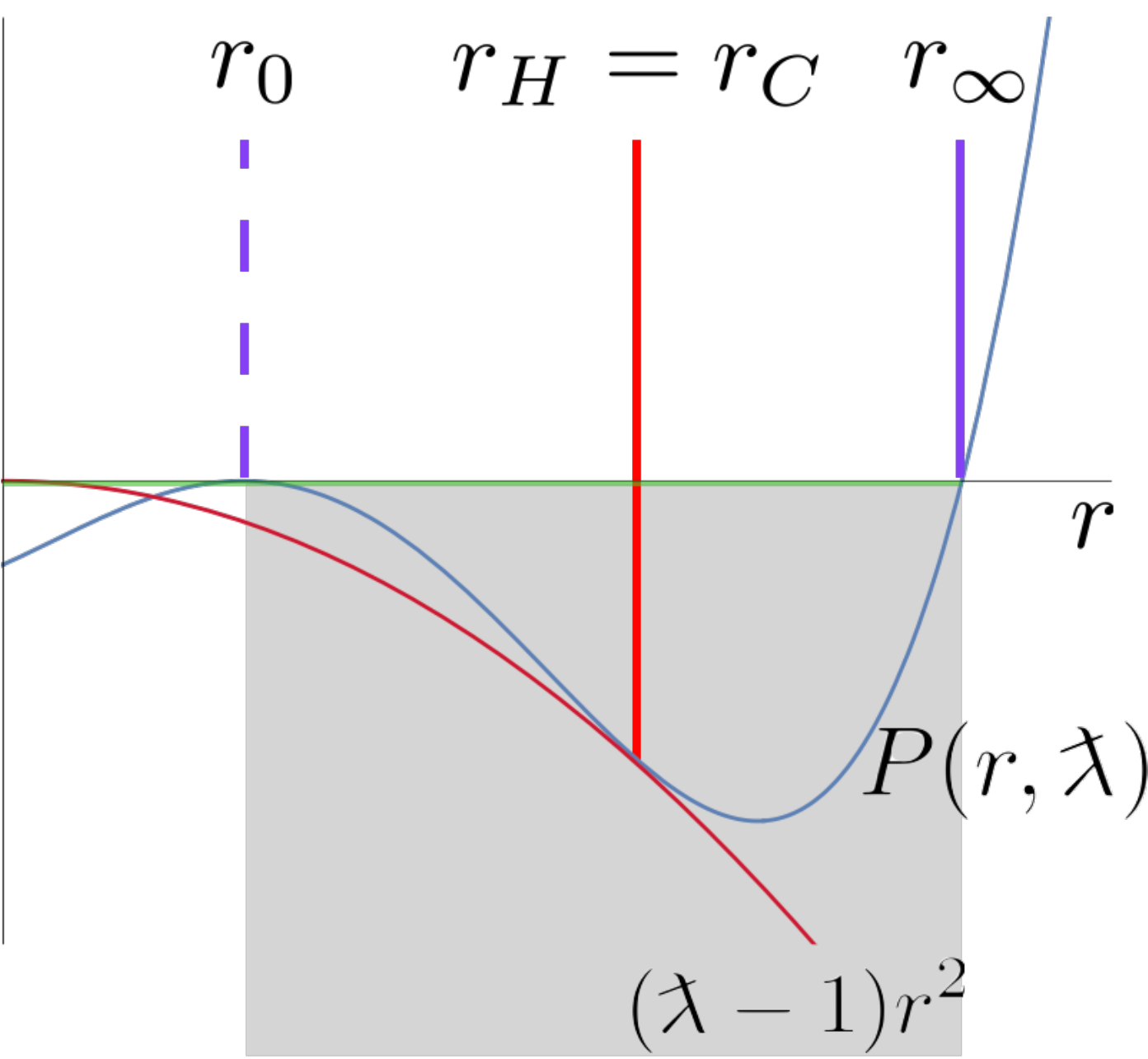}
  \end{minipage}   
  \caption{Case $D_1^{\rm extremal}$. Diagram similar to Fig.\ref{fig.D1},
      but now the horizons $r_H$ and $r_C$ degenerate, and bound homogeneous regions.
    }
  \label{fig.D1eC}
\end{figure}

\begin{figure}
  \centering
  \includegraphics[width=0.45\textwidth]{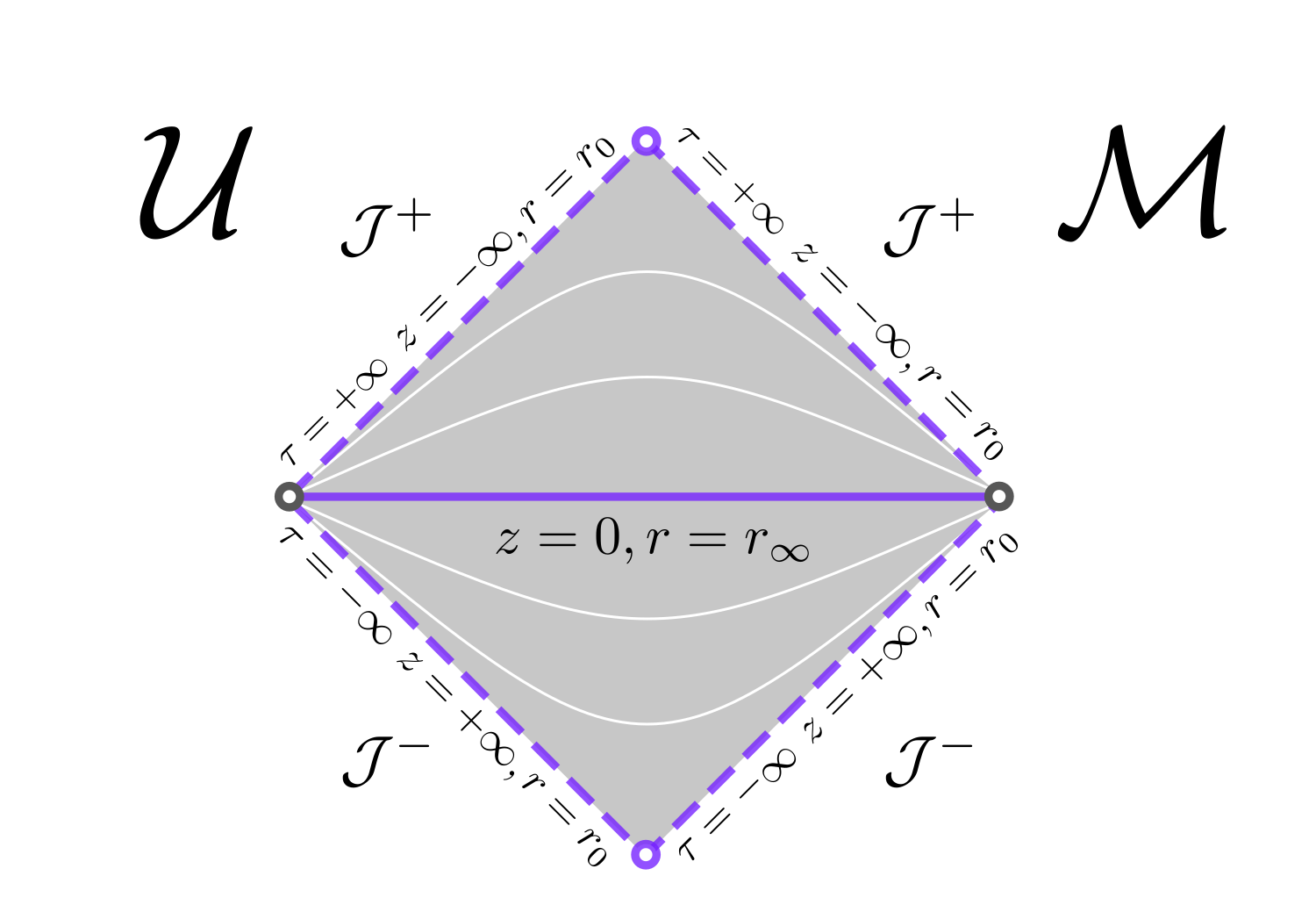}
  \hfill
  \begin{minipage}[b]{0.25\textwidth}
    \includegraphics[width=\textwidth]{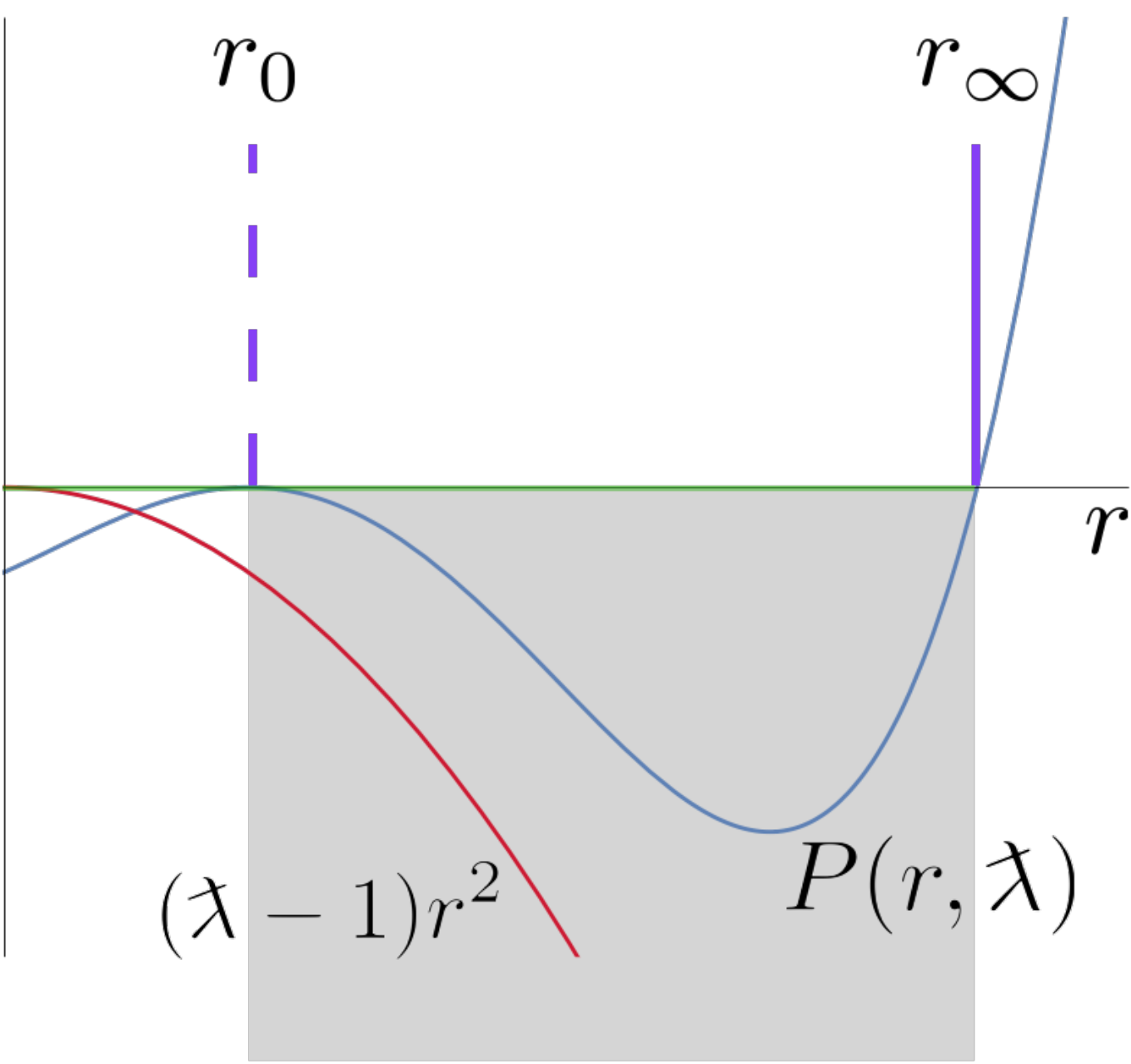}
  \end{minipage}   
  \caption{Case $D_1^{\rm cosmos}$. As in Fig.~\ref{fig.C1C}, there are no horizons.
      The difference with that case is that $r=r_0$ corresponds to past and future null infinities,
      as in Fig.\ref{fig.D1}.}
  \label{fig.D1C}
\end{figure}

\begin{figure}
  \centering
  \includegraphics[width=0.6\textwidth]{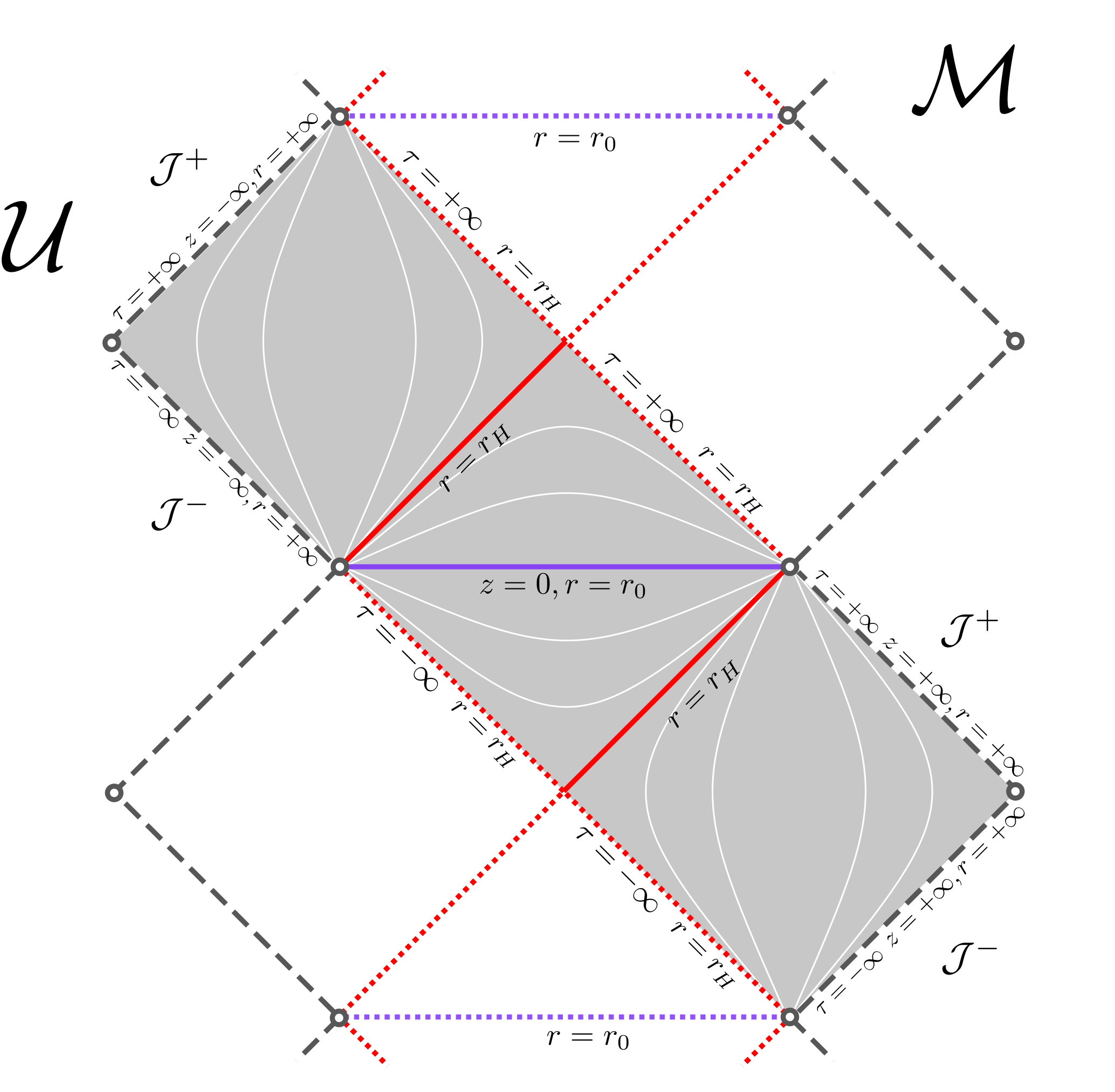}
  \hfill
  \begin{minipage}[b]{0.25\textwidth}
    \includegraphics[width=\textwidth]{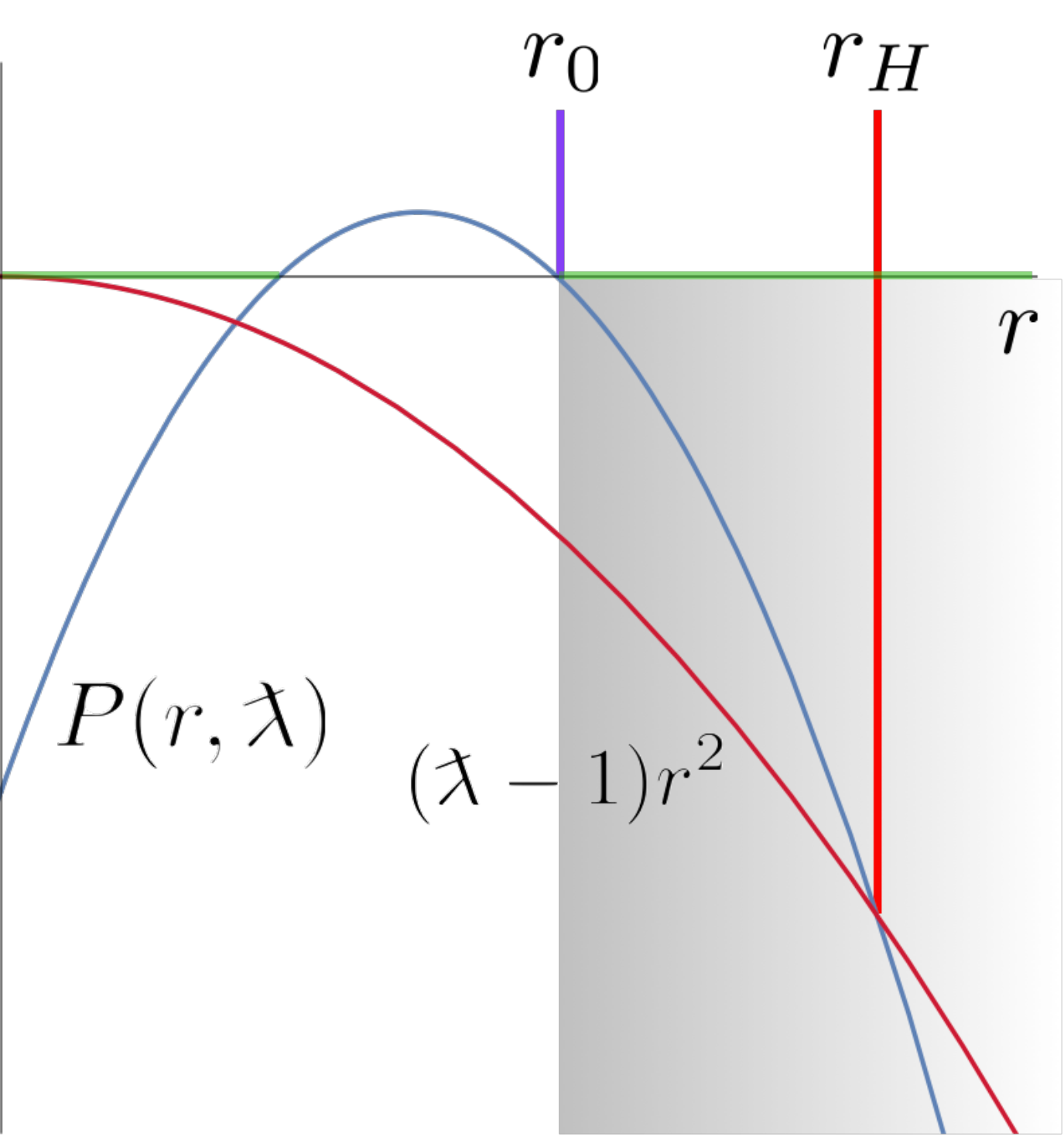}\\
    \centering
    \footnotesize{$C_{3}$, $Q\neq 0$}\\
    \vspace{2mm}
    \includegraphics[width=\textwidth]{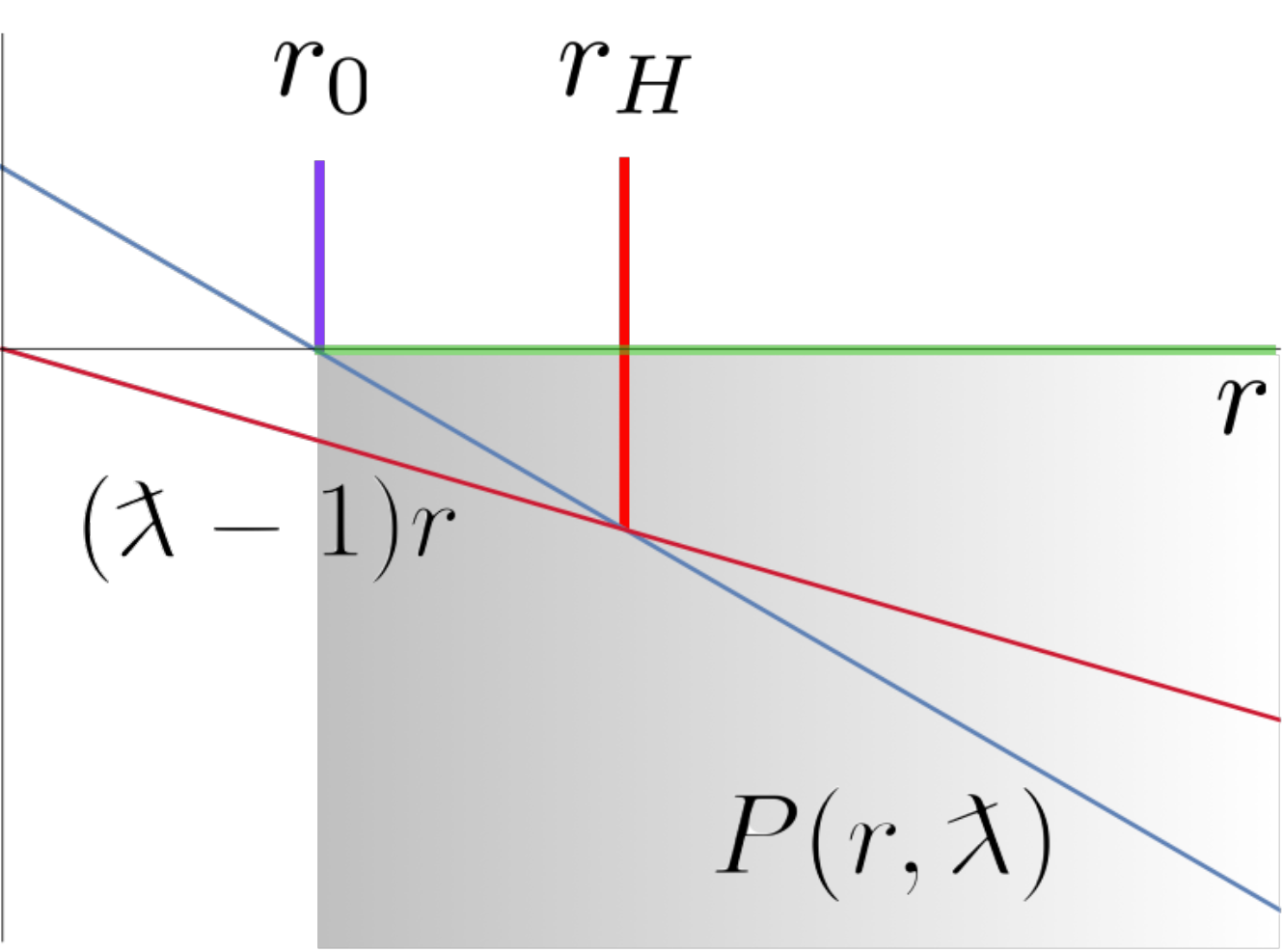}\\
    \footnotesize{$C_{3}$, $Q = 0$}\\
  \end{minipage}   
  \caption{Case $C_3$, which includes both $Q\neq 0$ (top-right plot) and $Q=0$ (bottom-right plot).
    Since $\Lambda=0$ in this case, we recover the usual asymptotic flat ends.
    We recover the conformal diagram for vacuum (with $M>0$ and $\Lambda=Q=0$)
    shown in Refs. \cite{Alonso-Bardaji:2021yls,Alonso-Bardaji:2022ear}.}
  \label{fig.C3}
\end{figure}

\begin{figure}
  \centering
  \includegraphics[width=0.6\textwidth]{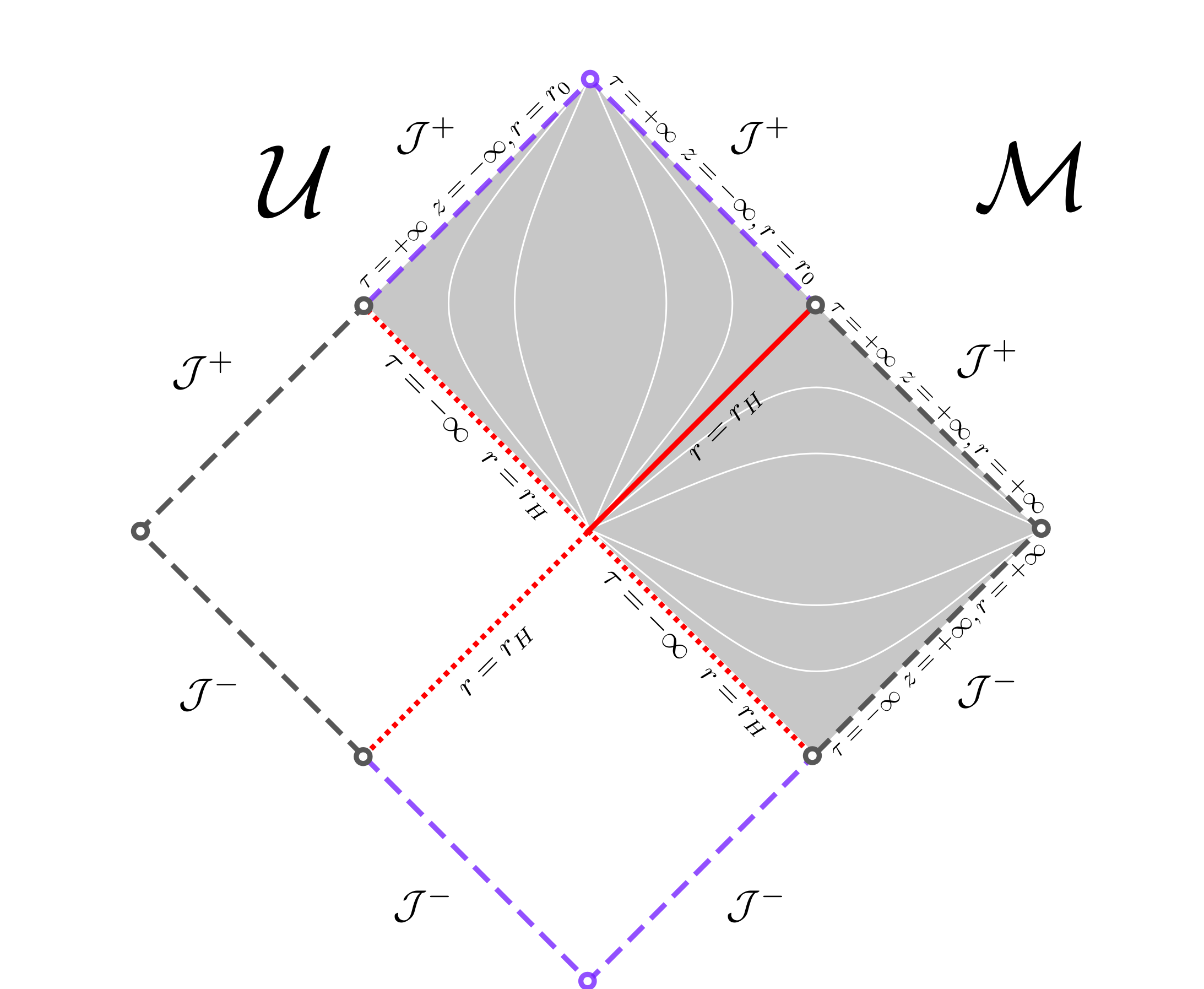}
  \hfill
  \begin{minipage}[b]{0.25\textwidth}
    \includegraphics[width=\textwidth]{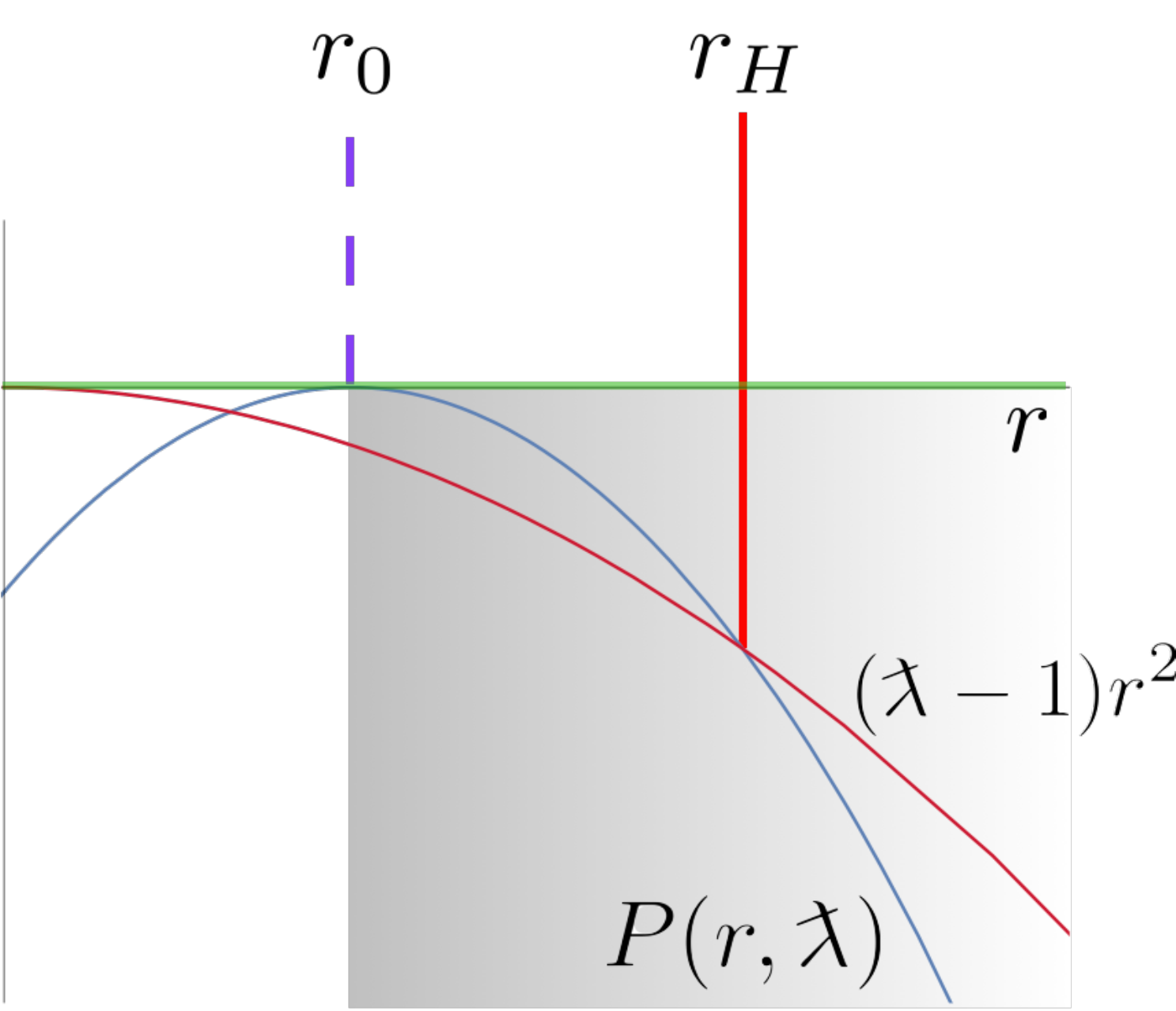}
  \end{minipage}   
  \caption{Case $D_3$. This is the limit of case $C_3$, depicted
      in Fig.\ref{fig.C3},
      with $Q=\sqrt{\lbar} M$, so that $r=r_0$ becomes null infinity (see Fig.\ref{fig.D1}).}
  \label{fig.D3}
\end{figure}

\section{Correspondence between coordinate and gauge transformations}\label{app.covariance}

In this Appendix we show that Eq. \eqref{eq.gaugetransf} is satisfied.
On the one hand, the Poisson brackets between $(1/F)$ and the generators of gauge transformations read
\begin{align*}
    \{ (1/F),H[\epsilon^0]+D[\epsilon^x]\}&=\,(1/F)\Bigg(
2\epsilon^x{}'+\epsilon^x\left(2\lambda\tan(\lambda\kang){\kang'}+4\frac{\ephi'}{\ephi}-\frac{\erad'}{\erad}-\frac{8\ephi\ephi'+2\lambda^2\erad'\erad''}{4\ephi^2+(\lambda\erad')^2}\right)\nonumber\\
+&\epsilon^0\Bigg(\frac{8\sqrt{\erad}\krad}{\sqrt{1+\lambda^2}\ephi}\left(\frac{\cos(2\lambda\kang)}{2}+\left(\frac{\lambda\erad'}{2\ephi}\right)^2\cos^2(\lambda\kang)\right)\nonumber\\&-\frac{\tan(\lambda\kang)}{\sqrt{1+\lambda^2}\lambda\sqrt{\erad}}\left(1+\lambda^2\left(1-\Lambda\erad-\frac{Q^2}{\erad}\right)\right)\nonumber\\+&\frac{\sqrt{\erad}\sin(2\lambda\kang)}{\sqrt{1+\lambda^2}\lambda\ephi^2}\left(\frac{\ephi^2}{2\erad}+\lambda^2\left(\frac{(\erad')^2}{8\erad}+\erad''-\frac{\erad'\ephi'}{\ephi}\right)\right)-\frac{\lambda^2\sqrt{\erad}\erad'\kang'}{\sqrt{1+\lambda^2}\ephi^2}\Bigg)\Bigg).
\end{align*}
On the other hand, the derivatives of $(1/F)$ have the form 
\begin{align*}
    (1/F)\dot{} &= (1/F)\left(\frac{4\dot{\ephi}}{\ephi}-\frac{\dot{\erad}}{\erad}-\frac{8\ephi\dot{\ephi}+2\lambda^2\erad'\dot{\erad'}}{4\ephi^2+\lambda^2(\erad')^2}+2\lambda\tan(\lambda\kang)\dot{\kang}\right),\\
    (1/F)' &=(1/F)\left(\frac{4{\ephi'}}{\ephi}-\frac{\erad'}{\erad}-\frac{8\ephi\ephi'+2\lambda^2\erad'{\erad''}}{4\ephi^2+\lambda^2(\erad')^2}+2\lambda\tan(\lambda\kang){\kang'}\right).
\end{align*}
Making use of the above expressions and the equations of motion \eqref{eomq1}, \eqref{eomq2}, and \eqref{eomp2},
it is now a straightforward computation to show that the equality \eqref{eq.gaugetransf} holds.

\section{Equations of motion and derivation of the solutions}\label{app.equations}

The equations of motion read as follows:
\begin{subequations}\label{eom}
\begin{align}
\label{eomq1}
  \dot{\erad}=&\{\erad,D[\shift]+H[\lapse]\}= \shift \erad'+\lapse\sqrt{{\erad}}\frac{\sin(2\lambda \kang)}{\lambda\sqrt{1\!+\!\lambda^2}}\left(1+\left(\frac{\lambda \erad'}{2\ephi}\right)^{\!2}\right),
\\\label{eomq2}
    \dot{\ephi}=&\{\ephi,D[\shift]+H[\lapse]\}= \left(\shift \ephi\right)' +2\lapse\sqrt{\erad}\krad\frac{\cos(2\lambda \kang)}{\sqrt{1\!+\!\lambda^2}}\left(1+\left(\frac{\lambda \erad'}{2\ephi}\right)^{\!2}\right)\nonumber\\
              &+\lapse\frac{\sin(2\lambda \kang)}{\lambda\sqrt{1\!+\!\lambda^2}}\left(\frac{\ephi}{2\sqrt{\erad}} +
                \frac{\lambda^2}{2}\bigg(\frac{\erad'}{2\ephi}\left(\sqrt{\erad}\right)'
              +\sqrt{\erad}\left(\frac{\erad'}{\ephi}\right)'\bigg)
                \right),
\\\label{eomp1}
  \dot{\krad}=&\{\krad,D[\shift]+H[\lapse]\}= \left(\shift \krad\right)' +\lapse''\frac{\sqrt{\erad}  \cos ^2(\lambda \kang)}{2 \sqrt{1\!+\!\lambda^2} \ephi}+\frac{\lapse\ephi}{4\sqrt{1+\lambda^2}\sqrt{\erad}}\left({\Lambda}-3\left(\frac{Q}{\erad}\right)^2\right)\nonumber \\
              &+ \frac{\lapse'\sqrt{\erad}}{2\sqrt{1\!+\!\lambda^2}\ephi^2} \Bigg(\lambda\sin (2 \lambda \kang)\left(\erad'\krad-2\ephi\kang'\right) +\cos ^2(\lambda \kang)\left(\frac{\ephi\erad'}{2 {\erad} }-\ephi'\right)\!\!\Bigg)\nonumber\\
              &+\frac{\lapse}{\sqrt{1\!+\!\lambda^2}} \Bigg(\frac{\ephi (\sin ^2(\lambda \kang)+\lambda^2)}{4 \lambda^2  \erad^{3/2}} +\frac{\cos^2(\lambda\kang)}{4\sqrt{\erad}\ephi}\left(\erad''-\frac{(\erad')^2}{4\erad}-\frac{\erad'\ephi'}{\ephi}\right)\nonumber\\
              &-\frac{\krad\sin(2\lambda\kang)}{2\lambda\sqrt{\erad}}\left(1+\left(\frac{\lambda\erad'}{2\ephi}\right)^2\right)-\left[\sin(2\lambda\kang)\frac{\lambda\sqrt{\erad}}{2\ephi^2}\diff\right]'\Bigg),
\\\label{eomp2}
    \dot{\kang}=&\{\kang,D[\shift]+H[\lapse]\}= \shift \kang' +\lapse' \frac{\sqrt{\erad}\erad'}{2\ephi^2}\frac{\cos^2(\lambda \kang)}{\sqrt{1\!+\!\lambda^2}} -\lapse \frac{\sin^2(\lambda \kang)+\lambda^2}{2\lambda^2\sqrt{{\erad}}\sqrt{1\!+\!\lambda^2}}\nonumber\\
    &+\lapse\frac{(\erad')^2}{8\sqrt{\erad}\ephi^2}\frac{\cos^2(\lambda \kang)}{\sqrt{1\!+\!\lambda^2}} -\lapse\frac{\sin(2\lambda \kang)}{\sqrt{1\!+\!\lambda^2}}\frac{\lambda \sqrt{\erad}\erad'}{2\ephi^3}\diff+\frac{\lapse\sqrt{\erad}}{2\sqrt{1+\lambda^2}}\left(\Lambda+\left(\frac{Q}{\erad}\right)^2\right),
\end{align}
\end{subequations}
in combination with the constraint equations $\diff=0$ and $\ham=0$,
with the definitions \eqref{D} and \eqref{H}.

\subsection{Static gauge}\label{app.static}

We partially
fix the gauge freedom by choosing $\dot{\erad}=0$ and $\sin(\lambda\kang)=0$.
Observe that this implies $\cos(2\lambda\kang)=1$.
Equation \eqref{eomq1} indicates we have two main cases
depending on whether or not $\erad'$ vanishes identically.
We start by assuming that $\erad'$
does not vanish identically, so that $\shift=0$ necessarily.
In addition, the vanishing of the diffeomorphism constraint $\diff=0$, cf. \eqref{D},
requires $\krad=0$. The remaining equations read
\begin{subequations}
\begin{align}
 0&= \dot{\ephi},\label{ephizero}\\
    0&=\dot{\krad}=  \frac{\lapse}{2\sqrt{1+\lambda^2}\sqrt{\erad} }  \left( \left(\frac{\erad'}{2\ephi}\right)'-\frac{(\erad')^2}{8\erad\ephi}
     +\frac{\ephi}{2}\left(\frac{1}{{\erad}}+{\Lambda}-3\left(\frac{Q}{{\erad}}\right)^2\right)
     \right)\label{kxdot}+\left(\frac{\lapse'\sqrt{\erad}}{2\sqrt{1+\lambda^2}\ephi}\right)',\\
    0&=\dot{\kang}= \frac{\lapse'\sqrt{\erad}\erad'}{2\sqrt{1+\lambda^2}\ephi^2} -\frac{\lapse}{2\sqrt{1+\lambda^2}\sqrt{\erad}}\left(1-\left(\frac{\erad'}{2\ephi}\right)^2\right)+\frac{\sqrt{\erad}\lapse}{2\sqrt{1+\lambda^2}}\left(\Lambda+\left(\frac{Q}{{\erad}}\right)^2\right) ,\label{kphidot0}\\\label{hamstatic}
  0&=\ham= \frac{1}{\sqrt{1+\lambda^2}}
     \Bigg[-\frac{{\ephi}}{2\sqrt{{\erad}}}+\frac{1}{2}\Bigg(\frac{\erad'}{2\ephi}\left(\sqrt{\erad}\right)'
              +\sqrt{\erad}\left(\frac{\erad'}{\ephi}\right)'\Bigg)
+ \frac{1}{2}\sqrt{E^x}E^\varphi
     \left(\Lambda+\left(\frac{Q}{E^x}\right)^2\right)\Bigg].
\end{align}
\end{subequations}
It is straightforward to solve the last equation for $\ephi$ to obtain %
\begin{align}
    \ephi=\varepsilon_1 \frac{\erad'}{2}\left(1-\frac{2M}{\sqrt{\erad}}+\frac{Q^2}{{\erad}}-\frac{\Lambda}{3}{\erad}\right)^{-1/2},
\end{align}
with $\varepsilon_1^2=1$ and $M\in\mathbb{R}$ being an integration constant. This expression
automatically satisfies \eqref{ephizero}. The range of ${\erad}$ will have to be
restricted so that the term between brackets is positive.
Now we can integrate \eqref{kphidot0} to obtain the lapse that, up to a trivial constant $c_1\neq0$,
is then given by
\begin{align}
    \lapse=c_1\left(1-\frac{2M}{\sqrt{\erad}}+\frac{Q^2}{{\erad}}-\frac{\Lambda}{3}{\erad}\right)^{1/2}.
\end{align}
One can check that
\eqref{kxdot} is now automatically satisfied, and thus all the equations.

It only remains to choose the (nonconstant) function ${\erad}(x)$ to completely
fix the gauge. The first, easiest, choice is to consider $\sqrt{\erad}(x)=x$.
Given the definition of $m$ in \eqref{eq.masspol} we thus have
\begin{align}\label{eq.m(x)}
     m(x)=M-\frac{Q^2}{2x}+\frac{\Lambda}{6}x^{3}.
 \end{align}
The domain of the solution in this case is restricted by $x>0$,
plus the range of possible values of $\erad$ found above,
  which, using \eqref{eq.m(x)}, can be conveniently
  expressed as
  \[
    1-\frac{2m(x)}{x}>0.
  \]
If we relabel $(t,x)$ as the pair of real functions $(\tilde{t}/c_1,r)$ on the manifold,
the metric
\eqref{eq.metric} reads
\begin{align}\label{eq.metapp1}
  {ds}^2 &=-\left(1-\frac{2m(r)}{{r}}\right){d\tilde{t}}^2 +
  \left(1-\frac{2\lbar m(r)}{{r}}\right)^{-1}\left(1-\frac{2m(r)}{{r}}\right)^{-1}{d{r}^2}
  +{r}^2{d\Omega}^2 .
\end{align}

In fact, a different form of the free function ${\erad}(x)$ can be chosen to remove the explicit pole in $q_{xx}$.
We do so by fixing
\begin{align}\label{eq.eradprime}
    ({\erad}')^2=4\erad\left(1-\frac{2\lbar m(\erad)}{\sqrt{\erad}}\right).
\end{align}
Renaming $\sqrt{\erad}(x)=:r(x)$, and relabeling $(t,x)$ as the pair of real functions $(\tt/c_1,z)$,  the metric in these coordinates reads
\begin{align}\label{eq.metapp2}
  {ds}^2 &=-\left(1-\frac{2\mofr}{{r(z)}}\right){d\tt}^2 +
  \left(1-\frac{2\mofr}{{r(z)}}\right)^{-1}{d{z}^2}
  +{r(z)}^2{d\Omega}^2,
\end{align}
which is only restricted by the values of ${z}$ that satisfy $2\mofr<r(z)$. Note that Eq. \eqref{eq.eradprime} can be expressed as
\begin{align}\label{app.rprime}
    \left(\frac{dr(z)}{dz}\right)^2 =1-\frac{2\lbar \mofr}{r(z)}.
\end{align}

\subsubsection{Near-horizon geometries}
\label{app.nhg}
We are now left with the case $\erad'=0$. We thus take $\sqrt{\erad}=a$ for some positive constant $a$. The first consequence is that, from \eqref{eq.masspol}, in this case we have
\[
  m=\frac{a}{2}.
\]
The diffeomorphism constraint equation $\diff=0$ is now automatically satisfied, while
the Hamiltonian constraint equation $\ham=0$ then provides the polynomial equation for $a$,
\begin{equation}
  a^4 \Lambda-a^2+Q^2=0.\label{eq:a}
\end{equation}
One can check that this relation also guarantees that \eqref{eomp2} is satisfied, so we are only left with Eqs. \eqref{eomq2} and \eqref{eomp1} for the functions
$\lapse$, $\shift$, $\ephi$, and $\krad$. The variable $\krad$
can be isolated from \eqref{eomq2}, and introduced in \eqref{eomp1},
which provides a partial differential equation (PDE) for the three functions $\lapse$, $\shift$, $\ephi$.
It is straightforward to check that such an equation ensures that the two-dimensional
Lorentzian metric
\[
  ds^2_2=-N^2(t,x)dt^2 +\left(1-\lbar\right)^{-1}\frac{(E^{\varphi}(t,x))^2}{a^2} \big(dx+N^x(t,x) dt\big)^2,
\]
constructed from the line element \eqref{eq.metric},
is of constant curvature. More precisely, its Gaussian curvature reads
\begin{equation}\label{eq:K_nh}
  \kappa=\frac{a^2-2Q^2}{a^4}(1-\lbar)=\left(\Lambda-\frac{Q^2}{a^4}\right)(1-\lbar),
\end{equation}
where we have used \eqref{eq:a} for the second equality.

To sum up, the solution provided by $\erad=a$ and $\sin(\lambda \kang)=0$
leads to the spacetime  $\manifold=\mathcal{M}^2\times S^2$
where $S^2$ is the sphere of radius $a$ and $(\mathcal{M}^2,\kappa)$ is a Lorentzian
space of constant (Gaussian) curvature $\kappa$, given by \eqref{eq:K_nh}.
These correspond to the so-called near-horizon geometries.

Any remaining choice of gauge for the set $\lapse(t,x)$, $\shift(t,x)$, and $\ephi(t,x)$ just provides a different chart of the near-horizon geometry.
Next, we make a choice to find some explicit
coordinate system. For simplicity, we take $\shift=0$, $\dot N=0$,
and $\ephi=a\sqrt{1-\lbar}$. The PDE mentioned above reduces to
\[
  N''+\kappa N=0.
\]
The general solution depends on the sign of $\kappa$, and it is given by
$N(x)=c_1\sin(\sqrt{\kappa}x+c_2)$ for $\kappa> 0$, 
$N(x)=c_1\sinh(\sqrt{-\kappa}x+c_2)$ for $\kappa< 0$, and $N(x)=c_1 z+c_2$ for $\kappa=0$ for
some constants $c_1$ and $c_2$.
Relabeling $(t,x)$
as $(T,z)$, and performing convenient shifts and
rescalings on $z$ and $T$ to absorb $c_1$ and $c_2$,
the metric \eqref{eq.metric} reads
\begin{align*}
  &ds^2=-\sin^2(\sqrt{\kappa} z) dT^2+dz^2+a^2d\Omega^2,\qquad \mbox{ for } \kappa> 0,\\
  &ds^2=-\sinh^2(\sqrt{-\kappa} z) dT^2+dz^2+a^2d\Omega^2,\qquad \mbox{ for } \kappa< 0,\\
  &ds^2=-z^{2} dT^2+dz^2+a^2d\Omega^2,\qquad \mbox{ for } \kappa= 0.
\end{align*}
The range of $z$ is the real line if $\kappa\leq 0$, and $z\in(0,\pi/\sqrt{\kappa})$
if $\kappa>0$.

Alternatively, let us find another chart that will be convenient
in order to show that the near-horizon geometries correspond
to one of the two limits in which the horizons degenerate.
The choice is $\ephi'=0$, and the lapse and shift then read
\begin{align*}
  \lapse &=\frac{1}{a}\sqrt{1+\lambda^2}\ephi\shift=\frac{a}{\ephi}\sqrt{1-\lbar},\\
  \shift &=\lapse^2,
\end{align*}
respectively.
The PDE reduces now to
\[
  \ddot\ephi\ephi+(\dot\ephi)^2=\left(1-\frac{2Q^2}{a^2}\right)(1-\lbar)^2,
\]
which can be solved to obtain
\[
  \ephi^2=\left(1-\frac{2Q^2}{a^2}\right)(1-\lbar)^2\left((t+c_1)^2+c_2\right).
\]
Relabeling now $(t,x)$ by $(Y-c_1,\zeta)$ and using \eqref{eq:a}
to write $1-2Q^2/a^2=2\Lambda a^2-1$,
the metric \eqref{eq.metric} takes the form
\begin{equation}\label{app.met.nhgY}
  ds^2=-\frac{1}{a^2}(1-2\Lambda a^2)(1-\lbar)(Y^2+c_2) d\zeta^2 +2 dYd\zeta +a^2d\Omega^2.
\end{equation}
It must be stressed that this is the same geometry irrespective of
the value of $c_2$,
and that different values of this constant simply provide
different patches of the near-horizon geometry (see, e.g., Ref. \cite{Bengtsson2022}).
In fact, convenient rescalings of $Y$ and $\zeta$ allow us to set
$c_2={-1,0,1}$.

\subsection{Homogeneous gauge}\label{app.homogeneous}

We partially fix the gauge by choosing $\erad'=0$ and $\ephi'=0$. The vanishing of $\diff$ implies $\kang'=0$. Then, $\ham'=0$ implies $\krad' \sin(2\lambda\kang)=0$.
If $\sin(2\lambda\kang)=0$, then either $\cos(\lambda\kang)=0$, so that $F=0$,
or  $\sin(\lambda\kang)=0$. The first case leads to a solution in phase space
with a degenerate metric. In the second
case $\sin(\lambda\kang)=0$ implies $\dot \erad=0$ because of \eqref{eomq1},
and therefore we fall into the near-horizon geometries analyzed above.
We  thus take $\sin(2\lambda\kang)\neq 0$ so that we are left with $\krad'=0$.
Now, the radial derivatives of \eqref{eomq1} and \eqref{eomq2} imply $\lapse'=0$
and $\shift''=0$, respectively.
The latter allows us to partially use the gauge freedom to set $\shift=0$.
From the geometrical perspective this is because $\shift=a(t)x+ b(t)$
ensures the existence of a function $Y$ such that $dx+\shift dt=\exp(-\int a(t)dt)dY$.\footnote{Although the outcome is the same, i.e., that one can set $\shift=0$,
  this corrects the argument in Sec. IV B of Ref. \cite{Alonso-Bardaji:2022ear}, where
  it was erroneously used that $\shift'=0$.}

The equations of motion then read
\begin{subequations}
\begin{align}
  \dot{\erad}&= \lapse\sqrt{{\erad}}\frac{\sin(2\lambda \kang)}{\lambda\sqrt{1\!+\!\lambda^2}},\label{eom1hom}
\\
    \dot{\ephi}&= \frac{\lapse}{\sqrt{1+\lambda^2}}\left(2\sqrt{\erad}\krad\cos(2\lambda \kang)
              +\frac{\ephi\sin(2\lambda \kang)}{2\lambda\sqrt{\erad}}\right) \label{eom2hom}
,
\\
  \dot{\krad}&=  \frac{\lapse}{2\sqrt{\erad}\sqrt{1+\lambda^2}}\Bigg\{\frac{\ephi}{2}\left({\Lambda}-3\left(\frac{Q}{\erad}\right)^2\right)+\frac{\ephi}{2\erad}\left(1+ \frac{\sin ^2(\lambda \kang)}{\lambda^2}\right) -\krad\frac{\sin(2\lambda\kang)}{\lambda}\Bigg\},\nonumber
\\
    \dot{\kang}&= \frac{\lapse}{2\sqrt{1+\lambda^2}}\Bigg\{\sqrt{\erad}\left(\Lambda+\left(\frac{Q}{\erad}\right)^2\right)  -\frac{1}{2\sqrt{\erad}}\left(1+ \frac{\sin ^2(\lambda \kang)}{\lambda^2}\right) \Bigg\},\label{eom4hom}\\
    0=\ham &=  \frac{1}{\sqrt{1+\lambda^2}}\Bigg\{ \frac{1}{2}\sqrt{E^x}E^\varphi\left(\Lambda+\left(\frac{Q}{E^x}\right)^2\right)-\frac{{\ephi}}{2\sqrt{{\erad}}}\left(1+\frac{\sin^2(\lambda \kang)}{\lambda^2}\right)  -\sqrt{{\erad}}{\krad}\frac{\sin(2\lambda \kang)}{\lambda}\Bigg\}.\nonumber
\end{align}
\end{subequations}
We isolate $\krad$ from the last equation, so that
\begin{align}
    \krad=\frac{\ephi \left(2 \lambda^2 \Lambda \erad^2-2 \erad \left(\lambda^2+\sin ^2(\lambda \kang)\right)+2 \lambda^2 Q^2\right)}{4 \lambda \erad^2 \sin (2\lambda \kang)},\label{hom:krad}
\end{align}
while we use \eqref{eom1hom} to obtain
\begin{equation*}
    \lapse=\frac{\lambda\sqrt{1+\lambda^2}\dot{\erad}}{\sqrt{\erad}\sin(2\lambda\kang)}.
\end{equation*}
Introducing $\lapse$ in \eqref{eom4hom} and integrating, we get
\begin{align}
    \frac{\sin(\lambda\kang)}{\lambda}=\varepsilon_2\sqrt{\frac{2m(\sqrt{\erad})}{\sqrt{\erad}}-1},\label{hom:kang}
\end{align}
with $\varepsilon_2^2=1$ and $m(\cdot)$ as given in \eqref{eq.m(x)},
so that
\begin{align}\label{m.erad}
    m(\sqrt{\erad})=M-\frac{Q^2}{2\sqrt{\erad}}+\frac{\Lambda}{6}(\sqrt{\erad})^3,
\end{align}
where $M\in\mathbb{R}$ an integration constant. Finally,
using \eqref{hom:krad} and \eqref{hom:kang} in \eqref{eom2hom}
and integrating, 
\begin{equation*}
    \ephi=c_2\sqrt{\erad}\sqrt{1-\frac{2\lbar m(\sqrt{\erad})}{\sqrt{\erad}}}\sqrt{\frac{2m(\sqrt{\erad})}{\sqrt{\erad}}-1},
\end{equation*}
for some integration constant $c_2$. At this point
it only remains to choose the function $\erad(t)$ to completely fix the gauge.

In analogy with the static case, we first consider $\sqrt{\erad}(t)=t$. If we relabel $(t,x)$ as the pair $(r,\tt)$, and absorbing the constant $c_2$ with a convenient rescaling, the metric reads
\begin{align}\label{eq.metapp3}
  {ds}^2 &=-
  \left(1-\frac{2\lbar m(\rr)}{\rr}\right)^{-1}\left(\frac{2m(\rr)}{\rr}-1\right)^{-1}{d\rr^2}+\left(\frac{2m(\rr)}{\rr}-1\right){d\tt}^2 
  +\rr^2{d\Omega}^2,
\end{align}
which is restricted by the values of $r$ that satisfy $2\lbar m(r)<r<2m(r)$. Contrary to what happens in the static domain,
the factor with $\lbar$ does restrict the range of $\rr$, so it is important to try to remove that pole in this case.

We do so by making yet another choice of gauge, this time
fixing $\erad(t)$ through its derivative by
\begin{align}\label{eq.eraddot}
    (\dot{\erad})^2=4\erad\left(1-\frac{2\lbar m(\erad)}{\sqrt{\erad}}\right).
\end{align}
Renaming $\sqrt{\erad}(t)=:r(t)$, and relabeling
$(t,x)$ as the pair of real functions $(z,\tt)$, the metric reads
\begin{align}\label{eq.metapp4}
{ds}^2 &=- 
  \left(\frac{2\mofr}{r(z)}-1\right)^{-1}{dz^2}+\left(\frac{2\mofr}{r(z)}-1\right){d\tt}^2
  +r(z)^2{d\Omega}^2.
\end{align}
The ranges of the coordinates, as determined by the existence
of the solution, are given by $\tt\in \mathbb{R}$,
while $z$ is restricted by the condition $r(z)<2\mofr$ 
plus the domain (or domains) of the existence of the solution
of \eqref{eq.eraddot}, i.e.,
\begin{align}
    \left(\frac{dr(z)}{dz}\right)^2 
    = 1-\frac{2\lbar \mofr}{r(z)},
\end{align}
which will correspond to ranges of $z$
for which $r(z)\geq2\lbar \mofr$.

\subsection{Horizon-crossing gauge}\label{app.covering}

We now start by partially fixing the gauge by
$\dot \erad=0$ and $\dot \ephi=0$.
From \eqref{eomq1} we have that, if $\erad'=0$, then
  $\lapse \erad \sin(2\lambda\kang)=0$. For a nondegenerate
  geometry we need that the product $\lapse\erad$ does not vanish everywhere,
  while, if $\sin(2\lambda\kang)=0$, we fall back to the near-horizon
  geometries found above. As a result, we assume in the remainder
  that $\erad{}'$ does not vanish identically.
From $\diff=0$ we thus find
\begin{align}\label{cross.kphi}
  \krad = %
  \frac{\ephi\kang'}{\erad'}.
\end{align}
Then, we can isolate the shift $\shift$
from \eqref{eomq1} and introduce it in \eqref{eomq2} to obtain
\begin{align*}
  \ephi\erad \sin(2\lambda\kang)
  \left(1+\left(\frac{\lambda \erad'}{2\ephi}\right)^2\right)
  \left(\lapse' \erad'\ephi+\lapse(\ephi'\erad'-\ephi\erad'')\right)=0.
\end{align*}
The case $\sin(\lambda\kang)=0$ was studied in the previous section \ref{app.static}.
In addition, if we consider $\cos(\lambda\kang)=0$, the vanishing of the Hamiltonian constraint \eqref{H} requires either a constant $\erad$, which we already discarded, or that $\ephi=0$, which makes the metric degenerate.
Therefore for the above equation to be satisfied, we are left with
the vanishing of the last factor, which integrates to
\begin{align*}
    \lapse=\frac{c_3}{2}\frac{\erad'}{\ephi},
\end{align*}
for some integration constant $c_3$.

On the other hand, introducing \eqref{cross.kphi} in \eqref{H},
the integration of $\ham=0$ yields
\begin{align*}
  \frac{\sin(\lambda\kang)}{\lambda}=\varepsilon_3
  \left(1+\left(\frac{\lambda \erad'}{2\ephi}\right)^2\right)^{-1/2}\sqrt{\left(\frac{\erad'}{2\ephi}\right)^2- \left(1-\frac{2m(\sqrt{\erad})}{\sqrt{\erad}}\right)},
\end{align*}
where $\varepsilon_3^2=1$, and we use again \eqref{m.erad}
for some integration constant $M$. Then, the shift, isolated from \eqref{eomq1}, reads
\begin{align*}
  \shift= %
  \varepsilon_4 c_3\frac{\sqrt{\erad}}{\ephi}
  \sqrt{1-\frac{2\lbar m(\sqrt{\erad})}{\sqrt{\erad}}}
  \sqrt{\left(\frac{\erad'}{2\ephi}\right)^2+\frac{2 m(\sqrt{\erad})}{\sqrt{\erad}}-1},
\end{align*}
where $\varepsilon_4$ is minus the sign of $\sin(2\lambda\kang)$. The remaining equations, \eqref{eomp1} and \eqref{eomp2}, are now satisfied.
If we rename the two free functions
$\sqrt{\erad}(x)=:r(x)$ and $\ephi(x)=:s(x)$ for compactness,
the metric \eqref{eq.metric} then reads
\begin{align}\label{r-s-metric}
    {ds}^2 =&
    -\left(1-\frac{2m(r(x))}{r(x)}\right) {dt}^2
              +2\left(1-\frac{2\lbar m(r(x))}{r(x)}\right)^{-1/2}\frac{s(x)}{r(x)}
              \sqrt{\left(\frac{r(x) r'(x)}{s(x)}\right)^2+\frac{2m(r(x))}{r(x)}-1}\,{dt}{dx}\nonumber\\
    &+ \left(1-\frac{2\lbar m(r(x))}{r(x)}\right)^{-1}\left(\frac{s(x)}{r(x)}\right)^{2} 
    {dx}^2 
   +r(x)^2 {d\Omega}^2 \:,
\end{align}
after setting $\varepsilon_4 c_3 = 1$ with no loss of generality
by a constant rescaling (and change of sign if needed) of $t$.

The fact that $s(x)$ is pure gauge becomes explicit now,
as it may be absorbed by a coordinate transformation $dy=s(x)dx$.
Several choices can be made at this point, and find for each choice
  the corresponding chart. Our choice here, as in Refs.~\cite{Alonso-Bardaji:2021yls,Alonso-Bardaji:2022ear} for the vacuum case,
is to fix $s$ by demanding that
$$s=\sqrt{r^2-2\lbar r m(r)}$$
in order to remove explicit
divergences in the metric element $q_{xx}$. Now, in order to completely fix the gauge,
we only need to choose the specific form of the function $r(x)$. As we still have possible divergences in the argument of the second square root of the
component $dtdx$ of the line element \eqref{r-s-metric}
(coming from the choice of $s(x)$), we set
\begin{align}\label{eq.rpapp}
    \big(r'(x)\big)^2= 1-\frac{2\lbar m(r(x))}{r(x)},
\end{align}
which implicitly defines $r(x)$.

After taking these two choices we now relabel $(t,x)$ as the
pair of real functions $(\tau,z)$, so that
the metric in these coordinates reads
\begin{align}\label{eq.metapp5}
        ds^2 = -\bigg(1-\frac{{2\mofr}}{r(z)}\bigg){d\tau}^2 
 +2\sqrt{\frac{2\mofr}{r(z)}}\,d\tau dz +{dz}^2 +r(z)^2d\Omega^2.
\end{align}
The domain of existence of the solution of
\eqref{eq.rpapp} (with $x$ replaced by $z$),
plus the requirement that
\[
  m(r(z))\geq 0,
  \]
provide the range of the coordinate $z$,
while $\tau$ covers the real line.

\subsection{Coordinate transformations}
\label{sec:coord_transf}

For completeness, we next provide
the coordinate transformations between the above charts on the intersection
of their corresponding domains.
Observe that, although the static and homogeneous regions do not overlap,
the horizon-crossing coordinates $(\tau,z)$ cover them partially [or completely, depending on the sign of $m(r)$].

In the region covered by the points where $m(r)\geq0$ and $2\mofr<r(z)$,
the change given by
    \begin{align}\label{tautot}
      d\tau&=d\tt+\bigg(1-\frac{{2\mofr}}{r(z)}\bigg)^{-1}
          \sqrt{\frac{2\mofr}{r(z)}}dz
    \end{align}
is a coordinate transformation from the region $2\mofr<r(z)$ of the
coordinates $(\tau,z)$, %
    where the metric reads \eqref{eq.metapp5},
    to the static region $m(r)\geq0$ of the coordinates $(T,z)$,
    where \eqref{eq.metapp2} holds. In addition, it also provides the transformation
from the homogeneous region  $r(z)<2\mofr$, to the whole domain of the coordinates
$(z,T)$, where the metric reads \eqref{eq.metapp4}.

  Further, the change $z\to r$, given by 
    \begin{align}\label{rtoz}
      dz^2 &= \left(1-\frac{{2\lbar m(\rr)}}{\rr}\right)^{-1} d\rr^2,
    \end{align}
transforms \eqref{eq.metapp2} to  \eqref{eq.metapp1} and \eqref{eq.metapp4} to \eqref{eq.metapp3} in the regions where $2\lbar m(r)<r$.

\section{Proof of the existence of critical values and horizons}\label{app.proofs}

In the following we refer to roots of $P$ as real roots of $P{(r,s)}$ in the first argument.
Prime denotes derivative with respect to the first argument. 
In Fig.~\ref{fig.proofs} all the possible forms of the polynomial $P$ are qualitatively represented for the different values of the parameters, which could be of help to follow the proof below about the positive roots of this polynomial.

\begin{lemma}\label{res:lemma1}
   Consider $P(r,\lproof)$ as defined in \eqref{def:P}, i.e.,
\begin{equation}
 P(r,\lproof)=\left\{\begin{array}{lr}
 
             \cfrac{\lproof\Lambda}{3}r^4-r^2+2\lproof M r-\lproof Q^2, & {\rm if}\,\,\, Q\neq 0,\\[10pt]
               \cfrac{\lproof\Lambda}{3}r^3-r+2\lproof M, & {\rm if}\,\,\, Q= 0,
              \end{array}\right.
\end{equation}
and fix $\lproof\in (0,1]$.
A value $r_0(M,Q,\Lambda,\lproof)>0$ such that $P(r_0,\lproof)=0$ and either
\begin{enumerate}
\item[a)] $P'(r_0,\lproof)<0$, or
\item[b)] $P'(r_0,\lproof)=0, P''(r_0,\lproof)<0$, or 
\item[c)] $P'(r_0,\lproof)=P''(r_0,\lproof)=0,P'''(r_0,\lproof)<0$, or
\item[d)] $P'(r_0,\lproof)=P''(r_0,\lproof)=0,P'''(r_0,\lproof)=0$, $P''''(r_0,\lproof)<0$
\end{enumerate}
exists only in the cases
\begin{enumerate}
\item[1.] $\Lambda>0$, $Q\neq 0$, $M>0$, with $8Q^2<9\lproof M^2$ and $\Lambda\in (\Lambda_-,\Lambda_+)\cap (0,\Lambda_+)$,  and (a) holds.
\item[1D.] $\Lambda>0$, $Q\neq 0$, $M>0$, with $8Q^2<9\lproof M^2<9Q^2$ and
  $\Lambda=\Lambda_-$,  and (b) holds.
\item[2.] $\Lambda>0$, $Q= 0$, and $0<\lproof^{3/2} 3\sqrt{\Lambda} M<1$, and (a) holds,
\item[3.] $\Lambda=0$ and $\sqrt{\lproof} M>|Q|$ and (a) holds,
\item[3D.] $\Lambda=0$ and $\sqrt{\lproof} M=|Q|>0$ and (b) holds,
\item[4.] $\Lambda<0$, $Q\neq 0$,  $\sqrt{\lproof} M>|Q|$ and $\Lambda\in(\Lambda_-,0)$ and (a) holds,
\item[4D.] $\Lambda<0$, $Q\neq 0$,  $\sqrt{\lproof} M>|Q|$ and
  $\Lambda=\Lambda_-$ and (b) holds,
\item[5.] $\Lambda<0$, $Q= 0$, and $M>0$ and (a) holds,
\end{enumerate}
where
\begin{equation}\label{eq:Lambdapm}
    \Lambda_\pm =\frac{1}{32\lproof^2Q^6}
    \left(8Q^4-\beta(\beta+4Q^2)\pm3\sqrt{\lproof M^2\beta^3}\right),
    \quad\mbox{ with }\quad \beta:=9\lproof M^2-8Q^2.
\end{equation}
\begin{remark}\label{remarkbeta+}
Once $\beta>0$, both $\Lambda_\pm$ are real and distinct. Moreover,
$\Lambda_+$ is always positive, while $\Lambda_-$ is negative,
vanishing, or positive when $\lproof M^2-Q^2$ is greater, equal, or
smaller than zero, respectively. When $\beta=0$ both $\Lambda_\pm$ coincide,
$\Lambda_+=\Lambda_-=(2sQ)^{-2}$. Therefore, the conditions in case
\emph{1} can be written as $\Lambda>0$, $Q\neq 0$, $M>0$, with $\{Q^2
\leq \lproof M^2$ and $\Lambda\in(0,\Lambda_+)\}$ or $\{8Q^2<9\lproof
M^2<9 Q^2$ and $\Lambda\in(\Lambda_-,\Lambda_+)\}$.
  \end{remark}
\begin{remark}\label{remark_rinf}
In cases \emph{1}, \emph{1D} and \emph{2},
the set of $r$ defined by $P(r,\lproof)\leq  0$ with infimum $r_0$
is given by the closed interval $[r_0,\rmax]$ for some finite value $\rmax>r_0$,
  for which $P'(\rmax,\lproof)>0$.
  In the rest of the cases
  the interval where $P(r,\lproof)\leq 0$ with infimum $r_0$
  is unbounded from above.

  Moreover, the limiting case for \emph{1}, \emph{1D}
    and \emph{2}, when $\Lambda>0$ and $M>0$, in which the largest root
    of $P(r,\lproof)$ is a double root (so that, say $r_0= \rmax$ in the limit),
   is given by $\Lambda=\Lambda_+$, and then
   $P'(r_0,\lproof)=0$ with  $P''(r_0,\lproof)>0$.
   Note that $\Lambda_+|_{Q\to 0 }=(9M^2 s^3)^{-1}$.
 \end{remark}
  \begin{remark}\label{remark_origin}
    When $Q\neq 0$, there exists $R>0$ such that $R\leq r_0$ and $P(r,s)\leq 0$ in
    $r\in[0,R]$.
    In the double root cases \emph{1D, 3D}, and \emph{4D},
    we have $R=r_0$, $P'(r_0,s)=0$ and $P''(r_0,s)<0$,
    while in the rest of the cases, $R<r_0$ and $P'(R,s)>0$.
  When $Q=0$, there are no positive roots smaller than $r_0$.
\end{remark}

\end{lemma}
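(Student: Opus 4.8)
The plan is to translate conditions (a)--(d) into a statement about the sign of $P$ to the right of $r_0$ and then read everything off the graph of the polynomial. Together the four conditions say exactly that $P(r,\lproof)<0$ for $r$ in a right-neighbourhood of $r_0$, i.e. $r_0$ is a positive root that bounds from below an interval on which $P\le 0$ (the turning/equilibrium point of the potential analogue used above). Since $P(0,\lproof)=-\lproof Q^2<0$ when $Q\neq 0$ and $P(0,\lproof)=2\lproof M$ when $Q=0$, one component of $\{P\le 0\}$ may be anchored at the origin; the sought $r_0$ must be the infimum of a \emph{different} component, so the whole task reduces to counting the positive roots of $P$ and classifying the sign changes there.

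First I would dispose of (c) and (d) purely algebraically. For $Q\neq 0$, $P$ has no cubic term and fixed quadratic coefficient $-1$. A positive quadruple root $P=\tfrac{\lproof\Lambda}{3}(r-r_0)^4$ has cubic coefficient $-\tfrac{4\lproof\Lambda}{3}r_0$, which cannot vanish for $r_0>0$; a positive triple root forces $P=\tfrac{\lproof\Lambda}{3}(r-r_0)^3(r+3r_0)$, and matching the quadratic coefficient gives $\tfrac{\lproof\Lambda}{3}=1/(6r_0^2)>0$, whence $P'''(r_0)=24\cdot\tfrac{\lproof\Lambda}{3}r_0>0$, contradicting (c). For $Q=0$ the absent quadratic term kills triple (and higher) roots the same way. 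Hence only (a) or (b) can ever occur, as claimed.

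The heart of the argument exploits that $P$ is affine in $\Lambda$: solving $P(r,\lproof)=0$ for $\Lambda$ defines $\Lambda_\ast(r):=3(r^2-2\lproof M r+\lproof Q^2)/(\lproof r^4)$ for $Q\neq 0$ (and $\Lambda_\ast(r):=3(r-2\lproof M)/(\lproof r^3)$ for $Q=0$), so that the positive roots of $P$ at a given $\Lambda$ are the abscissae where the horizontal line of height $\Lambda$ meets the graph of $\Lambda_\ast$. Differentiating $P(r,\Lambda_\ast(r))\equiv0$ and using $\partial_\Lambda P=\lproof r^4/3>0$ gives $\signe\Lambda_\ast'(r)=-\signe P'(r)$ and, at a critical point, $\signe\Lambda_\ast''=-\signe P''$; thus case (a) corresponds to $r_0$ lying on an increasing branch of $\Lambda_\ast$ and case (b) to $\Lambda$ equal to a local-minimum value of $\Lambda_\ast$. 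The critical points of $\Lambda_\ast$ solve $r^2-3\lproof M r+2\lproof Q^2=0$ (discriminant $\lproof\beta$, sum $3\lproof M$, product $2\lproof Q^2$), so two distinct positive critical points exist iff $\beta=9\lproof M^2-8Q^2>0$ and $M>0$; their values reduce, after a direct simplification, to $\Lambda_-$ and $\Lambda_+$ of \eqref{eq:Lambdapm}. Since $\Lambda_\ast\to+\infty$ as $r\to0^+$ and $\Lambda_\ast\to0^+$ as $r\to\infty$, the graph descends to the local minimum $\Lambda_-$, rises to the local maximum $\Lambda_+$, and descends back to $0$; reading off the intersections shows that an increasing-branch root (case (a)) exists precisely for $\Lambda\in(\Lambda_-,\Lambda_+)$, degenerating to the local-minimum double root (case (b)) at $\Lambda=\Lambda_-$. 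Intersecting with the sign of $\Lambda$, and using $\signe\Lambda_-=\signe(Q^2-\lproof M^2)$ from Remark~\ref{remarkbeta+}, splits this into cases 1 and 1D ($\Lambda>0$) and cases 4 and 4D ($\Lambda<0$). The same substitution, now with a single positive critical point, handles $Q=0$ (cases 2 and 5, with no local-minimum branch and hence no degenerate subcase), and the residual $\Lambda=0$, $Q\neq0$ situation is the downward parabola $-r^2+2\lproof Mr-\lproof Q^2$, giving cases 3 and 3D directly from the quadratic formula.

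The hard part will be the bookkeeping in the quartic regime rather than any conceptual difficulty: confirming that the critical values of $\Lambda_\ast$ collapse to the closed form \eqref{eq:Lambdapm}, and intersecting $(\Lambda_-,\Lambda_+)$ with the constraint on the sign of $\Lambda$ while tracking the sign of $\Lambda_-$ so as to reproduce the refined dichotomy of Remark~\ref{remarkbeta+}. The delicate endpoint is $\Lambda=\Lambda_+$: there the two largest roots merge at a local \emph{maximum} of $\Lambda_\ast$, where $P''>0$, so this value is of neither type (a) nor (b) and must be excluded from the open interval---this is precisely the content of Remark~\ref{remark_rinf} (and the limit $\Lambda_+|_{Q\to0}=(9\lproof^3M^2)^{-1}$ of case 2). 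Finally, Remark~\ref{remark_origin} is immediate: for $Q\neq 0$ the inequality $P(0,\lproof)=-\lproof Q^2<0$ forces a component $[0,R]$ on which $P\le0$, with $R=r_0$ exactly in the double-root cases 1D, 3D, 4D and $R<r_0$, $P'(R)>0$ otherwise, whereas for $Q=0$ the positivity $P(0,\lproof)=2\lproof M>0$ leaves no positive root below $r_0$.
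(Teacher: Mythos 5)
Your proposal is correct, and it takes a genuinely different route from the paper's own proof. The paper argues directly on the quartic: signs of $P$ at $r=0$ and $r\to\pm\infty$, root counting via the discriminant $\Delta$ (plus the criterion for at most two equal roots) and Vieta relations, and only at the end converts conditions on $\Delta$ into conditions on $\Lambda$ through the concave quadratic $f(\Lambda)=\Delta/\Lambda$, whose roots are $\Lambda_\pm$; it then needs a separate computation (relegated to a footnote) to decide that $\Lambda=\Lambda_-$, not $\Lambda_+$, is the value at which the double root is a local maximum of $P$ rather than the merger of the two largest roots. You instead exploit that $P$ is affine in $\Lambda$, so the positive roots at a given $\Lambda$ are the intersections of a horizontal line with the graph of $\Lambda_\ast(r):=3(r^2-2sMr+sQ^2)/(sr^4)$ (or its $Q=0$ analogue), and the sign dictionary $\mathrm{sign}\,\Lambda_\ast'=-\mathrm{sign}\,P'$, together with $\mathrm{sign}\,\Lambda_\ast''=-\mathrm{sign}\,P''$ at critical points, converts case (a) into ``root on an increasing branch'' and case (b) into ``$\Lambda$ equals a local-minimum value''. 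Given the asymptotics $\Lambda_\ast\to+\infty$ at $0^+$ and $\Lambda_\ast\to 0^+$ at $\infty$, the graph shape yields the interval $(\Lambda_-,\Lambda_+)$, identifies $\Lambda_-$ with cases 1D/4D and $\Lambda_+$ with the $r_0=\rmax$ merger of Remark~\ref{remark_rinf} automatically (this replaces the paper's footnote), and delivers the remarks nearly for free: $\Lambda_+>0$ because a local maximum exceeds the limit $0^+$ at infinity; $\{P\le 0\}=\{\Lambda_\ast\ge\Lambda\}$ is bounded iff $\Lambda>0$; and Remark~\ref{remark_origin} follows from $P(0,s)<0$. Your upfront coefficient-matching exclusion of (c)/(d) is likewise cleaner than in the paper, where these cases are excluded only implicitly by the enumeration, and it also disposes of the boundary $\beta=0$, where the two critical points of $\Lambda_\ast$ merge into an inflection ($P=P'=P''=0$ but $P'''>0$), which is why cases 1 and 1D carry the strict inequality $8Q^2<9sM^2$. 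Two caveats, neither a gap in the approach but both needing to be written out: the sign of $\Lambda_-$ should be derived inside your framework (the local minimum of $\Lambda_\ast$ is negative iff the numerator $r^2-2sMr+sQ^2$ takes negative values, i.e. iff $sM^2>Q^2$) rather than quoted from Remark~\ref{remarkbeta+}, which is part of the statement being proved; and the identification of the two critical values of $\Lambda_\ast$ with the closed form \eqref{eq:Lambdapm} must actually be carried out, either by direct computation using $(r^c)^2=3sMr^c-2sQ^2$ at a critical point $r^c$, or by observing that the critical values are exactly the values of $\Lambda$ at which $P$ has a positive double root, hence zeros of $\Delta/\Lambda$, and matching them to $\Lambda_\mp$ by the ordering local minimum $<$ local maximum.
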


\begin{proof}

We denote by $\Delta$ the discriminant of the fourth-order polynomial, given by
\begin{equation}\label{eq:Delta}
 \Delta=\frac{16}{27}\lproof^2\Lambda \left[-16 \lproof ^4 \Lambda ^2 Q^6-3 \lproof ^2 \Lambda  \left(27 \lproof ^2 M^4-36 \lproof  M^2 Q^2+8 Q^4\right)+9 (\lproof  M^2-Q^2)\right].
\end{equation}
Let us recall that
the discriminant is zero if and only if at least two roots are equal.
In such a case, there are at most two equal roots if and only if (see, e.g., Ref. \cite{Rees1922})
\begin{equation}\label{one_double}
  -1<-4\lproof^2\Lambda Q^2<3.
\end{equation}
If the discriminant is positive, there are either four roots or none.
If negative, there are only two roots.
Finally, if there are four distinct roots, then $\Delta>0$ necessarily.

\emph{Case 1}. Assume  $\Lambda>0$ and $Q\neq 0$. Since $P(0,\lproof)<0$,
and $P(r,s)$ is positive at $r\to\pm\infty$, then $P(r,\lproof)$ has two roots at least,
one positive $r_{1+}>0$, and one negative $r_{1-}<0$, such that \emph{(i)} if more roots
exist, then they are contained in the interval $(r_{1-},r_{1+})$,
and \emph{(ii)} $P'(r_{1-},\lproof)\leq 0$ and $P'(r_{1+},\lproof)\geq 0$.
Moreover, since $r_{1+}$ cannot be a local maximum,
it cannot satisfy some of the requirements of $r_0$.
As a result, for $r_0$ to exist in this case, we need at least a third
positive root.

If $\Delta<0$, there are no more roots,
and therefore, no such $r_0$ exists.
If $\Delta>0$
there are necessarily two more roots,
and since the product of the four roots
equals $-3Q^2/\Lambda$, then these two additional roots cannot vanish and
must have the same sign. Thus, the existence of $r_0$ requires that
these additional two roots are positive.
If $\Delta=0$, there are several possibilities with two or three roots.
However, since we need a third positive root,
and the product of the roots must be negative,
we are left, in principle, with only two possibilities so that $r_0$ can exist:
either there is a third simple root $r_{1s}<r_{1+}$ and $r_{1+}$ is a double root,
or there is a third double root $r_{1d}$, and necessarily $r_{1-}<r_{1d}<r_{1+}$.
In the first case we have $P(r,s)=\lproof\Lambda (r-r_{1+})^2(r-r_{1-})(r-r_{1s})/3$ and
therefore $P'(r_{1s},\lproof)>0$. Thence, $r_{1s}$ cannot satisfy the requirements
of $r_0$.
In the second case we have $P(r,s)=\lproof\Lambda (r-r_{1+})(r-r_{1-})(r-r_{1d})^2/3$, thus $P'(r_{1d},\lproof)=0$
and $P''(r_{1d},\lproof)<0$, and therefore $r_{1d}$ is a local maximum.

Let us further assume first that $M> 0$.
We prove next that, if there is a local maximum of $P(r,s)$, it must be for $r>0$,
which shows that $r_{1d}$ (if $\Delta=0$ and it exists)
and the two additional roots (if $\Delta>0$ and they exist)
must be positive.
Let $a$ satisfy $P'(a,\lproof)=0$, which is equivalent to
$2\lproof\Lambda a^3/3-a+\lproof M=0$. This implies
$a(3-2a^2\lproof\Lambda)>0$. If $a<0$,
then $3-2a^2\lproof\Lambda<0$, so
$P''(a,\lproof)=-2(1-2 a^2\lproof\Lambda)>0$.
Therefore $a> 0$ necessarily to have a local maximum there.
Observe that there is only one local maximum of $P(r,\lproof)$.
As a result, if $\Delta=0$ and $r_{1d}$ exists, it is positive,
and, if $\Delta>0$, three of the roots are positive.
In the former case, if $r_{1d}$ exists,
it satisfies the requirements [with \emph{(b)}], so that we can set
  $r_0=r_{1d}$, and the intervals in $r\geq 0$ where $P(r,\lproof)\leq 0$ are then given by
$[0,r_{1d}]$ and $[r_{1d},r_{1+}]$.
In the latter case, if we denote the three roots by $0<r_a<r_b<r_{1+}$,
we clearly have that $P'(r_b,\lproof)<0$, and thus $r_0=r_b$
(and only that) satisfies the requirements [with \emph{(a)}].
In this case the ranges for which $P(r,\lproof)\leq 0$ in $r\geq 0$ are
  given by the bounded intervals $[0,r_a]$ and $[r_b,r_{1+}]$.
The determination of the constraints $\Delta>0$,
and $\Delta=0$ plus the fact that the double root $r_{1d}$ is a third root
(and hence no more double roots exist),
in terms of $M$, $Q$, and $\Lambda$, is left to the end of the proof.

If $M=0$, the polynomial
is even, with a unique local maximum at $r=0$, where $P(r,\lproof)$ takes a negative value.
Therefore, only the root $r_{1+}$ is a positive root; thus, none satisfies the requirements of $r_0$.

Finally, the case $M<0$ can be dealt with by applying the same arguments above
under the change $r\to -r$ ---observe that $P(r,s)$ is invariant under
a change $(M\to -M, r\to -r)$--- and using that $r_{1-}$ cannot be a local maximum either.
As a result, one needs a third positive root, but the only possibilities
provide extra negative roots; therefore, no $r_0$ exists.

\emph{Case 2}. Assume  $\Lambda>0$ and $Q= 0$.
Now $P(r,s)$ is a third-order polynomial in $r$, it thus has at least one root
  $r_{2o}$, it satisfies $P(0,\lproof)=2\lproof M$, $P'(0,\lproof)=-1$,
has a local maximum at $a_-=-\sqrt{1/(\lproof\Lambda)}$,
and a local minimum at $a_+=\sqrt{1/(\lproof\Lambda)}$.
Because $P(0,\lproof)=2\lproof M$,
if $M\neq 0$ the root $r_{2o}$ has the sign of $-M$
and if there are additional roots (either one double or two distinct),
since the product of the three roots must equal $2\lproof M$,
they must have the opposite sign.

Assume first that $M>0$. Thus $r_{2o}<0$, and in order to have a positive root,
we need more roots. That condition is fulfilled when the local minimum
is attained at a nonpositive value of $P(r,s)$, that is, $P(a_+,\lproof)\leq 0$.
However, since $P''(r>0,\lproof)>0$, from the requirements of $r_0$ we must
request that
$P'(r_0,\lproof)<0$. This implies that we need two more distinct roots,
and therefore $P(a_+,\lproof)< 0$, which is equivalent to
$1/(3\sqrt{\lproof\Lambda})-\lproof M >0$.
The two roots must be positive, as mentioned, and, if we denote them
by $0<r_a<r_{2+}$, then
$r_a$ clearly satisfies $P'(r_a,\lproof)<0$ necessarily, and
thus $r_0=r_a$, and only that, satisfies the requirements of $r_0$.
Moreover, the only domain where $P(r,\lproof)\leq 0$
is the bounded interval $[r_0,r_{2+}]$, and, since $a_+<r_{2+}$, we have
$P'(r_{2+},\lproof)>0$. In this case, there are no more
positive roots smaller than $r_0$.

If $M=0$, the three roots of the polynomial are $r=0,\pm\sqrt{3/(\lproof\Lambda)}$, and it is straightforward to check that
$P'(r,s)$ is positive at the positive root.

Assume now that $M<0$. Then $r_{2o}>0$. Since it is the
only possible positive root, we necessarily have $P'(r_{2o},\lproof)>0$,
and thus no root fulfills the requirements of $r_0$.

\emph{Case 3}. If  $\Lambda=0$ and $Q\neq 0$,
$P(r,s)$ becomes a second order polynomial in $r$, so
the analysis is quite straightforward.
The necessary and sufficient condition for the existence of roots is
$\lproof M^2 -Q^2\geq 0$,  which needs $M\neq 0$,
and then all roots are positive if and only if $M>0$.
Assume $M>0$; then, if  $\lproof M^2 - Q^2 > 0$,
denoting by $0<r_{3a}<r_{3b}$ the two distinct roots,
$r_{3b}$ clearly satisfies the requirements of $r_0$
and we can set $r_0=r_{3b}$ [with \emph{(a)}].
In this case, the interval where $P(r,\lproof)\leq 0$ with infimum $r_0$
is unbounded from above.
The saturation of the inequality,
$\lproof M^2 -Q^2=0$, yields a double root $r_d=\lproof M$
with $P'(r_d,\lproof)=0$. Since $P''(r,s)<0$ everywhere,
$r_d$ meets the requirements with \emph{(b)}, and, since $P(0,\lproof)<0$,
the intervals where $P(r,s)\leq 0$ are given by $[0,r_0]$ and $[r_0,\infty)$ with $r_0=r_d$.
If $M<0$, no root satisfies the requirements.

In the case  $\Lambda=0$ and $Q= 0$  we only have one simple root $r_3=2\lproof M$
and $P'(r_3,s)=-1$,
and thus only if $M>0$ $r_3$ satisfies the requirements [with \emph{(a)}],
and we have $P(r,\lproof)<0$ only for $r>r_{3}$.

\emph{Case 4}. Take  $\Lambda<0$ and $Q\neq 0$.
We have that $P''(r,\lproof)<0$ for all $r$
and, given the asymptotic behavior, either there are no roots,
there is a double one, or there are two, and $P(r,\lproof)$ has one maximum in $r$.
Assume now $M>0$. Then the maximum is located at positive values of $r$
because $P'(0,\lproof)>0$.
Therefore, since $P(0,\lproof)<0$, if there is any root, it must be positive. 
Since  $P''(r,\lproof)<0$, in order to have a root $r_0$
fulfilling the requirements, we need either
$P'(r_0,\lproof)=0$ and to have a double root $r_{4d}$, or
$P'(r_0,\lproof)<0$
so that there are two distinct roots $0<r_a<r_b$.
In the double root case we need $\Delta=0$ (and that is sufficient),
while we have two simple roots since no four roots can exist
if and only if $\Delta<0$.
The double root $r_0=r_{4d}$ satisfies the requirements
[with \emph{(b)}], and we have $P(r,\lproof)<0$ for all $r>r_0=r_{4d}$.
In the $\Delta<0$ case, clearly, $r_0=r_b$, and only that
satisfies the requirements [with \emph{(a)}].
In both cases the interval in $r$ where $P(r,\lproof)<0$ with infimum $r_0$ is
unbounded from above.
Moreover, we have $P(r,\lproof)<0$ on $r\in(0,r_0)$ in the $\Delta=0$ case,
and on $r\in(0,r_a)$ in the $\Delta<0$ case.

Like before, the case $M<0$ can be treated by changing $r\to -r$ to the previous
analysis. But now the roots, if any, are negative,
and we thus have $P(r,\lproof)<0$ for all $r\geq 0$.
If $M=0$ then also $P(r,\lproof)<0$ for all $r\geq 0$.

\emph{Case 5}. Assume  $\Lambda<0$ and $Q= 0$. We have $P'(r,\lproof)<0$ for all $r$
and $P(0,\lproof)=2\lproof M$. Therefore there is a positive root
if and only if $M>0$, and that root is $r_0$, which is the only root
and it is simple,
thus satisfying the requirements [with \emph{(a)}].
The interval where $P(r,s)\leq 0$ is then $[r_0,\infty)$.

\begin{figure}
\centering
\begin{minipage}[b]{0.24\textwidth}
\centering
\includegraphics[width=\textwidth]{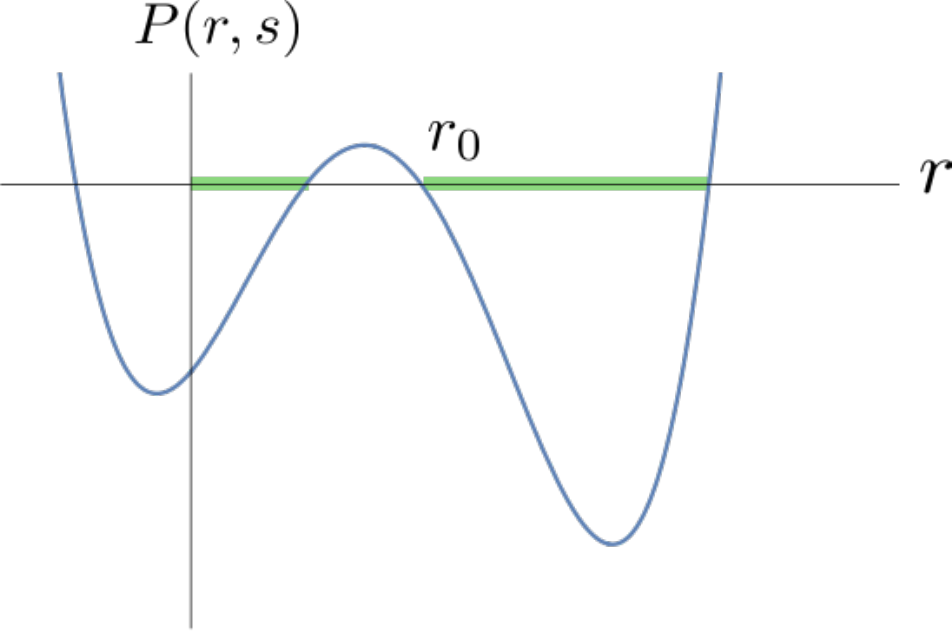}
\scriptsize{%
$\Lambda>0$, $Q\neq0$, $\Delta>0$\\and $M>0$}
\end{minipage}
\begin{minipage}[b]{0.24\textwidth}
\centering
\includegraphics[width=\textwidth]{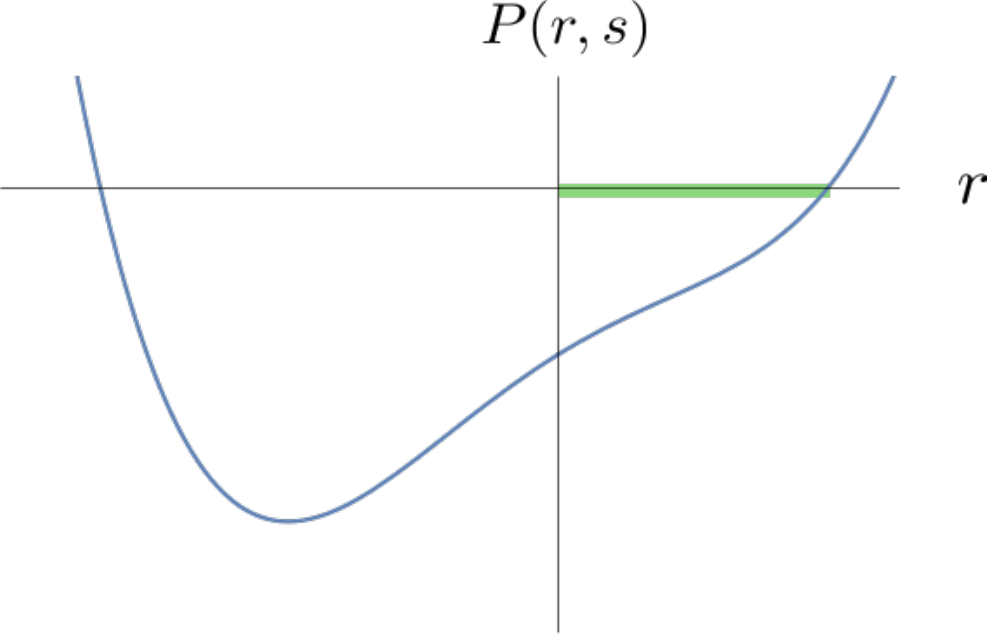}
\scriptsize{%
$\Lambda>0$, $Q\neq0$, $\Delta<0$\\and $M<0$}
\end{minipage}
\begin{minipage}[b]{0.24\textwidth}
\centering
\includegraphics[width=\textwidth]{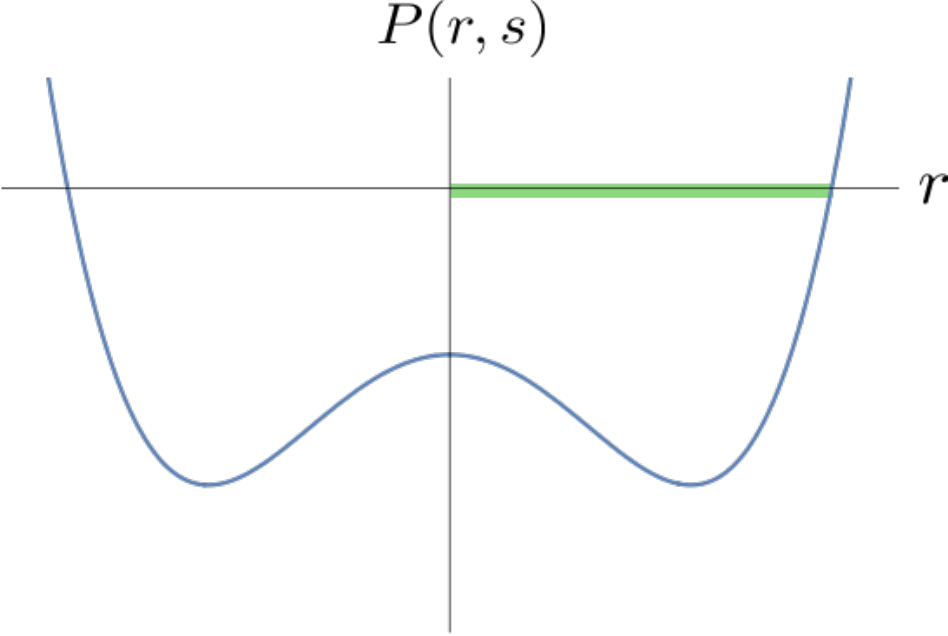}
\scriptsize{%
$\Lambda>0$, $Q\neq0$\\and $M=0$}
\end{minipage}\\[12pt]
\begin{minipage}[b]{0.24\textwidth}
\centering
\includegraphics[width=\textwidth]{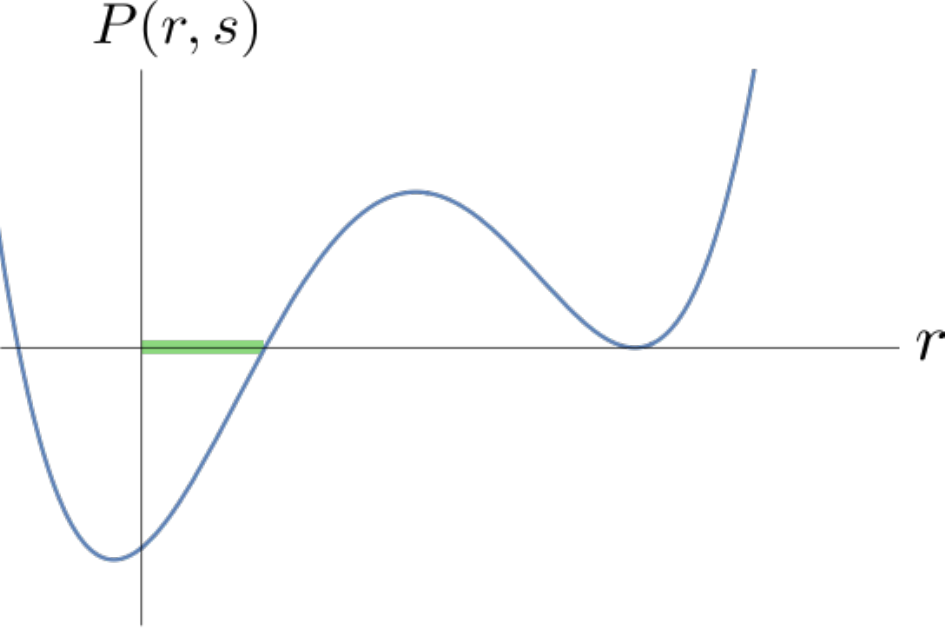}
\scriptsize{%
$\Lambda>0$, $Q\neq0$, $\Lambda=\Lambda_+$\\and $M>0$}
\end{minipage}
\begin{minipage}[b]{0.24\textwidth}
\centering
\includegraphics[width=\textwidth]{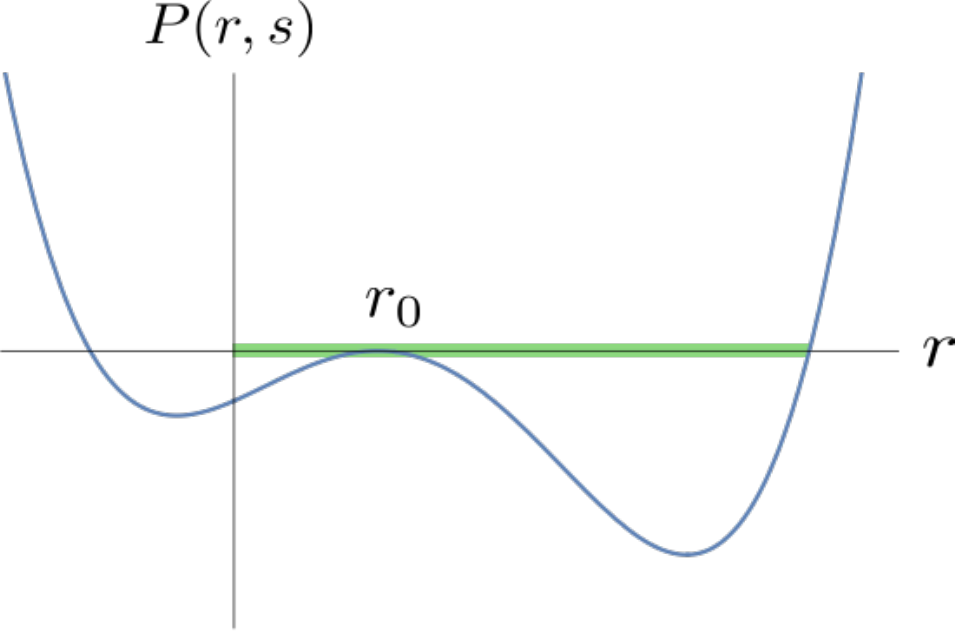}
\scriptsize{%
$\Lambda>0$, $Q\neq0$, $\Lambda=\Lambda_-$\\and $M>0$}
\end{minipage}
\begin{minipage}[b]{0.24\textwidth}
\centering
\includegraphics[width=\textwidth]{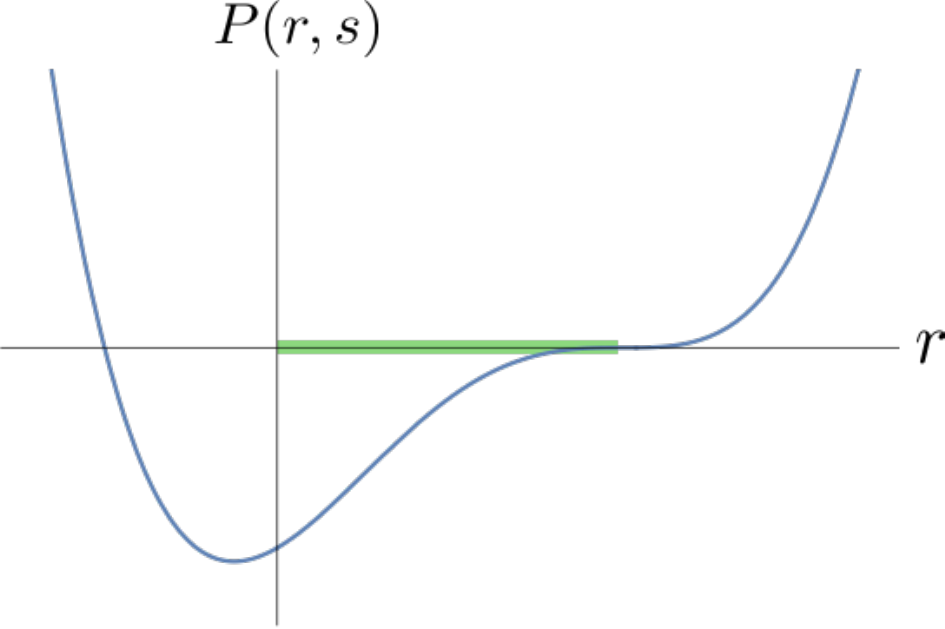}
\scriptsize{%
$\Lambda>0$, $Q\neq0$, $\Lambda=\Lambda_+=\Lambda_-$\\and $M>0$}
\end{minipage}\\[12pt]
\begin{minipage}[b]{0.24\textwidth}
\centering
\includegraphics[width=\textwidth]{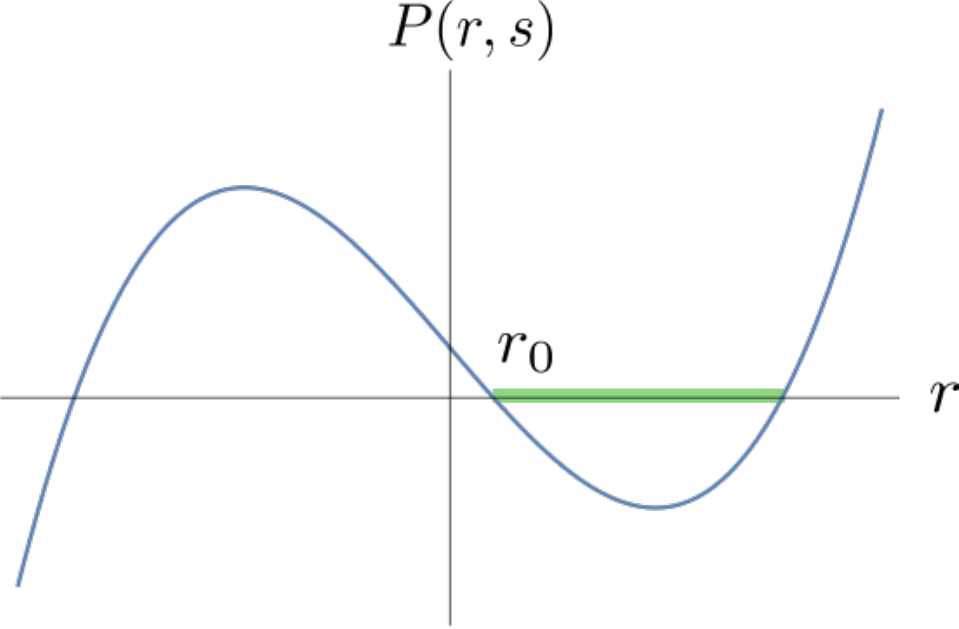}
\scriptsize{%
$\Lambda>0$, $Q=0$, $9s^{3}{\Lambda}M^2<1$\\and $M>0$}
\end{minipage}
\begin{minipage}[b]{0.24\textwidth}
\centering
\includegraphics[width=\textwidth]{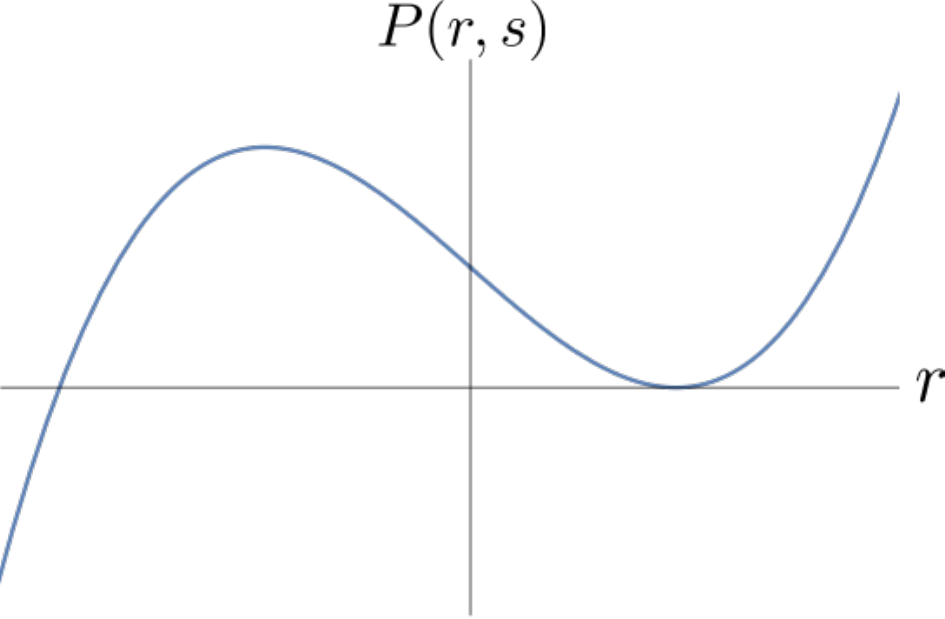}
\scriptsize{%
$\Lambda>0$, $Q=0$, $9s^{3}{\Lambda}M^2=1$\\and $M>0$}
\end{minipage}
\begin{minipage}[b]{0.24\textwidth}
\centering
\includegraphics[width=\textwidth]{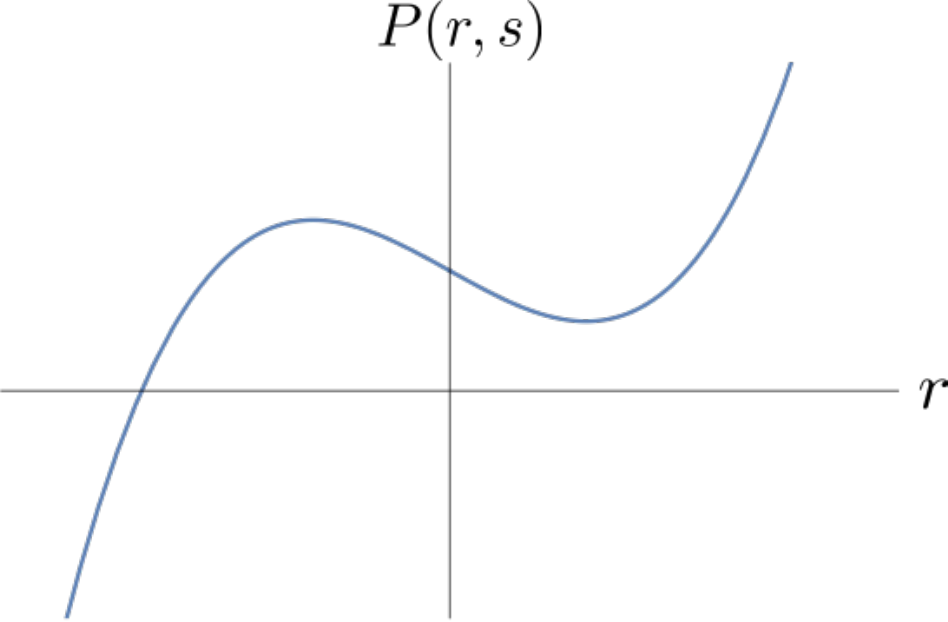}
\scriptsize{%
$\Lambda>0$, $Q=0$, $1<9s^{3}{\Lambda}M^2$\\and $M>0$}
\end{minipage}
\begin{minipage}[b]{0.24\textwidth}
\centering
\includegraphics[width=\textwidth]{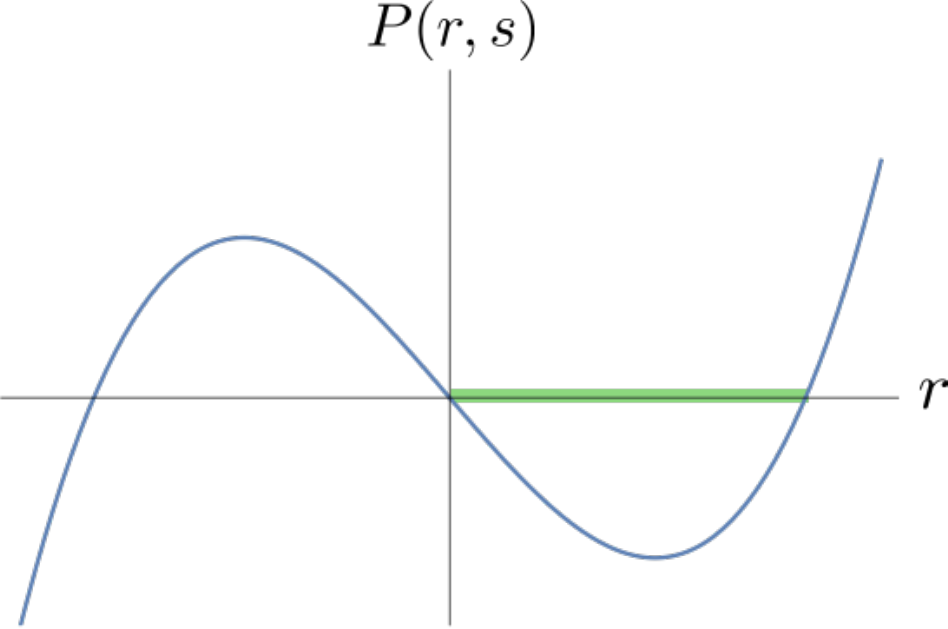}
\scriptsize{%
$\Lambda>0$, $Q=0$\\and $M=0$}
\end{minipage}\\[12pt]
\begin{minipage}[b]{0.24\textwidth}
\centering
\includegraphics[width=\textwidth]{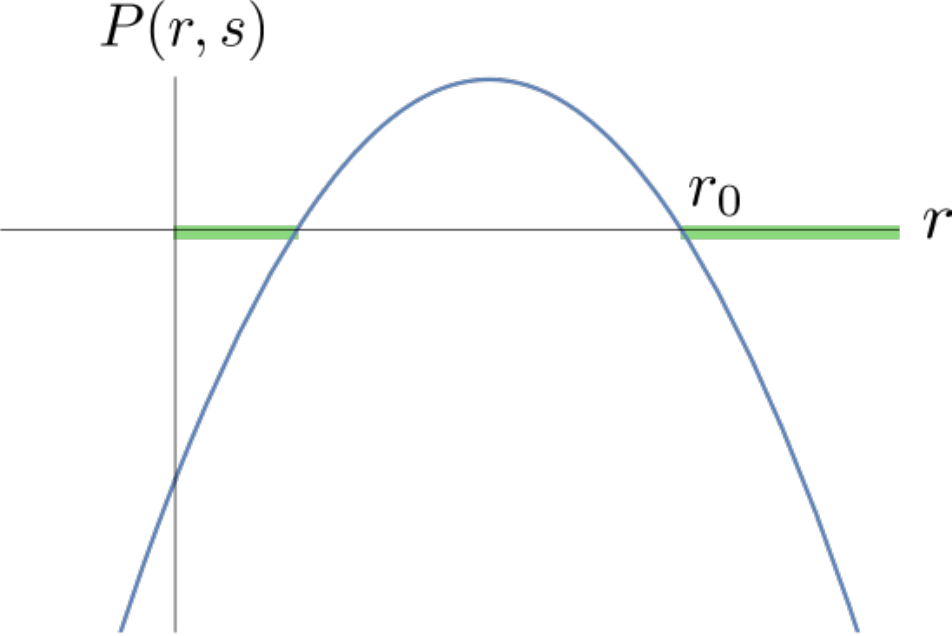}
\scriptsize{%
$\Lambda=0$, $0< Q^2<sM^2$\\and $M>0$}
\end{minipage}
\begin{minipage}[b]{0.24\textwidth}
\centering
\includegraphics[width=\textwidth]{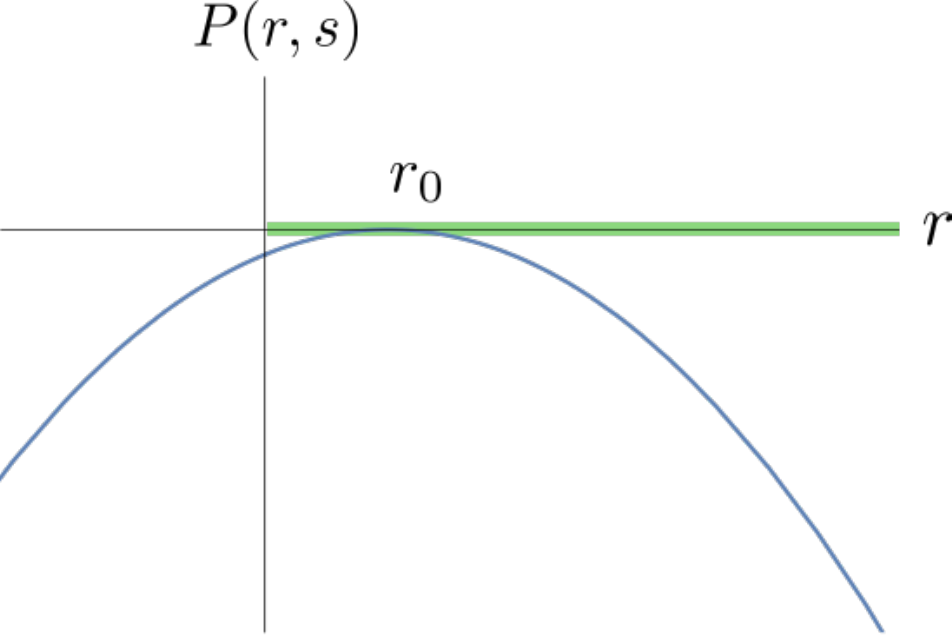}
\scriptsize{%
$\Lambda=0$, $Q^2=sM^2$\\and $M>0$}
\end{minipage}
\begin{minipage}[b]{0.24\textwidth}
\centering
\includegraphics[width=\textwidth]{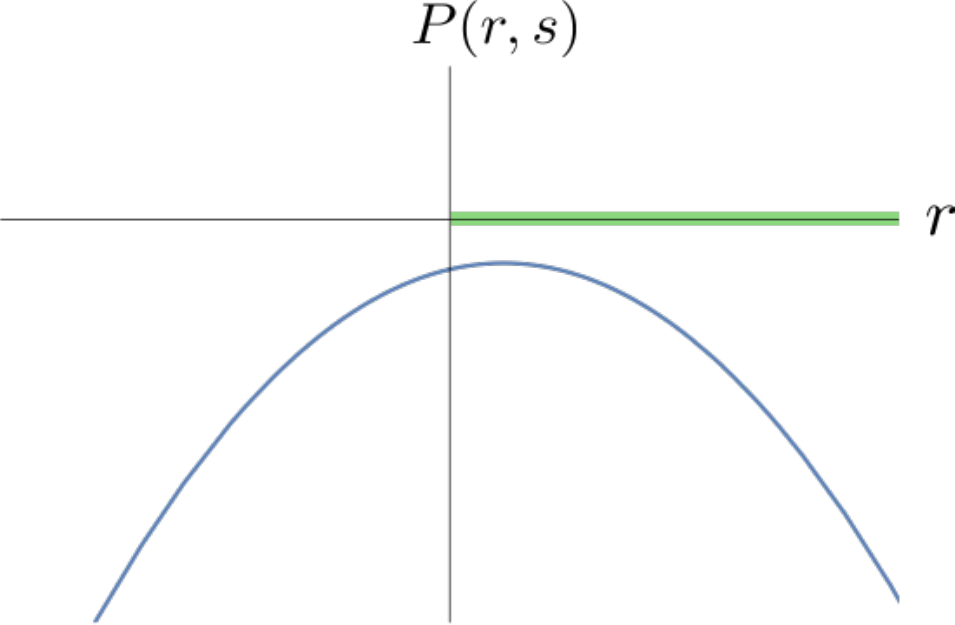}
\scriptsize{%
$\Lambda=0$, $Q^2>sM^2$\\and $M>0$}
\end{minipage}
\begin{minipage}[b]{0.24\textwidth}
\centering
\includegraphics[width=\textwidth]{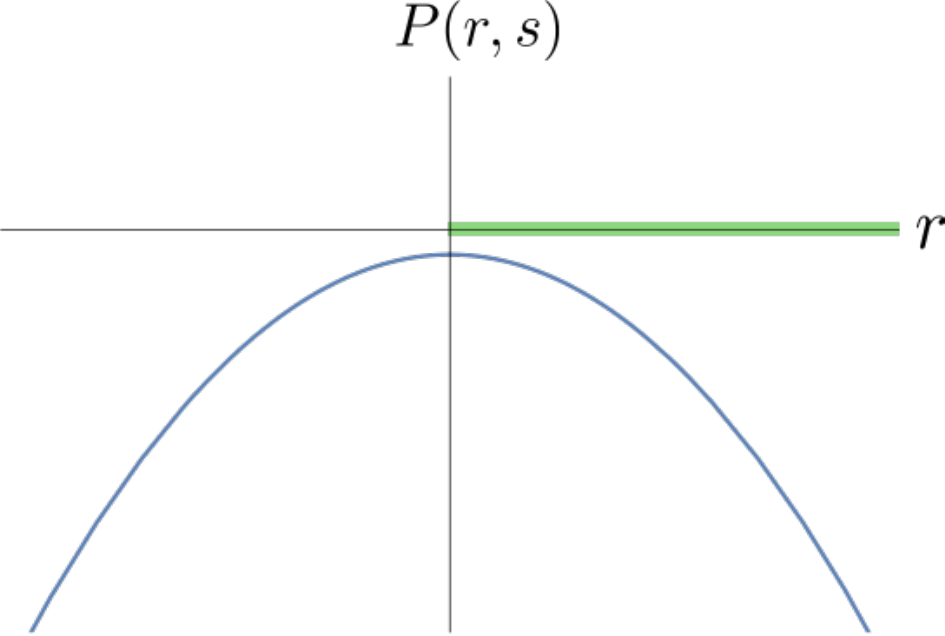}
\scriptsize{%
$\Lambda=0$, $Q\neq0$\\and $M=0$}
\end{minipage}\\[12pt]
\begin{minipage}[b]{0.24\textwidth}
\centering
\includegraphics[width=\textwidth]{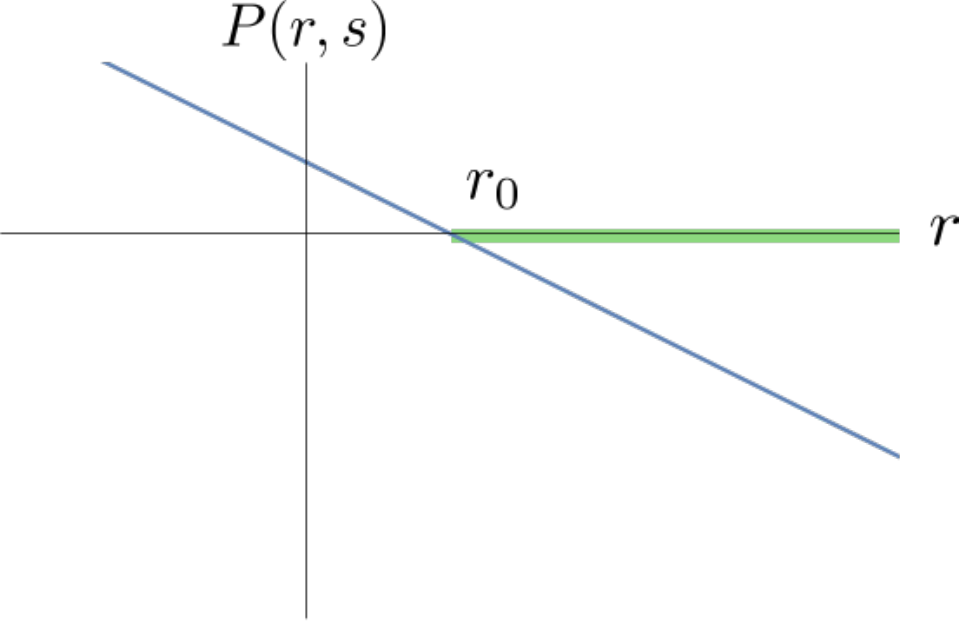}
\scriptsize{%
$\Lambda=0$, $Q=0$ and $M>0$}
\end{minipage}
\begin{minipage}[b]{0.24\textwidth}
\centering
\includegraphics[width=\textwidth]{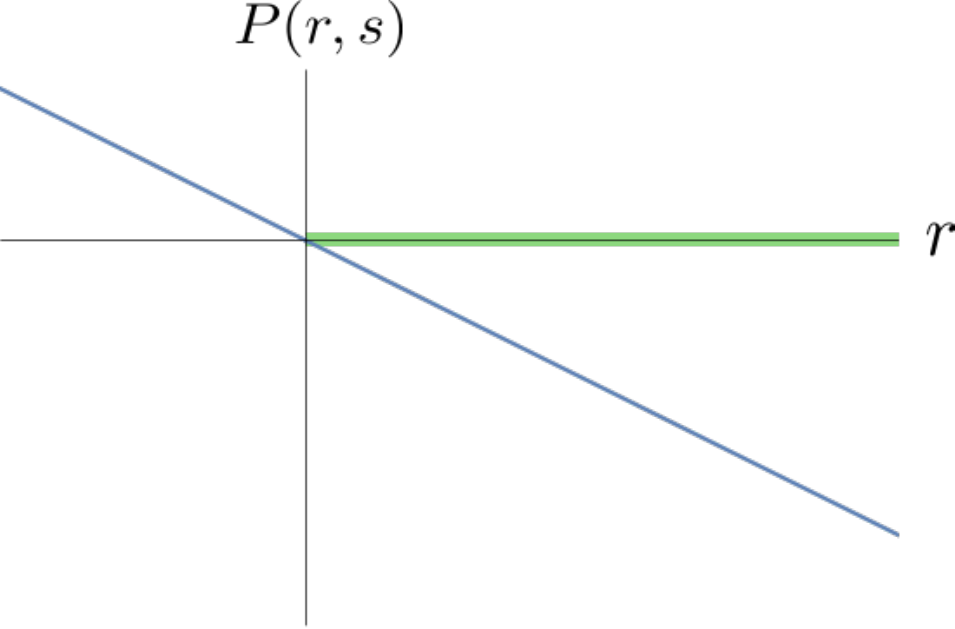}
\scriptsize{%
$\Lambda=0$, $Q=0$ and $M=0$}
\end{minipage}\begin{minipage}[b]{0.24\textwidth}
\centering
\includegraphics[width=\textwidth]{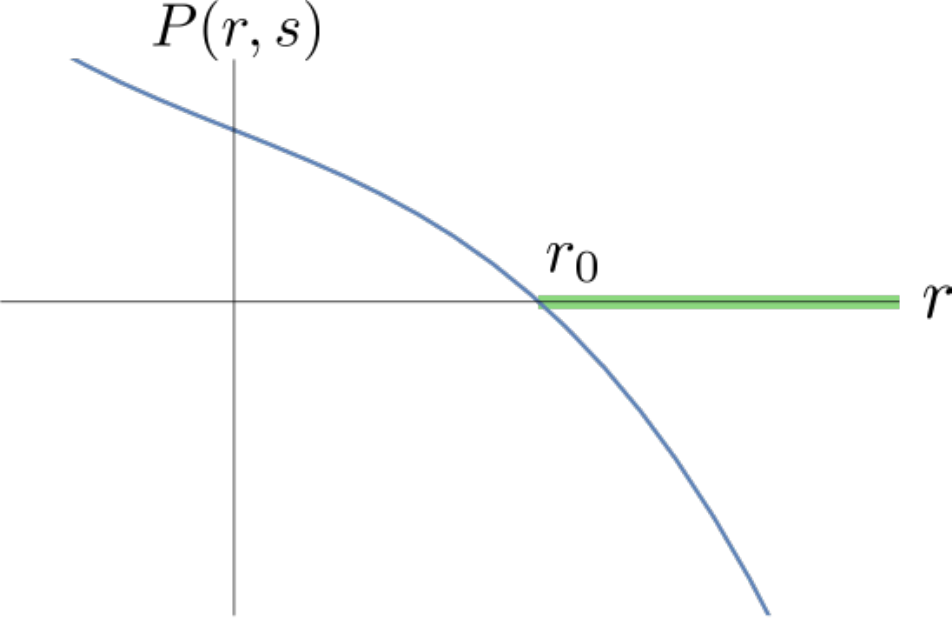}
\scriptsize{%
$\Lambda<0$, $Q=0$ and $M>0$}
\end{minipage}
\begin{minipage}[b]{0.24\textwidth}
\centering
\includegraphics[width=\textwidth]{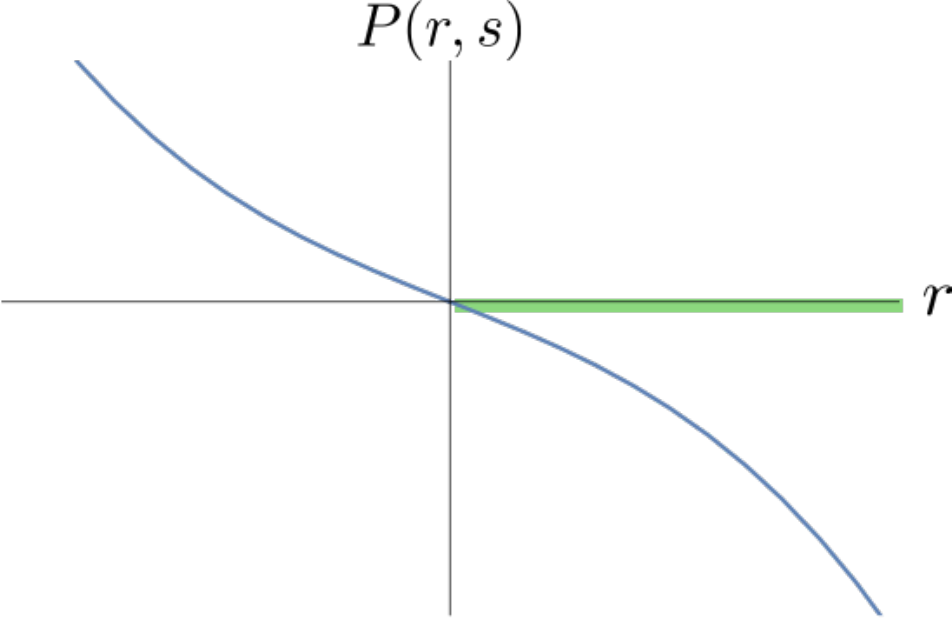}
\scriptsize{%
$\Lambda<0$, $Q=0$ and $M=0$}
\end{minipage}\\[12pt]
\vspace{12pt}
\begin{minipage}[b]{0.24\textwidth}
\centering
\includegraphics[width=\textwidth]{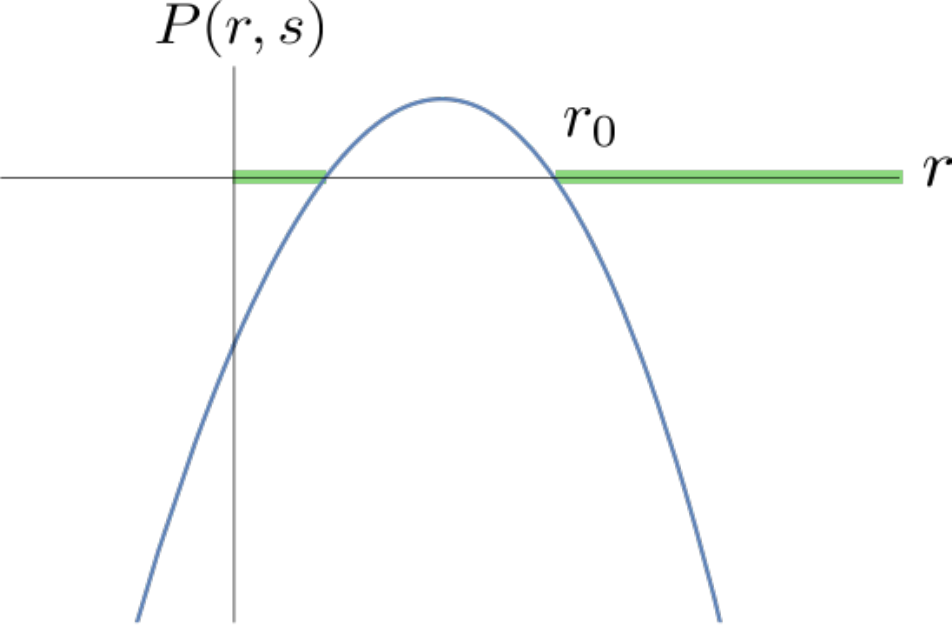}
\scriptsize{%
$\Lambda_-<\Lambda<0$, $Q\neq0$\\and $M>0$}
\end{minipage}
\begin{minipage}[b]{0.24\textwidth}
\centering
\includegraphics[width=\textwidth]{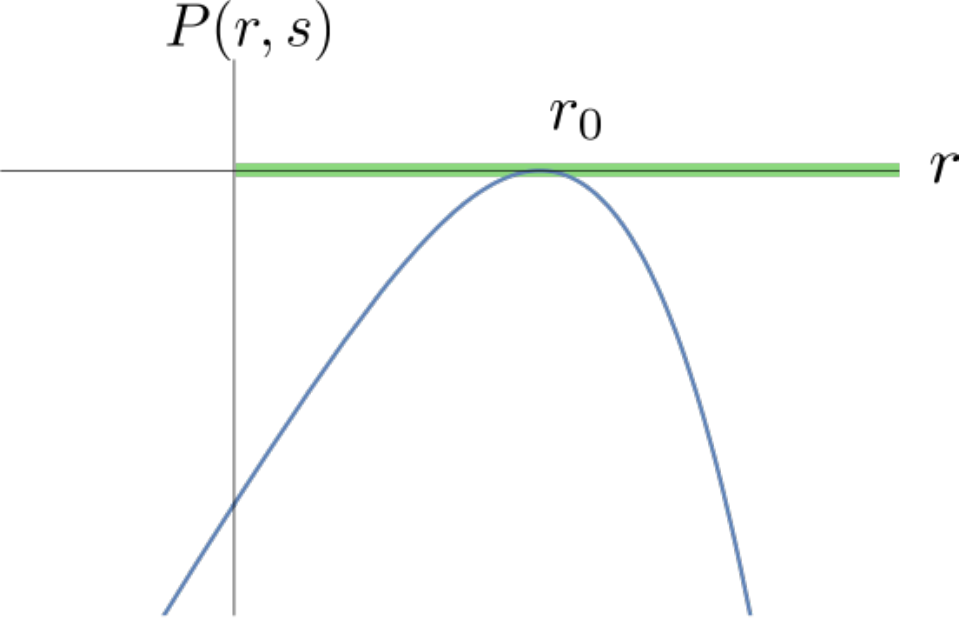}
\scriptsize{%
$\Lambda=\Lambda_-<0$, $Q\neq0$\\and $M>0$}
\end{minipage}
\begin{minipage}[b]{0.24\textwidth}
\centering
\includegraphics[width=\textwidth]{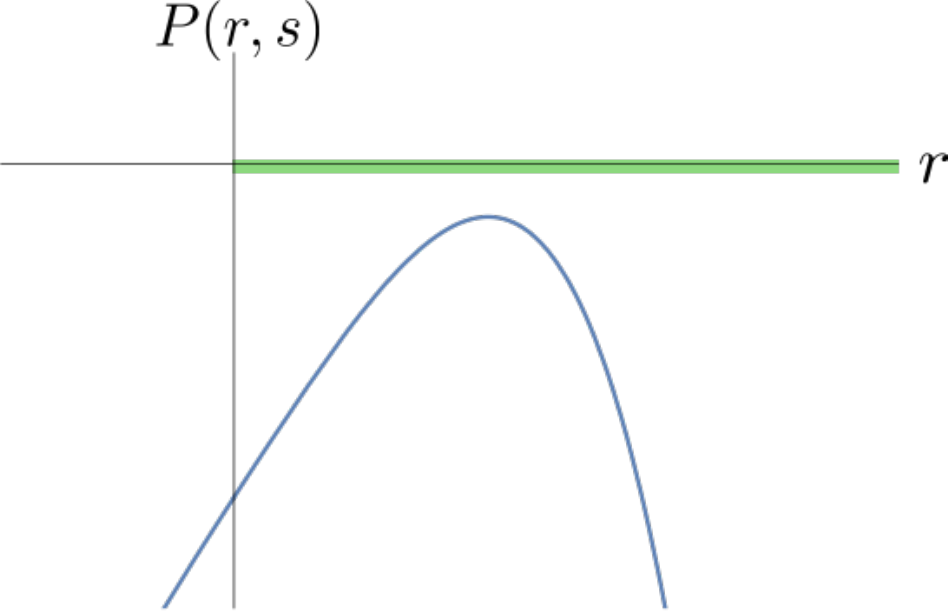}
\scriptsize{%
$\Lambda<\Lambda_-<0$, $Q\neq0$\\and $M>0$}
\end{minipage}
\begin{minipage}[b]{0.24\textwidth}
\centering
\includegraphics[width=\textwidth]{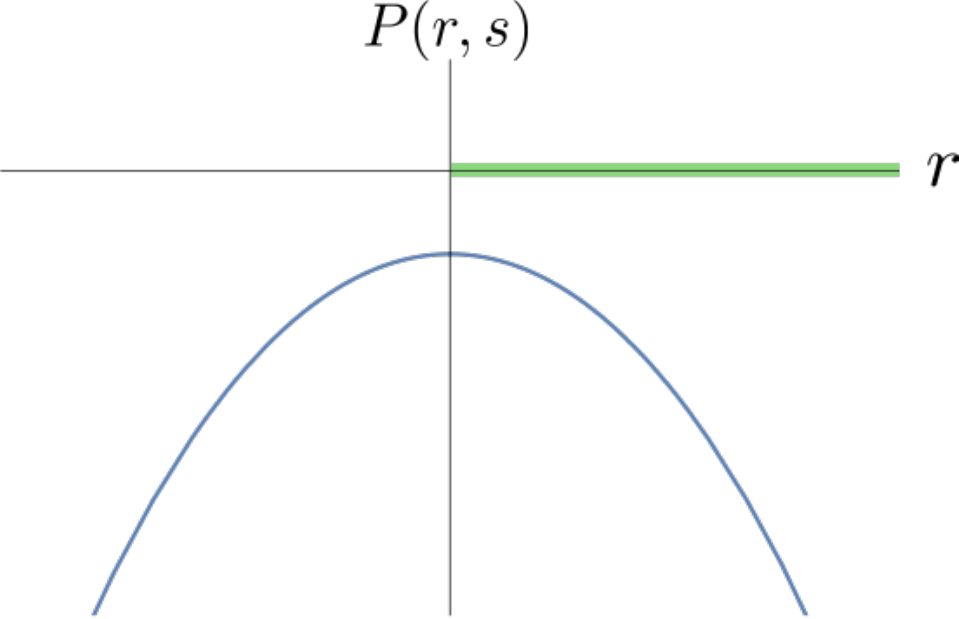}
\scriptsize{%
$\Lambda<0$, $Q\neq0$\\and $M=0$}
\end{minipage}
\caption{All possible cases for $P(r,s)$ with $M\geq0$. The allowed regions for $r$ where $P(r,s)<0$ are highlighted in green. We label the points $r_0$. Let us point out that the cases with $M<0$ are $P(r,s)|_{M=-A}=P(-r,s)|_{M=A}$ when $Q\neq0$ and $P(r,s)|_{M=-A}=-P(-r,s)|_{M=A}$ when $Q=0$, and thus no $r_0$ appears.}
\label{fig.proofs}
\end{figure}

It only remains to obtain the constraints in terms of $M, \Lambda$, and $Q$, imposed by
the conditions on $\Delta$ in cases 1 and 4,
and the existence of a third double root in case 1. Observe first that all
conditions on $\Delta$ are equivalent to the condition
$f(\Lambda):=\Delta/\Lambda\geq 0$.
We use next that the function $f(\Lambda)$
is a second-order polynomial in $\Lambda$
with $f''(\Lambda)<0$ for all $\Lambda$. Therefore, we have $f(\Lambda)\geq 0$ for
$\Lambda\in [\Lambda_-,\Lambda_+]$ with $f(\Lambda)=0$
  at the extremes $\Lambda= \Lambda_\pm$, if two distinct roots $\Lambda_\pm$ \eqref{eq:Lambdapm} exist.
  This happens when (and only when) $\beta:=9\lproof M^2-8Q^2>0$.
  In the degenerate case $\beta=0$, we have
$\Lambda_-=\Lambda_+$ and $f(\Lambda)=0$ only at $\Lambda=\Lambda_-=\Lambda_+$
and negative elsewhere.
On the other hand, since $f(0)=\lproof M^2-Q^2$,
then $f(0)\geq 0$ implies $\beta>0$.
Moreover, when $f(0)>0$, then  $\Lambda_-<0$ and $\Lambda_+>0$;
when $f(0)=0$, we have $\Lambda_-=0$ and $\Lambda_+>0$;
and when $f(0)<0$, the two roots $\Lambda_\pm$ (if they exist) are positive.

As a result, if $\Lambda>0$, we have $f(\Lambda)>0$ if and only if  $\beta>0$.
In such a case, if $f(0)\geq 0$ then $\Lambda\in (0,\Lambda_+)$;
and if $f(0)< 0$, then $\Lambda\in (\Lambda_-,\Lambda_+)$.
On the other hand, in the case $\Lambda>0$ with $f(\Lambda)=0$ and a third double root in $P(r,s)$, we need
that \eqref{one_double} holds, which reduces to just $1-4\lproof^2\Lambda_\pm Q^2>0$
because we are assuming $\Lambda>0$.
Since for $\beta\geq 0$
the following equalities hold
\[
  1-4\lproof^2\Lambda_\pm Q^2=\frac{\beta}{8 Q^4}
  \left(\beta+4Q^2\mp3\sqrt{\lproof M^2\beta}\right)
  =\frac{\beta}{8Q^4}
  \left(\beta+4Q^2\mp \sqrt{(\beta+4Q^2)^2-16 Q^{4}}\right),
\]
then we have at most one double root if and only if $\beta>0$
in both cases $\Lambda=\Lambda_-$
and $\Lambda=\Lambda_+$.
It only remains to check in which of the two cases the double root
is not $r_{1+}$. To do that, we use that a double root
must satisfy $P(a,\lproof)=0$ and $P'(a,\lproof)=0$. Using both equations
one obtains \footnote{First, we consider $4P(a,s)-aP'(a,s)=0$, solve for $a^2=3sMa-2sQ^2$, and substitute it (recursively) in $2P(a,s)-aP'(a,s)=0$. In that way, we obtain an equation linear in both $\Lambda$ and $a$. Solving for $1/\Lambda$, and substituting $Q^2/a$ by $3M/2-a/(2s)$ we obtain the desired relation.} the relation
  \[
    \frac{1}{\Lambda}-4\lproof^2 Q^2=\frac{2}{3}\lproof^2\beta\left(1-\frac{M\lproof}{a}\right)^{-1},
  \]
  which must hold for $\Lambda=\Lambda_+$ and $\Lambda=\Lambda_-$,
  providing $a_+$ and $a_-$ correspondingly.
  From this relation, and since $\Lambda_-<\Lambda_+$, we obtain
  $a_-<a_+$. Therefore, when $\Lambda=\Lambda_+$ the double root
  must correspond to $r_{1+}$ (this corresponds to
    the double root case in Remark \ref{remark_rinf}) %
  while $\Lambda=\Lambda_-$ [as long as $\Lambda_->0$,
  thus $f(0)<0$]
  produces the desired third (and double) root $r_{1d}<r_{1+}$.

On the other hand, if $\Lambda<0$ the only possibility is that $f(0)>0$, and
then $\Lambda\in(\Lambda_-,0)$ for $f(\Lambda)>0$ and $\Lambda=\Lambda_-$
in the extremal case $f(\Lambda)=0$.
\end{proof}

\subsection{Horizons}\label{app.proofs.hor}
We prove next the subclassification of the cases $C_1$, $D_1$, and $C_2$
(corresponding to cases $1$, $1D$, and $2$ above for $s=\lbar$)
into the BH, extremal and cosmos subcases,
depending on the existence and degeneracy of horizons $r_H$ and $r_C$.
The key to obtain the different cases is the fact that
$r_H(M,Q,\Lambda)=r_0(M,Q,\Lambda,1)$ and $r_C(M,Q,\Lambda)=\rmax(M,Q,\Lambda,1)$.
Let us define $\Lambda_\pm(s)$, as functions of $s$,
as given by \eqref{eq:Lambdapm}
for fixed values of $M$ and $\lmin$, with $\lmin:=8Q^2/(9M^2)$,
which is nonnegative by definition.
We will also need the following result.

\begin{lemma}\label{lemma:min_s_1}
  Consider $\lmin$ as defined above and assume $\lmin<1$. Restrict $s$ to $s>\lmin$.
  Then both $\Lambda_\pm(s)$ are monotonically decreasing functions.
\end{lemma}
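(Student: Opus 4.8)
The plan is to treat $M$ and $Q\neq 0$ (hence $\lmin=8Q^2/(9M^2)$) as fixed and differentiate the closed-form expression \eqref{eq:Lambdapm} directly with respect to $s$. The crucial first move is to simplify the radical term so that the $s$-dependence becomes transparent: writing $C:=32Q^6$ and recalling $\beta=9sM^2-8Q^2$, one has $\sqrt{sM^2\beta^3}/s^2=|M|\,(\beta/s)^{3/2}$, so that
\begin{equation*}
  \Lambda_\pm(s)=\frac{-3}{C}\left(27M^4-\frac{36M^2Q^2}{s}+\frac{8Q^4}{s^2}\right)\pm\frac{3|M|}{C}\left(\frac{\beta}{s}\right)^{3/2}=:T_1(s)\pm T_2(s).
\end{equation*}
This isolates a ``rational part'' $T_1$ common to both roots and a single nonnegative ``radical part'' $T_2$ that enters with opposite signs in $\Lambda_\pm$. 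Throughout I would use the two consequences of the restriction $s>\lmin$: first $\beta>0$, so $T_2$ is real, and second $9M^2s-4Q^2>9M^2\lmin-4Q^2=4Q^2>0$.

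Next I would differentiate each piece separately. A routine computation gives $T_1'(s)=-12Q^2(9M^2s-4Q^2)/(Cs^3)<0$, the sign being immediate from the second consequence above. For the radical part, observe that $\beta/s=9M^2-8Q^2/s$ is strictly increasing in $s$; hence $T_2$ is strictly increasing, with $T_2'(s)=36|M|Q^2(\beta/s)^{1/2}/(Cs^2)\geq 0$. The root $\Lambda_-=T_1-T_2$ is therefore a difference of a strictly decreasing and a strictly increasing function, so $\Lambda_-'(s)=T_1'(s)-T_2'(s)<0$ with no further work.

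The only genuine obstacle is $\Lambda_+=T_1+T_2$, where the decreasing $T_1'$ must be shown to dominate the increasing $T_2'$. Factoring out the common positive constant yields $\Lambda_+'(s)=\frac{12Q^2}{Cs^3}\big(3\sqrt{sM^2\beta}-(9M^2s-4Q^2)\big)$, so $\Lambda_+'<0$ reduces to the inequality $3\sqrt{sM^2\beta}<9M^2s-4Q^2$. Both sides are positive (the right-hand one by the second consequence above), so I would square them: the left-hand square is $9sM^2\beta=81M^4s^2-72M^2Q^2s$, the right-hand square is $(9M^2s-4Q^2)^2=81M^4s^2-72M^2Q^2s+16Q^4$, and their difference is exactly $16Q^4>0$. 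This settles the inequality and hence $\Lambda_+'(s)<0$, completing the argument. Note that the hypothesis $\lmin<1$ is not used in the monotonicity computation itself; it only ensures that the interval $(\lmin,1]$ on which this lemma is subsequently applied is nonempty.
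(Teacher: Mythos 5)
Your proof is correct: the decomposition $\Lambda_\pm=T_1\pm T_2$, the derivatives $T_1'=-12Q^2(9M^2s-4Q^2)/(Cs^3)$ and $T_2'=36|M|Q^2(\beta/s)^{1/2}/(Cs^2)$, the bound $9M^2s-4Q^2>4Q^2>0$ for $s>\lmin$, and the final squaring step (with difference exactly $16Q^4$) all check out. The paper follows the same basic strategy---differentiate the closed form and verify the sign---but organizes it differently: it first rescales to $f_\sigma(s):=\tfrac{9}{2}M^2\lmin^3\Lambda_\sigma(s)+8$, $\sigma=\pm1$, which after substituting $8Q^2=9M^2\lmin$ becomes a function of $\lmin/s$ alone, $f_\sigma(s)=3\tfrac{\lmin}{s}\left(4-\tfrac{\lmin}{s}\right)+8\sigma\left(1-\tfrac{\lmin}{s}\right)^{3/2}$, and then recognizes its derivative as a negative multiple of a perfect square, $f'_\sigma(s)=-\tfrac{6\lmin}{s^3}\left(\sqrt{s-\lmin}-\sigma\sqrt{s}\right)^2<0$, which disposes of both signs at once. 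The two arguments rest on the same algebraic fact: expanding the paper's square for $\sigma=+1$ and restoring $\lmin=8Q^2/(9M^2)$ yields exactly your inequality $3\sqrt{sM^2\beta}<9M^2s-4Q^2$, whose squared difference $16Q^4$ expresses the strict positivity of that square. What the paper's packaging buys is uniformity (no case split, negativity manifest for both roots simultaneously); what yours buys is a clean separation of the trivial case ($\Lambda_-$ is a decreasing function minus an increasing one) from the single place where real work is needed ($\Lambda_+$). Your closing observation that the hypothesis $\lmin<1$ plays no role in the monotonicity computation itself, and only guarantees that the interval $(\lmin,1]$ on which the lemma is later applied is nonempty, is also accurate and consistent with how the paper uses the result.
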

\begin{proof}
  Let us use the sign $\sigma=\pm 1$
  to define
  $$
  f_\sigma(s):= \frac{9}{2}M^2 b^3 \Lambda_\sigma(s) +8
  $$
  on $s\in(\lmin,1]$.
  Clearly both $\Lambda_\sigma(s)$
  are  monotonically decreasing functions iff
  $f_\sigma(s)$ are. Using the substitution $8Q^2=9 M^2\lmin$
  on $\Lambda_\sigma(s)$, we explicitly obtain
  \[
    f_\sigma(s)=3\frac{\lmin}{s}\left(4-\frac{\lmin}{s}\right)
    +8 \sigma  \left(1-\frac{\lmin}{s}\right)^{3/2}.
\]
Then we have
\[
  f'_\sigma(s)=\frac{6\lmin}{s^3}
  \left(\lmin-2s+2\sigma(s-\lmin)\left(1-\frac{\lmin}{s}\right)^{-1/2}\right)=-\frac{6\lmin}{s^3}(\sqrt{s-\lmin}-\sigma\sqrt{s})^2,
\]
after using $s>\lmin$ in the second equality.
Therefore $f_\sigma'(s)<0$ in all the interval $(\lmin,1]$,
which proves the result.
  
\end{proof}

In the cases $C_1$ and $D_1$ the condition $8Q^2<9\lbar M^2$,
this is $\lmin<\lbar$, implies that
$8Q^2<9 M^2$, that is $\lmin<1$.  Therefore, the application of Lemma \ref{res:lemma1}
and  Remark \ref{remarkbeta+} for $s=1$
ensures that $\Lambda_+(1)$ exists and it is positive,
and that $r_H=r_0(M,Q,\Lambda,1)$ exists and it is a simple root if
(i) $\Lambda\in (\Lambda_-(1),\Lambda_+(1))\cup (0,\Lambda_+(1))$,
it is a double root if (ii) $\Lambda=\Lambda_-(1)$
or (iii) $\Lambda=\Lambda_+(1)$ (by Remark \ref{remark_rinf}),
and it does not exist otherwise.
Different cases will arise now depending on the existence
of intersections of the above values and ranges for $\Lambda$ and those
characterizing each case ($C_1$ and $D_1$).

In case $C_1$ we need to analyze when
the intersection of $(\Lambda_-(\lbar),\Lambda_+(\lbar))\cap (0,\Lambda_+(\lbar))$
with the above three, that is (i) $(\Lambda_-(1),\Lambda_+(1))\cap (0,\Lambda_+(1))
\cap  (\Lambda_-(\lbar),\Lambda_+(\lbar))\cap (0,\Lambda_+(\lbar))$,
(ii) $\{\Lambda_-(1)\}\cap  (\Lambda_-(\lbar),\Lambda_+(\lbar))\cap (0,\Lambda_+(\lbar))$,
and (iii) $\{\Lambda_+(1)\}\cap  (\Lambda_-(\lbar),\Lambda_+(\lbar))\cap (0,\Lambda_+(\lbar))$
are not empty. Since
$\Lambda_\pm(1)<\Lambda_\pm(\lbar)$, ensured by
Lemma \ref{lemma:min_s_1}, and $\Lambda_\pm(1)>0$, those intersections
equal (i) $(\Lambda_-(\lbar),\Lambda_+(1))\cap (0,\Lambda_+(1))$,
(ii) $\emptyset$, and (iii)  $\{\Lambda_+(1)\}\cap  (\Lambda_-(\lbar),\Lambda_+(\lbar))$,
respectively.
The emptiness of intersection (ii) accounts to the fact
that if $r_H$ degenerates, it cannot do so by merging with the lowest
positive root $r_I$, because, as shown in the text, $r_I$ is not in the
range of values of $r$ in the singularity-free cases.
In the cases (i) and (iii)  the intersection is not empty
iff $I(\lbar):=\Lambda_+(1)-\Lambda_-(\lbar)>0$.
By Lemma \ref{lemma:min_s_1}, $I(s)$ is monotonically increasing for $s>\lmin$,
and we have  $I(1)>0$ and $I(9b/8)>0$, and  it is straightforward to check
that $I(\lmin)<0$ for all $M>0$ and $\lmin\in(0,1)$.
As a result, there is a value $h(M,Q)$, and only one, such that $I(h)=0$,
and $r_H$ exists and it is a simple root of $P(r,s)$
iff $\lbar>h$ and $\Lambda\in(\Lambda_-(\lbar),\Lambda_+(1))\cap (0,\Lambda_+(1))$,
while $r_H$ is a double root (and it is the largest by Remark \ref{remark_rinf})
iff $\lbar>h$ and $\Lambda=\Lambda_+(1)$, and it does not exist otherwise.

In case $D_1$, for which $\Lambda=\Lambda_-(\lbar)>0$
[positivity of $\Lambda_-$
  needs $\lbar M^2<Q^2$ and thus $\lbar\in( \lmin,9\lmin/8)\cap (0,1)$],
the relevant intersections are thus
(i) $(\Lambda_-(1),\Lambda_+(1))\cap (0,\Lambda_+(1))\cap \{\Lambda_-(\lbar)\}$,
(ii) $\{\Lambda_-(1)\}\cap\{  \Lambda_-(\lbar)\}$,
and (iii) $\{\Lambda_+(1)\}\cap \{\Lambda_-(\lbar)\}$. Clearly (ii) is empty,
while the other two equal $\{\Lambda_-(\lbar)\}$ and are not empty (i) iff $I(\lbar)>0$
or (iii) iff $\Lambda_+(1)= \Lambda_-(\lbar)$, i.e.
$I(\lbar)=0$. In consequence,
$r_H$ exists in $D_1$, and it is a simple root iff $\lbar>h$,
it is a double root iff $\lbar=h$ (and it is the largest root),
and it does not exist otherwise. Observe that $I(9\lmin/8)>0$ ensures
that we can indeed have the three cases.

In case $C_2$ let us recall that the characterizing conditions are $\Lambda>0$,
$Q=0$ and $3\sqrt{\Lambda}M\in(0,\lbar^{-3/2})$.
Now, the application of Lemma \ref{res:lemma1} for $s=1$
ensures that $r_H=r_0(M,0,\Lambda,1)$ is a simple root of $P(r,1)$
iff $3\sqrt{\Lambda}M\in(0,1)$. Since
the possible range of $3\sqrt{\lambda}M$ can be divided as
$(0,\lbar^{-3/2})=(0,1)\cup [1,\lbar^{-3/2})$, we still have to control what happens when
$3\sqrt{\Lambda}M\in  [1,\lbar^{-3/2})$.
From the second part of Remark \ref{remark_rinf}, we obtain
that $r_H=r_0(M,0,\Lambda,1)$ is a double root when
$\Lambda=\Lambda_+|_{Q\to 0}=(9\lbar^3M^2)^{-1}$ and thus
$3\sqrt{\Lambda}M=1$, while no $r_H$ exists in the remaining interval $(1,\lbar^{-3/2})$.

This finishes the subclassification of the cases $C_1$,  $D_1$ and $C_2$
in terms of the horizons as given in the text.

\section{Near-horizon geometries as limits}\label{app.nhg_limits}
In this section we prove that, as in GR, the near-horizon geometries
correspond to one of the two possible limits
of the family of solutions with parameters $(M,Q,\Lambda,\lbar)$
to the values for which the horizon degenerates
(see Ref. \cite{Bengtsson2022} for a nice and pedagogical treatment).

We provide first the explicit system of equations that
characterize the extremal cases to check that those coincide,
as expected, with the equations that characterize the near-horizon geometries.
The system of equations needed for a double root of $G(r)$ in $r=r_d$ can be expressed as
$\{P(r_d,1)=0, P'(r_d,1)=0\}$.
It is direct to check that this system is equivalent to
\begin{equation}
  M=r_d\left(1-\frac{2}{3}\Lambda r_d^2\right),\quad \Lambda r_d^4-r_d^2+Q^2=0.
\end{equation}
These correspond, for $r_d=a$, to the two equations for
the constant $r=a$, together with $M$, $Q$ and $\Lambda$, that characterize
the near-horizon geometries,
\eqref{eq:a_main} and \eqref{eq.M.nhg}.

Next we show explicitly how the near-horizon geometries
of the extremal black holes
can be obtained by a limiting procedure from the family  $(M,Q,\Lambda,\lbar)$.

We start with the coordinates in which the metric reads
\eqref{metric:tau_x}, and perform the change $(\tau,z)\to (\zeta,Y)$,
given by
\[
  z=z_a+\varepsilon Y, \quad \tau=\frac{1}{\varepsilon}\zeta,
\]
for some parameter $\varepsilon$, and where $z_a$ satisfies $r(z_a)=a$,
with $a$ obeying the relation \eqref{eq:a_main}.
Expanding $r(z(Y))$ in $\varepsilon$, and using \eqref{eq.rp},
we obtain
\[
  r(z(Y))=r(z_a)+r'(z_a)(z-z_a)+O((z-z_a)^2)=a+\sigma\sqrt{1-\lbar} Y\varepsilon + O(\varepsilon^2),
\]
with $\sigma^2=1$. The procedure now entails
introducing the explicit form of $m$, as given by \eqref{eq.m(x)}
replacing $x$ by $r(z(Y))$, together with \eqref{eq.M.nhg}
and using \eqref{eq:a}, into \eqref{metric:tau_x}.
It is then straightforward to take the limit  $\varepsilon\to 0$
to obtain \eqref{app.met.nhgY} with $c_2=0$.

\section{Global properties around the horizons and
  critical values of $r$
}\label{app.tortoise}

With the aim at obtaining the building blocks to construct the Penrose diagrams,
in this Appendix we analyze the properties of the solution around
the horizons [zeros of $G(r)$] and the critical values of $r$ [zeros of $V(r)$].
To do that,
since the coordinates $(\tau,z)$ cover the horizons and the critical hypersurfaces
they may exist, 
we proceed to find suitable
null coordinates based on the radial null geodesic equations
in terms of $(\tau,z)$, that is, imposing $\gamma=0$ in \eqref{zdot_E}.
Observe that the analysis presented in this Appendix thus stands for all possible cases,
not only the singularity-free family of solutions.

For each value of $\signgeod$
we take one geodesic vector (up to orientation),
which we name
$k$ ($\signgeod=1$) and $l$ ($\signgeod=-1$), given explicitly by
\[
  l=\left(1+\sqrt{\frac{2m(r)}{r}}\right)^{-1}\partial_\tau-\partial_z,\qquad
  k=\left(1-\sqrt{\frac{2m(r)}{r}}\right)^{-1}\partial_\tau+\partial_z,
\]
at points outside the horizons.
The affine parametrization implies $d\mathbf{l}=0$ and $d\mathbf{k}=0$,
where we use the boldface to denote one-forms so that
$\mathbf{l}=l_\mu dx^\mu$ and $\mathbf{k}=k_\mu dx^\mu$.
We can thus define a pair of functions $(U,V)$ by
\begin{align}
  dU=-\mathbf{l}=d\tau+\left(1+\sqrt{\frac{2m(r(z))}{r(z)}}\right)^{-1}dz,\qquad
  dV=-\mathbf{k}=d\tau-\left(1-\sqrt{\frac{2m(r(z))}{r(z)}}\right)^{-1}dz,\label{eq.UV}
\end{align}
on any region in the domain of $(\tau,z)$ outside the horizons,
and perform, still, a further change $dU=A(u) du$ and $dV=B(v) dv$,
with arbitrary (smooth) and nowhere vanishing functions $A(u)$ and $B(v)$.
This second change usually amounts to obtaining the coordinates
$(u,v)$ of the Kruskal type, and with compact ranges if needed.
Note that the determinant of the Jacobian of the change
from $(\tau, z)$ to $(U,V)$
is $G(r)/2$. Therefore the change %
  to $(u,v)$ is well defined outside the horizons. Observe
    that the near-horizon geometries are necessarily left out from this treatment.
The metric in terms of the new null coordinates $(u,v)$ reads
\[
  ds^2=-%
  G(r) A(u) B(v) dudv+r^2d\Omega^2,
\]
where $r$ should be written in terms of the new variables $u$ and $v$.
The purpose now is to obtain that relation $r(u,v)$ near the horizons
and critical values of $r$,
that is, near the roots of $G(r)$ and $V(r)$.

To link the argument with the usual procedure
used in GR (see, e.g., Refs. \cite{Brill1994, notes_jose})
and also in \cite{Alonso-Bardaji:2022ear},
we start by defining the so-called tortoise function $r_*$
by $r_*:=\signe (r')(U-V)/2$, which, given \eqref{eq.UV}, is obtained from
\begin{equation}
  r_*(r)=\int \frac{1}{\sqrt{-2V(r)}}\frac{1}{\chif(r)}dr,
  \label{eq.r*}
\end{equation}
up to some additive constant.
The crucial point is that $r_*$ is a function of $r$ only,
and therefore a convenient choice of functions $A(u)$ and $B(v)$
can be used to obtain a relation of the form
\begin{equation}
  e^{2r_*(r(u,v))}=e^{\signe(r')(U(u)-V(v))}=e^{\signe(r')U(u)} e^{-\signe(r')V(v)}=(uv)^{2C},
  \label{eq:rstar_uv}
\end{equation}
where the first equality follows from the definition of $r_*$,
for any %
chosen constant $C$.
This relation
provides the form of the hypersurfaces of constant $r$ in terms of the pair $(u,v)$.
Since we are interested in regions around  the zeros of $G(r)=-r^{-\ell}P(r,1)$
and $2V(r)=r^{-\ell}P(r,\lbar)$ (recall, with $\ell=1,2$ for
$Q=0$ and $Q\neq 0$, respectively),
we write $D(r):=-\sqrt{-P(r,\lbar)}P(r,1)$, so that \eqref{eq.r*} takes the form
\begin{equation}
  r_*(r)=\int \frac{r^{3n/2}}{D(r)}dr.
  \label{eq.r*_H}
\end{equation}
Now one can look for the expression of this integral as an
expansion around the zeros of the denominator $D(r)$ for the different cases.
Let us take some positive constant $a>0$, such that $P(a,\lbar)\leq 0$.
Using the fact that the zeros of $P(r,1)$ and $P(r,\lbar)$ cannot coincide, cf. \ref{id.P},
the cases of study are then the following:
\begin{enumerate}
\item[a)] $r=a$ is a simple zero of $V(r)$,
which corresponds to a critical hypersurface $r=r_0$ or $r=\rmax$.
  Then we can use a Taylor expansion of $r^{3n/2}\sqrt{|r-a|}/D(r)$ around $r=a$
  (either at one side $r>a$, or the other $r<a$), so that
  $r^{3n/2}\sqrt{|r-a|}/D(r)=\sum_{i=0}^\infty{C_i (r-a)^i}$ with $C_0\neq 0$.
  In this way,
\[
  r_*=\int \frac{dr}{\sqrt{|r-a|}}\left(\sum_{i=0}^\infty{C_i (r-a)^i}\right)=\signe(r-a)\sqrt{|r-a|}h(r),
\]
where $h(r)$ is a function expandable around $r=a$ and $h(a)=2C_0$.
Then \eqref{eq:rstar_uv}, with the choice $C=1$, reads
\[
e^{\signe(r-a)\sqrt{|r-a|}h(r)}=uv.
\]
The hypersurface $r(u,v)=a$ in $(u,v)$ is thus given by the curve $uv=1$.
If we are expanding around $r=a$, with $r>a$, then that corresponds to $uv>1$,
while in the case $r<a$ the region is located where $uv<1$.

\item[b)] $r=a$ is either a double zero of $V(r)$ or a simple zero of $G(r)$,
which corresponds to a degenerate critical hypersurface $r=r_0=R$ or to a nondegenerate
horizon $r=r_H$ or $r=r_C$, respectively.
  Then around $r=a$ (at either side $r>a$ or $r<a$)
  we can write $r^{3n/2}(r-a)/D(r)=\sum_{i=0}^\infty{C_i (r-a)^i}$ with $C_0\neq 0$. Thus
\[
  r_*=\int \frac{dr}{r-a}\left(\sum_{i=0}^\infty{C_i (r-a)^i}\right)
  =C_0\log|r-a|+h(r),
\]
where $h(r)$ is a function expandable around $r=a$,
so that \eqref{eq:rstar_uv} takes the form
\[
  |r-a|e^{h(r)/C_0}=uv,
\]
around $r=a$, after choosing $C=C_0$.
Therefore, $r=a$ corresponds to  $uv=0$.

\item[c)] $r=a$ is a double zero of $G(r)$, which corresponds to
the degenerate horizon $r=r_H=r_C$. %
  Then around $r=a$
  we have $r^{3n/2}(r-a)^2/D(r)=\sum_{i=0}^\infty{C_i (r-a)^i}$ with $C_0\neq 0$, and therefore
\[
  r_*=\int \frac{dr}{(r-a)^2}\left(\sum_{i=0}^\infty{C_i (r-a)^i}\right)
  =-C_0\frac{1}{r-a}+C_1 \log|r-a|+h(r),
\]
where $h(r)$ is a function expandable around $r=a$.
Then, depending on the side we are interested in,
$r>a$ or $r<a$, we choose $C=\signe(r-a)C_0$ so that  \eqref{eq:rstar_uv} reads 
\[
 \exp{\left(\frac{-1}{|r-a|}\right)}|r-a|^{\signe(r-a)C_1/C_0}e^{\signe(r-a)h(r)/C_0}=uv.
\]
As a result $r=a$ corresponds in both cases to  $uv=0$.
\end{enumerate}

\bibliography{refs}

\end{document}